\DeclareMathOperator*{\argmax}{arg\,max}
\newcommand{\Z}{{\mathbb{Z}}}
\newcommand{\R}{\reals}
\newcommand{\eps}{\epsilon}
\DeclareMathOperator{\dist}{dist}
\newcommand{\aset}[1]{\{ #1 \}}
\newcommand{\aanote}[1]{{\color{red}$\ll$\textsf{#1 --Alex}$\gg$\marginpar{\tiny\bf AA}}}
\newcommand{\rnote}[1]{}
\providecommand{\abs}[1]{\lvert#1\rvert}
\providecommand{\card}[1]{\lvert#1\rvert}
\providecommand{\norm}[1]{\lVert#1\rVert}
\DeclareMathOperator{\spn}{span}
\DeclareMathOperator{\var}{Var}
\newcommand{\Fquery}{F_{\mathrm{query}}}
\newcommand{\Fspace}{F_{\mathrm{space}}}
\newcommand{\Fprep}{F_{\mathrm{prep}}}
\newcommand{\whp}{w.h.p.\xspace}
\newcommand{\transpose}{{\rm T}}
\newcommand{\tran}{\transpose}
\newcommand{\tX}{{\tilde X}}
\newcommand{\tP}{{\tilde P}}
\newcommand{\tU}{{\tilde U}}
\newcommand{\tV}{{\tilde V}}
\newcommand{\mU}{{\mathcal U}}
\def\compactify{\itemsep=0pt \topsep=0pt \partopsep=0pt \parsep=0pt}
\newcommand{\compress}[1]{} 
\title{Spectral Approaches to Nearest Neighbor Search}
\author{
Amirali Abdullah
\thanks{%
A large of this work was carried out and funded at Microsoft Research. This effort was also supported in part by the NSF 
under grant \#CCF-0953066
Email: \texttt{amirali@cs.utah.edu} }
\\
University of Utah
\and
Alexandr Andoni\\
Microsoft Research
\and
Ravindran Kannan\\
Microsoft Research
\and
Robert Krauthgamer\thanks{%
Work done in part while at Microsoft Silicon Valley. 
Work supported in part by the Israel Science Foundation grant \#897/13, 
the US-Israel BSF grant \#2010418, and by the Citi Foundation.
Email: \texttt{robert.krauthgamer@weizmann.ac.il} }
\\
Weizmann Institute
}
\begin{document}

\maketitle


\begin{abstract}
We study spectral algorithms for the high-dimensional Nearest Neighbor
Search problem (NNS).  In particular, we consider a semi-random
setting where a dataset $P$ in $\R^d$ is chosen arbitrarily from an
unknown subspace of low dimension $k\ll d$, and then perturbed by 
fully $d$-dimensional Gaussian noise.  We design spectral NNS
algorithms whose query time depends polynomially on $d$ and $\log n$
(where $n=|P|$) for large ranges of $k$, $d$ and $n$.  Our algorithms
use a repeated computation of the top PCA vector/subspace, and are
effective even when the random-noise magnitude is {\em much larger}
than the interpoint distances in $P$.  Our motivation is that in
practice, a number of spectral NNS algorithms outperform the
random-projection methods that seem otherwise theoretically optimal on worst case datasets. In this paper we aim to provide theoretical justification for this disparity.

\end{abstract}

\newpage
\section{Introduction}

A fundamental tool in high-dimensional computational geometry is the
random projection method.  Most notably, the Johnson-Lindenstrass
Lemma \cite{JL} says that projecting onto a uniformly random
$k$-dimensional subspace of $\R^d$ approximately preserves the
distance between any (fixed) points $x,y\in \R^d$ (up to scaling),
except with probability exponentially small in $k$.  This turns out to
be a very powerful approach as it effectively reduces the dimension
from $d$ to a usually much smaller $k$ via a computationally cheap
projection, and as such has had a tremendous impact on algorithmic
questions in high-dimensional geometry.

A classical application of random projections is to the
high-dimensional Nearest Neighbor Search (NNS) problem. Here we are
given a dataset of $n$ points from $\R^d$, which we preprocess to
subsequently find, given a query point $q\in \R^d$, its closest point
from the dataset.  It is now well-known that exact NNS admits
algorithms whose running times have good dependence on $n$, but
exponential dependence on the dimension $d$ \cite{Mei93, Cl1}; however
these are unsatisfactory for moderately large $d$.

To deal with this ``curse of dimensionality'',
researchers have studied algorithms for \emph{approximate} NNS,
and indeed in the high-dimensional regime, many, if not all,
of these algorithms rely heavily on the random projection method.
Consider the case of Locality-Sensitive
Hashing (LSH), introduced in \cite{IM}, which has been a theoretically
and practically successful approach to NNS. All known variants of LSH
for the Euclidean space, including \cite{IM, DIIM, AI, AINR-subLSH},
involve random projections.%
\footnote{While \cite{IM} is designed for the Hamming space, their
  algorithm is extended to the Euclidean space by an embedding of
  $\ell_2$ into $\ell_1$, which itself uses random projections
  \cite{JS}.  } For example, the space partitioning algorithm of
\cite{DIIM} can be viewed as follows. Project the dataset onto a
random $k$-dimensional subspace, and impose a randomly-shifted uniform
grid. Then, to locate the near(est) neighbor of a point $q$, look up
the points in the grid cell where $q$ falls into. Usually, this space
partitioning is repeated a few times to amplify the probability of
success (see also \cite{Pan}).

While random projections work well and have provable guarantees, it is
natural to ask whether one can improve the performance by
replacing ``random'' projections with ``best'' projections.
Can one optimize the projection to use --- and the space partitioning more generally --- as a {\em function of the dataset at hand}?
For example, in some tasks requiring
dimension reduction, practitioners often rely on Principal
Component Analysis (PCA) and its variants.
Indeed, in practice, this consideration led to numerous successful
heuristics such as PCA tree \cite{Sproull-pcaTree, mcnames2001fast,
  verma2009spatial} and its variants (called randomized kd-tree)
\cite{silpa2008optimised, muja2009fast}, spectral hashing
\cite{weiss2008spectral}, semantic hashing
\cite{salakhutdinov2009semantic}, and WTA hashing
\cite{yagnik2011power}, to name just a few. Oftentimes, these
heuristics outperform algorithms based on vanilla random
projections. All of them adapt to the dataset, including many that
perform some {\em spectral} decomposition of the dataset. However, in
contrast to the random projection method, none of these methods have
rigorous correctness or performance guarantees.

Bridging the gap between random projections and data-aware projections
has been recognized as a big open question in Massive Data Analysis,
see e.g.\ a recent National Research Council report
\cite[Section~5]{NRC-report}.  The challenge here is that random
projections are themselves (theoretically) optimal not only for
dimensionality reduction \cite{Alon, jw11-JL}, but also for some of
its algorithmic applications \cite{IW03, W-optimalSpace}, including
NNS in certain settings \cite{AIP}. We are aware of only one work
addressing this question: data-dependent LSH, which was introduced
recently \cite{AINR-subLSH}, provably improves the query time
polynomially.  However, their \emph{space partitions} are very
different from the aforementioned practical heuristics (e.g., they are
not spectral-based), and do not explain why data-aware {\em
  projections} help at all.

In this work, we address this gap by studying data-aware projections for the nearest neighbor search problem. As argued above, for worst-case inputs we are unlikely to beat the
performance of random projections, and thus it seems justified to
revert to the framework of smoothed analysis \cite{ST09} to study the gap in practical performance.  We 
consider a semi-random model, where the dataset is formed by first
taking an arbitrary (worst-case) set of $n$ points in a
$k$-dimensional subspace of $\R^d$, and then perturbing each point by
adding to it Gaussian noise $N_d(0,\sigma^2 I_d)$.  The query point is
selected using a similar process.  Our algorithms are able to find the
query's nearest neighbor as long as there is a small gap ($1$ vs
$1+\eps$) in the distance to the nearest neighbor versus other points
in the {\em unperturbed} space --- this is a much weaker assumption
than assuming the same for the perturbed points. 

Most importantly, our results hold even when the noise magnitude is
{\em much larger} than the distance to the nearest neighbor.  
The noise vector has length (about) $\sigma\sqrt d$, 
and so for $\sigma \gg 1/\sqrt d$, the noise magnitude exceeds the original
distances. In such a case, a random Johnson-Lindenstrauss projection to
a smaller dimension will not work --- the error due to the projection will
lose all the information on the nearest neighbor.


In contrast, our results show that data-aware projections provably guarantee good performance in this model.
We describe the precise model in Section~\ref{sec:model}.


\subsection{Algorithmic Results}

We propose two spectral algorithms for nearest neighbor search, which
achieve essentially the same performance as NNS algorithms in $k$ and
$O(k\log k)$-dimensional space, respectively.  These spectral
algorithms rely on computing a PCA subspace or vector respectively ---
the span of the singular vectors corresponding to the top singular
values of an $n\times d$ matrix representing some $n$ points in
$\R^d$. Our algorithms are inspired by PCA-based methods that are
commonly used in practice for high-dimensional NNS, and we believe
that our rigorous analysis may help explain (or direct) those
empirical successes.
We defer the precise statements to the respective technical sections
(specifically, Theorems~\ref{thm:mainfour} and \ref{thm:pcaTree}),
focusing here on a {\em qualitative} description.

The first algorithm performs {\em iterative PCA}.  Namely, it employs
PCA to extract a subspace of dimension (at most) $k$, identifies the
points captured well by this subspace, and then repeats iteratively on
the remaining points.  
The algorithm performs at most $O(\sqrt{d\log n})$ PCAs in total, 
and effectively reduces the original NNS problem to $O(\sqrt{d\log n})$ 
instances of NNS in $k$ dimensions.  Each of these NNS instances
can be solved by any standard low-dimensional $(1+\eps)$-approximate
NNS, such as \cite{Cl2, AMNSW, arya2009space, HP, AMal, CG06}, which
can give, say, query time $(1/\eps)^{O(k)}\log^2n$. See
Section~\ref{sec:largeGaussian}, and the crucial technical tool it uses
in Section \ref{sec:sin-theta}. As a warmup, we initially introduce a
simplified version of the algorithm for a (much) simpler model in
Section~\ref{sec:bounded}.

The second algorithm is a variant of the aforementioned PCA tree,
and constructs a tree that represents a recursive space partition.
Each tree node corresponds to finding the top PCA direction,
and partitioning the dataset into slabs perpendicular to this direction.
We recurse on each slab until the tree reaches depth $2k$.
The query algorithm
searches the tree by following a small number of children (slabs) at
each node. This algorithm also requires an additional preprocessing
step that ensures that the dataset is not overly ``clumped''.  The overall
query time is $(k/\eps)^{O(k)}\cdot d^2$. 
See Section \ref{sec:pcaTree}.

While the first algorithm is randomized, the second algorithm is
deterministic and its failure probability comes only from the
semi-random model (randomness in the instance).

\subsection{Related Work}

There has been work on understanding how various tree data structures
adapt to a \emph{low-dimensional} pointset, including
\cite{dasgupta2008random, verma2009spatial}.
For example, \cite{verma2009spatial} show that PCA trees adapt to a form of
``local covariance dimension'', a spectrally-inspired notion of
dimension, in the sense that a PCA tree halves the ``diameter'' of the
pointset after a number of levels dependent on this dimension
notion (as opposed to the ambient dimension $d$).  Our work differs in
a few respects.  First, our datasets do not have a small local
covariance dimension.  Second, our algorithms have guarantees of
performance and correctness for NNS for a worst-case query point
(e.g., the true nearest neighbor can be any dataset point).  In
contrast, \cite{verma2009spatial} prove a guarantee on diameter
progress, which does not necessarily imply performance guarantees for
NNS, and, in fact, may only hold for average query point (e.g., when
the true nearest neighbor is random).  Indeed, for algorithms
considered in \cite{verma2009spatial}, it is easy to exhibit
cases where NNS fails.%
\footnote{For example, if we consider the top PCA direction of a
  dataset and the median threshold, we can plant a query--near-neighbor
  pair on the two sides of the partition. Then, this pair, which won't
  affect top PCA direction much, will be separated in the PCA tree
  right from the beginning.
}

For our model, it is tempting to design NNS algorithms that find the
original $k$-dimensional subspace and thus ``de-noise'' the dataset by
projecting the data onto it. This approach would require us to solve the
$\ell_\infty$-regression problem with high precision.%
\footnote{As we explain later, related problems,
such as $\ell_2$-regression would not be sufficient.
}
Unfortunately, this problem is NP-hard in general \cite{nphardcompute},
and the known algorithms are quite expensive, unless $k$ is constant.
Har-Peled and Varadarajan  \cite{harhigh}
present an algorithm for $\ell_\infty$-regression achieving
$(1+\eps)$-approximation in time $O(nd e^{e^{O(k^2)}\eps^{-2k-3}})$,
which may be prohibitive when $k \geq \Omega(\sqrt{\log \log n})$.
In fact, there is a constant $\delta>0$ such that it is Quasi-NP-hard,
i.e., implies $\mathrm{NP} \subseteq \mathrm{DTIME}(2^{(\log n)^{O(1)}})$,
to even find a $(\log n)^{\delta}$ approximation to the best fitting
$k$-subspace when $k \geq d^{\epsilon}$ for any fixed $\epsilon>0$
\cite{lowerboundfitting}.

We also note that the problem of finding the underlying
$k$-dimensional space is somewhat reminiscent of the learning mixtures
of Gaussians problem \cite{Dasgupta99}; see also \cite{AK,
  vempala2004spectral, moitra2010settling} and references therein.  In
the latter problem, the input is $n$ samples generated from a mixture
of $k$ Gaussian distributions with unknown mean (called centers),
and the goal is to identify these centers (the means of the $k$ Gaussians).
Our setting can be viewed as having $n$ centers in a $k$-dimensional subspace
of $\R^d$, and the input contains exactly one sample from each of the $n$
centers.
Methods known from learning mixtures of Gaussians rely crucially on
obtaining multiple samples from the same center (Gaussian),
and thus do not seem applicable here.

Finally, the problem of nearest neighbor search in Euclidean settings
with ``effective low-dimension'' has received a lot of attention in the
literature, including \cite{KRu,KLee,HM05,BKL06,CG06,IN07} among many
others. Particularly related is also the work \cite{HPK13-lowd}, where
the authors consider the case when the dataset is high-dimensional,
but the query comes from a (predetermined) low-dimensional
subspace. These results do not seem readily applicable in our
setting because our dataset is really high-dimensional,
say in the sense that the doubling dimension is $\Omega(d)$.

\subsection{Techniques and Ideas}

We now discuss the main technical ideas behind our two algorithms. First,
we explain why some natural ideas do {\em not} work. The very first
intuitive line of attack to our problem is to compute a
$k$-dimensional PCA of the pointset, project it into this
$k$-dimensional subspace, and solve the $k$-dimensional NNS problem
there. This approach fails because the noise is too large, and PCA
only optimizes {\em the sum of distances} (i.e., an average quantity,
as opposed to the ``worst case'' quantity). In particular, suppose
most of the points lie along some direction $\vec{u}$ and only a few
points lie in the remaining dimensions of our original subspace $U$
(which we call sparse directions). Then, the $k$-PCA of the dataset
will return a top singular vector close to $\vec{u}$, but the
remaining singular vectors will be mostly determined by random
noise. In particular, the points with high component along sparse
directions may be very far away from the subspace returned by our PCA,
and hence ``poorly-captured'' by the PCA space. Then, the NNS data
structure on the projected points will fail for some query
points. If the difference between a query point $q$ and its
nearest neighbor $p^*$ is along $\vec{u}$, whereas the difference
between $q$ and a poorly-captured point $p'$ is along the sparse
directions, then any such $p'$ will cluster around $q$ in the
$k$-PCA space, at a distance much closer than $\|q-p^*\|$. See Figure \ref{fig:poorC}.

\begin{figure}[H]
  \begin{center}
    \includegraphics[scale = 0.5]{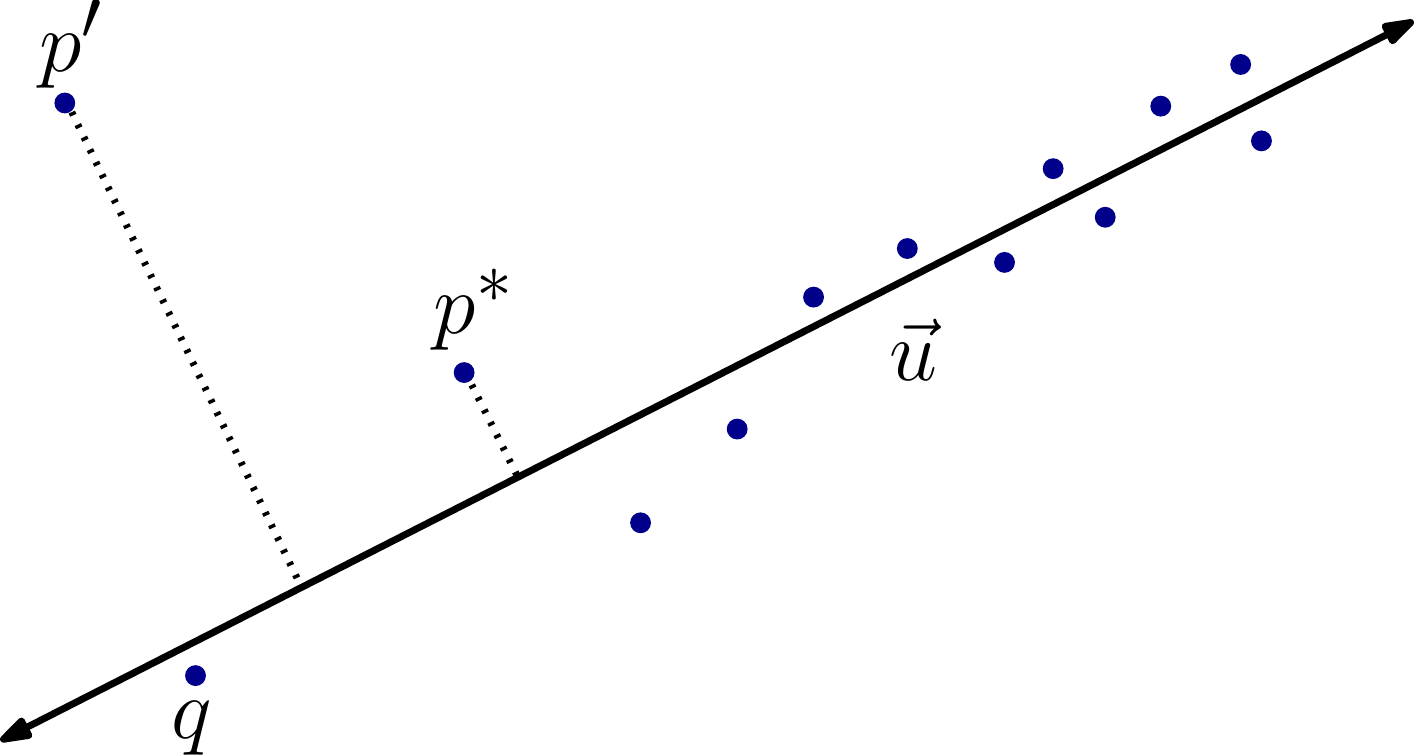}
  \end{center}
  \caption{$p'$ lies along a poorly captured direction.}
  \label{fig:poorC}
\end{figure}

Our first algorithm instead runs $k$-PCAs {\em iteratively}, while
pruning away points ``well-captured'' by the PCA (i.e., close to the
PCA subspace). In particular, this allows us to discover the points in
sparse directions in {\em later} iterations.  Furthermore, to ensure
correctness of nearest neighbor queries in presense of large noise, we
do not necessarily take all the top $k$ singular values, but only
those that exceed some threshold value; this guarantees that all
directions in our PCA subspace have a large component inside
$U$. Showing this guarantee analytically is non-trivial, starting even with 
the definition of what it means to be ``close'' for two spaces, which
may have different dimensions.  For this purpose, we employ the
so-called $\sin\theta$ machinery, which was developed by Davis and
Kahan \cite{DK70} and by Wedin \cite{wedin}, to bound the
perturbations of the singular \emph{vectors} of a matrix in presence
of noise.  Notice the difference from the more usual theory of
perturbations of the singular \emph{values}.  For example, in
contrast to singular values, it is not true that the top singular
vector is ``stable'' when we perturb a given matrix.

The actual algorithm has one more important aspect: in each iteration,
the PCA space is computed on a {\em sample} of the (surviving) data
points. This modification allows us to control spurious conditioning
induced by earlier iterations.  In particular, if instead we compute
the PCA of the full data, once we argue that a vector $\tilde p$
``behaves nicely'' in one iteration, we might effectively condition on
the direction of its noise, potentially jeopardizing noise concentration bounds on later
iterations. (While we do not know if sampling is really necessary for
the algorithm to work, we note that practically it is a very
reasonable idea to speed up preprocessing nonetheless.)

The second algorithm is based on the PCA-tree, which partitions the
space recursively, according to the top PCA direction. This can be
seen as another (extreme) form of ``iterative PCA''. At each node, the
algorithm extracts one top PCA direction, which always contains the
``maximum'' information about the dataset. Then it partitions the dataset
into a few slabs along this direction, thereby partitioning the datasets into smaller parts
``as quickly as possible''. This allows the tree to narrow down on the
sparse directions quicker. The performance of the PCA tree depends
exponentially on its depth, hence the crux of the argument is to bound
the depth. While it seems plausibly easy to show that a partitioning
direction should never be repeated, this would give too loose a bound,
as there could be a total of $\approx \exp(k)$ essentially distinct
directions in a $k$-dimensional space. Instead, we perform a mild form
of orthonormalization as we progress down the tree, to ensure only
$O(k)$ directions are used in total. In the end, the query time 
is roughly $k^{O(k)}$, i.e., equivalent to a NNS in an $O(k\log
k)$-dimensional space.

We note that this algorithm has two interesting aspects. First, one
has to use {\em centered} PCA, i.e., PCA on the data centered at zero:
otherwise, every small error in PCA direction may move points a lot
for subsequent iterations, misleading a non-centered PCA. Second, from
time to time, we need to do ``de-clumping'' of the data, which
essentially means that the data is sparsified if the points are too
close to each other. This operation also appears necessary; otherwise,
a cluster of points that are close in the original space, might
mislead the PCA due to their noise components. Furthermore, in contrast to
the first algorithm, we cannot afford to iterate through $\approx d$
iterations to eliminate ``bad'' directions one by one.



\section{The Model}
\label{sec:model}

We assume throughout the dataset is generated as follows.%
\footnote{An exception is the warm-up Section~\ref{sec:bounded}, where the noise is small adversarial.}
Let $U$ be a $k$-dimensional subspace of $\R^d$. 
Let $P=\aset{p_1,\ldots,p_n}$ be a set of $n$ points 
all living in $U$ and having at least unit norm,
and let $q \in U$ be a query point. 
We assume that $d=\Omega(\log n)$. The point set $P$ is perturbed to
create $\tilde P=\aset{\tilde p_1,\ldots,\tilde p_n}$ by adding to
each point independent Gaussian noise, and the query point $q$ is
perturbed similarly.  Formally,
\begin{align} 
  \label{eq:tildeP}
  \tilde p_i & = p_i + t_i 
  \, \text{ where } t_i \sim N_d(0,\sigma I_d),
  \qquad \forall p_i\in P, 
  \\
  \label{eq:tildeq}
  \tilde q & = q + t_q
  \; \text{ where } t_q \sim N_d(0,\sigma I_d). 
\end{align}

Let us denote the nearest neighbor to $q$ in $P$ by $p^*$ and let $\tilde p^*$ be its perturbed version.
We shall actually consider the {\em near-neighbor} problem,
by assuming that in the unperturbed space, there is one point $p^*\in P$ 
within distance $1$ from the query, and all other points are at distance at least $1+\eps$ from the query, for some known $0<\eps<1$. Formally,
\begin{equation} \label{eq:pstar}
  \exists p^*\in P \text{ such that }
  \norm{q-p^*}\leq 1 \text { and }
  \forall p\in P\setminus\{p^*\},\ 
  \norm{q-p} \geq 1+\eps. 
\end{equation}

We note that even if there is more than one such point $p^*$ so that $\norm{q - p^*} \leq 1$, our algorithms will return one of these close $p^*$ correctly. Also our analysis in Section \ref{sec:largeGaussian} extends trivially to show that for any $x$ such that $x \geq 1$ and $\|q-p^*\| = x$, our first algorithm the iterative PCA  actually returns a $(1+\eps)$-approximate nearest neighbor to $q$. We omit the details of this extended case for ease of exposition.

\subsection{Preliminary Observations}

For the problem to be interesting, we need that the perturbation does
not change the nearest neighbor, i.e., $\tilde p^*$ remains the
closest point to $\tilde q$.  We indeed show this is the case as long
as $\sigma\ll \eps / \sqrt[4]{d\log n}$.  Notice that the total noise
magnitude is roughly $\sigma\sqrt{d}$, which can be much larger than
$1$ (the original distance to the nearest neighbor).  Hence after the
noise is added, the ratio of the distance to the nearest neighbor and
to other (nearby) points becomes very close to $1$. This is the main
difficulty of the problem, as, for example, it is the case where
random dimensionality reduction would lose nearest neighbor
information. 
We recommend to keep in mind the following parameter settings:
$k=20$ and $\eps=0.1$ are constants,
while $d=\log^3n$ and $\sigma=\Theta(1/\log n)$ depend 
asymptotically on $n=\card{P}$. 
In this case, for example, our algorithms actually withstand noise
of magnitude $\Theta(\sqrt{\log n})\gg 1$.

Here and in the rest of the paper, we will repeatedly employ
concentration bounds for Gaussian noise, expressed as tail
inequalities on $\chi^2$ distribution.  We state here for reference
bounds from \cite{LM98}, where $\chi^2_d$ is the same distribution as
$\norm{N_d(0,I_d)}_2^2$.  The term \emph{with high probability (\whp)}
will mean that probability $1-n^{-C}$ for sufficiently large $C>0$.

\begin{theorem}(\cite{LM98})\label{thm:XiSquare}  
Let $X \sim \chi^2_{d}$. For all $x \geq 0$,
$$\Pr \left[X \geq d \left( 1+2 \sqrt{\tfrac{x}{d}} \right) + x \right] \leq e^{-x},
 \quad\mbox{ and }\quad
 \Pr \left[X \leq d \left(1 -2\sqrt{\tfrac{x}{d}} \right) \right] \leq e^{-x}.$$
\end{theorem}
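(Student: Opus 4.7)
The plan is to derive both tail bounds via the standard Chernoff method applied to the moment generating function of $\chi^2_d$. Writing $X=\sum_{i=1}^{d} Z_i^2$ for i.i.d.\ standard Gaussians $Z_i$, a direct Gaussian integral yields $\EX[e^{\lambda X}]=(1-2\lambda)^{-d/2}$ for $\lambda<1/2$ and $\EX[e^{-\lambda X}]=(1+2\lambda)^{-d/2}$ for $\lambda>0$, using independence to raise the one-dimensional MGF to the $d$-th power.

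For the upper tail, Markov applied to $e^{\lambda X}$ gives
$$\PR[X\geq t]\ \leq\ e^{-\lambda t}(1-2\lambda)^{-d/2}\qquad\text{for every }0<\lambda<1/2.$$
Setting $t=d+2\sqrt{dx}+2x$, the goal becomes to exhibit a $\lambda$ for which $\lambda t+\tfrac{d}{2}\ln(1-2\lambda)^{-1}\geq x$. The exact first-order optimum is $\lambda^\ast=(t-d)/(2t)$, but it leads to a transcendental expression; the cleaner route is to pick a near-optimal closed-form $\lambda$ (for instance $\lambda=\tfrac{1}{2}(1-\tfrac{1}{1+2\sqrt{x/d}+2x/d})$, or equivalent) and verify the inequality using the elementary bound $-\ln(1-2\lambda)\leq 2\lambda+\tfrac{2\lambda^2}{1-2\lambda}$. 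The lower tail is handled symmetrically: from $\EX[e^{-\lambda X}]=(1+2\lambda)^{-d/2}$ and Markov on $e^{-\lambda X}$ we get $\PR[X\leq t]\leq e^{\lambda t}(1+2\lambda)^{-d/2}$, and for $t=d-2\sqrt{dx}$ the inequality $\ln(1+u)\geq u-u^2/2$ suffices to collapse the bound to $e^{-x}$.

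The genuine obstacle is the algebraic calibration that makes the exponent simplify exactly to $-x$ rather than to a messier expression in $\sqrt{x/d}$. Naive optimization produces a bound of the form $\tfrac{d}{2}\bigl(2\sqrt{x/d}+2x/d-\ln(1+2\sqrt{x/d}+2x/d)\bigr)$, which is indeed $\geq x$ but only after carefully expanding the logarithm to second order and absorbing the cross-term $\sqrt{dx}$; this is precisely the delicate step in Laurent--Massart's original derivation, and I would expect to follow their template rather than reinvent the manipulation. Aside from this calibration, every other step is routine MGF computation and convexity.
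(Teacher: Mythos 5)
The paper does not prove this result; it simply cites Laurent and Massart \cite{LM98}, so there is no internal proof to compare against. Your proposed route --- Chernoff bound via the explicit $\chi^2$ moment generating function $(1-2\lambda)^{-d/2}$, together with the elementary calibration $-\ln(1-u) - u \le \tfrac{u^2}{2(1-u)}$ for the upper tail and $\ln(1+u) \ge u - u^2/2$ for the lower tail --- is precisely the argument in Laurent--Massart, and it is correct.

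One substantive issue you should not leave implicit: you set the upper-tail threshold to $t = d + 2\sqrt{dx} + 2x$, but the theorem as stated in the paper has $d\bigl(1+2\sqrt{x/d}\bigr) + x = d + 2\sqrt{dx} + x$, i.e.\ $+x$ rather than $+2x$. These are not the same claim, and the paper's version (with $+x$) is in fact false for $x$ large compared to $d$: with $d=1$ and $x=100$ one has $t = 121$ and $\Pr[\chi^2_1 \ge 121] = 2\,\Pr[Z\ge 11] \approx 0.07\, e^{-60.5}$, which is astronomically larger than $e^{-100}$. The genuine Laurent--Massart bound has the $+2x$ you use, and your Chernoff computation can only close with $+2x$; notice that for $t = d + 2\sqrt{dx} + x$ and $x\gg d$ the optimal exponent is $-\tfrac{t-d}{2} + \tfrac{d}{2}\ln(t/d) \sim -x/2$, not $-x$. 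So what you are actually proving is the correct \cite{LM98} inequality, and the statement in the paper contains a typo ($x$ for $2x$). That typo is harmless for the paper's own use (Corollary~\ref{cor:projec} takes $x = 4\log n$ with $d = \Omega(\log n)$, a regime where both forms hold), but your proof should explicitly note that it establishes the $+2x$ version and that the written $+x$ cannot be proved as stated. Aside from this, the plan is sound and complete modulo the algebraic bookkeeping you correctly identify as the only delicate step.
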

\begin{corollary}\label{cor:projec}
For $n\ge 1$,  let $X \sim \chi^2_{d}$.
Then $\Pr[|X-d| \geq 
d+  4 \sqrt{d \log n} + 4 \log n] \leq \tfrac{2}{n^4}$.
\end{corollary}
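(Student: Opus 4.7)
The plan is to derive the corollary directly from Theorem~\ref{thm:XiSquare} by plugging in $x = 4\log n$ and applying a union bound over the two tails.

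First I would instantiate the upper-tail bound in Theorem~\ref{thm:XiSquare} with $x = 4\log n$. The threshold there becomes
\[
  d\bigl(1 + 2\sqrt{x/d}\bigr) + x \;=\; d + 2\sqrt{4\,d\log n} + 4\log n \;=\; d + 4\sqrt{d\log n} + 4\log n,
\]
which yields $\Pr[X \geq d + 4\sqrt{d\log n} + 4\log n] \leq e^{-4\log n} = n^{-4}$. Since the event $|X - d| \geq d + 4\sqrt{d\log n} + 4\log n$ on the positive side requires $X \geq 2d + 4\sqrt{d\log n} + 4\log n$, which is a stronger condition, its probability is at most $n^{-4}$ as well.

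Next I would handle the lower tail analogously with the same choice $x = 4\log n$, giving $\Pr[X \leq d - 4\sqrt{d\log n}] \leq n^{-4}$. Note that the event $d - X \geq d + 4\sqrt{d\log n} + 4\log n$ would force $X < 0$, so it is actually vacuous; nevertheless, keeping the symmetric lower-tail estimate and taking a union bound over both sides produces the claimed $\tfrac{2}{n^4}$ bound. No subtlety arises beyond selecting the correct value of $x$ so that the two terms $2\sqrt{dx}$ and $x$ in Theorem~\ref{thm:XiSquare} match $4\sqrt{d\log n}$ and $4\log n$, respectively; there is no real obstacle, as this is a routine specialization of the tail bound.
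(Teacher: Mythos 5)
Your proof is correct and follows the natural specialization of Theorem~\ref{thm:XiSquare} at $x = 4\log n$ together with a union bound over the two tails; since the paper states the corollary without a separate proof, this is also the intended argument. One observation worth recording: as you yourself note, the event as written is extremely loose --- the lower-tail side is vacuous (it forces $X<0$) and the upper-tail side sits a full $d$ beyond the threshold that the choice $x=4\log n$ actually produces. The identical calculation in fact gives the tighter statement
\[
\Pr\bigl[\,\lvert X - d\rvert \ge 4\sqrt{d\log n} + 4\log n\,\bigr] \le \tfrac{2}{n^4},
\]
and the leading ``$d+{}$'' in the corollary as stated is most plausibly a transcription slip from the expression $d(1+2\sqrt{x/d})+x$ in Theorem~\ref{thm:XiSquare}. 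This tighter form is also what is actually needed downstream: for instance, the deduction in Section~\ref{sec:largeGaussian} that $\lvert\lVert t_i\rVert^2 - \sigma^2 d\rvert \le 0.0001\eps^2$ w.h.p.\ requires $\lvert Y-d\rvert \lesssim \sqrt{d\log n}$ for $Y\sim\chi^2_d$, which the model parameter $\sigma \sqrt[4]{d\log n} \ll \eps$ can control, whereas the literal corollary would only bound $\lvert Y-d\rvert$ by roughly $2d$, which is far too weak.
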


We now show that after the perturbation of $P,q$, the nearest neighbor
of $\tilde q$ will remain $\tilde p^*$, \whp

\begin{lemma} \label{lem:NNSstays}
Consider the above model \eqref{eq:tildeP}-\eqref{eq:pstar}
for $n>1$, $\eps\in(0,1)$, dimensions $k<d = \Omega(\log n)$, 
and noise standard deviation $\sigma \leq c\eps/\sqrt[4]{d\log n}$,
where $c>0$ is a sufficiently small constant.  
Then \whp the nearest neighbor of $\tilde q$ (in $\tilde P$) is $\tilde p^*$.
\end{lemma}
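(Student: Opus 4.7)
The plan is to expand each squared distance $\|\tilde q-\tilde p\|^2$ as
\[
\|\tilde q-\tilde p\|^2 = \|q-p\|^2 + 2\iprod{q-p}{t_q-t_p} + \|t_q-t_p\|^2,
\]
and to show that the last two ``noise'' terms affect the pairwise comparison between $\|\tilde q-\tilde p^*\|^2$ and $\|\tilde q-\tilde p\|^2$ by far less than the original gap, which is at least $(1+\eps)^2-1\ge 2\eps$. The crux is that the bulk of $\|t_q-t_p\|^2$, namely its expectation $2\sigma^2 d$, is essentially the same for every $p$ and therefore cancels when we subtract one squared distance from another; only the \emph{fluctuation} of this $\chi^2$-like term and the cross term can move the ordering. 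This is the whole point of the smoothed setting: although $\sigma\sqrt d\gg 1$ so the absolute perturbation of distances is enormous, the \emph{difference} of distances is only perturbed by a smaller quantity.

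Concretely, I would first establish the following events simultaneously for all $p\in P$ via union bounds. Since $t_q-t_p\sim N_d(0,2\sigma^2 I_d)$, the variable $\|t_q-t_p\|^2/(2\sigma^2)$ follows a $\chi^2_d$ law, and Theorem~\ref{thm:XiSquare} / Corollary~\ref{cor:projec} applied with deviation parameter $\Theta(\log n)$ yields
\[
\bigl|\,\|t_q-t_p\|^2 - 2\sigma^2 d\,\bigr| \;\le\; O\!\bigl(\sigma^2\sqrt{d\log n}\bigr),
\qquad\text{for every }p\in P,
\]
with failure probability at most $n\cdot n^{-\Omega(1)}$. For the cross term, $\iprod{q-p}{t_q-t_p}$ is a one-dimensional Gaussian with variance $2\sigma^2\|q-p\|^2$, so standard Gaussian tails give
\[
\bigl|\iprod{q-p}{t_q-t_p}\bigr| \;\le\; O\!\bigl(\sigma\|q-p\|\sqrt{\log n}\bigr)
\qquad\text{for every }p\in P,
\]
again by a union bound over the $n$ points (together with $p^*$).

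Conditioning on these events, I would then subtract and write, for any $p\ne p^*$,
\[
\|\tilde q-\tilde p\|^2 - \|\tilde q-\tilde p^*\|^2
\;\ge\; \bigl(\|q-p\|^2-\|q-p^*\|^2\bigr) \;-\; O\!\bigl(\sigma\sqrt{\log n}\bigr)\bigl(\|q-p\|+1\bigr) \;-\; O\!\bigl(\sigma^2\sqrt{d\log n}\bigr),
\]
where the common $2\sigma^2 d$ contribution has canceled out. Plugging in the hypothesized bound $\sigma\le c\eps/\sqrt[4]{d\log n}$ and using $d=\Omega(\log n)$ one checks that $\sigma\sqrt{\log n}=O(c\eps)$ and $\sigma^2\sqrt{d\log n}=O(c^2\eps^2)$. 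I would then split into two easy cases: if $\|q-p\|\le 2$, the $\|q-p\|^2-1\ge 2\eps$ gap dominates all $O(c\eps)$ noise terms once $c$ is a small enough absolute constant; if $\|q-p\|>2$, the quadratic $\|q-p\|^2-1$ easily dominates the linear-in-$\|q-p\|$ cross-term contribution. Either way the displayed quantity is strictly positive, proving $\tilde p^*$ remains the unique nearest neighbor.

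The main obstacle is really conceptual rather than computational: one might be tempted to show $\|\tilde q-\tilde p\|$ stays close to $\|q-p\|$ and then use a triangle-inequality argument, but this fails, since $\sigma\sqrt d$ can be arbitrarily larger than $1$. The proof must exploit cancellation of the common-mode $\chi^2$-mass across different points, which is why we compare \emph{squared} distances and carefully track only the $O(\sigma^2\sqrt{d\log n})$ fluctuation around $2\sigma^2 d$ rather than the raw magnitude $2\sigma^2 d$ itself. Once this viewpoint is adopted, the rest is a routine Gaussian concentration and union bound.
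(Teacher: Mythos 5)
Your proposal is correct and follows essentially the same approach as the paper: expand the squared distances, apply $\chi^2$ concentration (Theorem~\ref{thm:XiSquare}/Corollary~\ref{cor:projec}) to the $\|t_q-t_p\|^2$ term and Gaussian tail bounds to the cross term, take a union bound over $p\in P$, and observe that the common $2\sigma^2 d$ contribution cancels when comparing $\|\tilde q-\tilde p\|^2$ against $\|\tilde q-\tilde p^*\|^2$. Your explicit case split on $\|q-p\|\le 2$ versus $\|q-p\|>2$ to handle the linear-in-$\|q-p\|$ cross term is slightly more careful than the paper's compressed ``comparing Eqns.\ \eqref{near} and \eqref{far} when $c$ is small'' but amounts to the same argument.
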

\begin{proof}
Write $(X_1,\ldots,X_d)^\tran  = \tilde{q}-q \sim N_d(0,\sigma I_d)$, 
and similarly $(Y_1,\ldots,Y_d)^\tran = \tilde{p}^*-p \sim N_d(0,\sigma I_d)$.
Let $Z_i = X_i - Y_i$, and note that the $Z_i$'s are independent
Gaussians, each with mean $0$ and variance $2 \sigma^2$.
Then by direct computation
\begin{equation}
  \norm{\tilde{q} - \tilde{p}^*}^2 
  = 1  + \sum_{i=1}^d (X_i - Y_i)^2 + \sum_{i=1}^d (X_i- Y_i)(q_i - p^*_i) = 1+  \sum_{i=1}^d Z_i^2 + \sum_{i=1}^d  Z_i(q_i - p^*_i).
\end{equation}

For the term $\sum_{i=1}^d Z_i^2$,  Theorem
\ref{thm:XiSquare} gives us $\Pr \left[\left|\sum_{i=1}^d Z_i^2-2\sigma^2d\right| \geq 2
  \sigma^2 \left(x +d \cdot 2 \sqrt{\tfrac{x}{d}} \right)
  \right] \leq 2e^{-x}.$
Setting $x=4 \log n \leq O(d)$, observe that 
$ 2 \sigma^2(x+2\sqrt{xd}) 
  \leq O(\sigma^2\sqrt{xd}) 
  \leq O(c^2\eps^2)$,
and thus  we have
$\Pr\left[\left|\sum_{i=1}^d Z_i^2-2 d \sigma^2\right| \geq O(c^2\epsilon^2)\right] \leq \tfrac{2}{n^4}$.

Now the term $\sum_{i=1}^d Z_i(q_i - p^*_i)$ is a Gaussian with mean $0$ and variance
$\sum_{i=1}^d (q_i - p^*_i)^2 \var[Z_i] = 2\sigma^2 \norm{q-p^*}^2 = 2\sigma^2$,
and thus with high probability
$\abs{ \sum_{i=1}^d Z_i(q_i - p^*_i) } \leq O( \sigma \sqrt{\log n})$.
Substituting for $\sigma$ and recalling $d=\Omega(\log n)$, 
the righthand-side can be bounded by $O( c\epsilon )$. 
Altogether, with high probability
\begin{equation}\label{near}
  \norm{ \tilde{q} - \tilde{p}^* }^2 
  \le 1 + 2 d \sigma^2 \pm O( c^2\epsilon^2) \pm O(c\epsilon).
\end{equation}
Similarly, for every other point $p \neq p^*$, with high probability,
\begin{equation}\label{far}
  \norm{ \tilde{q} - \tilde{p} }^2 
  \geq \norm{q-p}^2 + 2 d \sigma^2 \pm O(c^2\epsilon^2) 
       \pm O(c\epsilon) \norm{q-p},
\end{equation}
and we can furthermore take a union bound over all such points $p\neq p^*$.
Now comparing Eqns.~\eqref{near} and \eqref{far}
when $\norm{q-p}^2\geq 1+\eps$ and $c>0$ is sufficiently small, 
gives us the desired conclusion.
\end{proof}

\begin{remark}
\label{rem:noisePerpendicular}
The problem remains essentially the same if we assume
the noise has no component in the space $U$.
Indeed, we can absorb the noise inside $U$ into the ``original'' points ($P$ and $q$). 
With high probability, this changes the distance from $q$ to every point in $P$ by at most $O(\sigma\sqrt{k\log n})\ll \eps$. 
Hence, in the rest of the article, we will assume the noise is
perpendicular to $U$.
\end{remark}


\section{Warmup: Iterative PCA under Small Adversarial Noise} 
\label{sec:bounded}

To illustrate the basic ideas in our ``iterative PCA'' approach, we
first study it in an alternative, simpler model that differs from Section
\ref{sec:model} in that the noise is \emph{adversarial} but of {\em small
magnitude}.  The complete ``iterative PCA'' algorithm for the model
from Section \ref{sec:model} will appear in Section
\ref{sec:largeGaussian}.

In the bounded noise model, for fixed $\eps\in(0,1)$,
we start with an $n$-point dataset $P$ and a point $q$,
both lying in a $k$-dimensional space $U \subset \R^d$,
such that 
\begin{equation} \label{eq:pstarAdv}
  \exists p^*\in P \text{ such that }
  \norm{q-p^*}\leq 1 \text { and }
  \forall p\in P\setminus\{p^*\},\ 
  \norm{q-p} \geq 1+\eps
\end{equation}

The set $\tilde P$ consists of points $\tilde p_i=p_i+t_i$ for all $p_i\in P$,
where the noise $t_i$ is arbitrary, 
but satisfies $\norm{t_i} \leq \eps/16$ for all $i$.
Similarly, $\tilde q=q+t_q$ with $\norm{t_q} \le \eps/16$.

\begin{theorem}\label{thm:boundedNoise}
Suppose there is a $(1+\eps/4)$-approximate NNS data structure for $n$
points in a $k$-dimensional Euclidean space with query time $\Fquery$,
space $\Fspace$, and preprocessing time $\Fprep$.  Then for the above
adversarial-noise model, there is a data structure that preprocesses
$\tilde P$, and on query $\tilde q$ returns $\tilde p^*$.  This data
structure has query time $O((dk+\Fquery)\log n)$, space $O(\Fspace)$,
and preprocessing time $O(n+d^3+\Fprep)$.
\end{theorem}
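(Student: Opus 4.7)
\emph{Algorithm.} Following the iterative-PCA template of Section~1.3, I set $S_0 = \tilde P$, and for $j=0,1,\dots$ until $S_j$ is empty, compute a top-$k$ PCA subspace $V_j$ of $S_j$, mark the set $A_j = \{\tilde p\in S_j : \dist(\tilde p, V_j) \le \eps/8\}$ of well-captured points, build a $(1+\eps/4)$-approximate $k$-dimensional NNS $D_j$ on the projected set $\Pi_{V_j}(A_j)$, and set $S_{j+1} = S_j\setminus A_j$. To answer a query $\tilde q$, I project $\tilde q$ onto every $V_j$, query $D_j$ for a candidate $\tilde p_j\in A_j$, and output the candidate minimizing $\|\tilde q-\tilde p_j\|$ in the original $\R^d$.

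\emph{Iteration bound and resources.} Because $U$ is itself a $k$-dimensional subspace with $\dist(\tilde p, U)\le\eps/16$ for every $\tilde p$, PCA optimality gives $\sum_{\tilde p\in S_j}\dist(\tilde p, V_j)^2 \le |S_j|(\eps/16)^2$, and a Markov argument then yields $|A_j|\ge 3|S_j|/4$, so $|S_{j+1}|\le|S_j|/4$ and the algorithm stops after $L=O(\log n)$ iterations. Since the $A_j$'s partition $\tilde P$, mild superadditivity of the NNS primitive absorbs $\sum_j \Fprep(|A_j|)$ and $\sum_j \Fspace(|A_j|)$ into $\Fprep$ and $\Fspace$; the $L$ SVDs can be reduced to $O(d^3)$ overall after assembling the $d\times d$ covariance matrix once; queries cost $L$ projections of $O(dk)$ each plus $L$ calls to $\Fquery$, totalling $O((dk+\Fquery)\log n)$.

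\emph{Correctness.} Let $j^*$ be the unique iteration in which $\tilde p^*$ is captured; it suffices to show $D_{j^*}$ returns $\tilde p^*$, since the final $\R^d$ reranking discards any wrong candidates coming from other iterations. The gap~\eqref{eq:pstarAdv} together with $\|t_\cdot\|\le\eps/16$ gives $\|\tilde q-\tilde p^*\|\le 1+\eps/8$ and $\|\tilde q-\tilde p\|\ge 1+7\eps/8$ for every other $\tilde p$. Writing $\alpha := \dist(\tilde q, V_{j^*})$ and using Pythagoras together with the triangle inequality $\|\Pi_{V_{j^*}^\perp}(\tilde q-\tilde p)\|\in[|\alpha-\beta|, \alpha+\beta]$ for $\beta=\dist(\tilde p, V_{j^*})\le\eps/8$, I obtain
\begin{align*}
  \|\Pi_{V_{j^*}}(\tilde q-\tilde p^*)\|^2 &\le (1+\eps/8)^2 - (\alpha-\eps/8)_+^2, \\
  \|\Pi_{V_{j^*}}(\tilde q-\tilde p)\|^2 &\ge (1+7\eps/8)^2 - (\alpha+\eps/8)^2.
\end{align*}
The bound $\alpha \le \|\Pi_{V_{j^*}^\perp}(\tilde p^*)\| + \|\tilde q-\tilde p^*\| \le 1+\eps/4$ is automatic, and a direct algebraic check shows the ratio of the two right-hand sides exceeds $(1+\eps/4)^2$ for every $\alpha\in[0,1+\eps/4]$; hence $D_{j^*}$ returns $\Pi_{V_{j^*}}(\tilde p^*)$.

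\emph{Main obstacle.} I expect the delicate step to be this last algebraic check: projection onto $V_{j^*}$ can substantially shrink $\|\tilde q-\tilde p^*\|$ when $\alpha$ is close to $1$, and one must verify the other distance shrinks less. The quantitative reason it goes through is that the two correction terms $(\alpha\pm\eps/8)^2$ differ by only $\alpha\eps/2$, which is dominated by the raw squared gap $(1+7\eps/8)^2-(1+\eps/8)^2=\Theta(\eps)$ throughout the admissible range of $\alpha$. The remaining ingredients---the Markov-based iteration count and the disjointness-based bookkeeping for $\Fprep$ and $\Fspace$---are routine.
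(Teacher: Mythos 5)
Your proposal is correct and follows the same iterative-PCA template as the paper: same algorithm (repeatedly compute a top-$k$ PCA subspace, capture points within $O(\eps)$ of it, build a $k$-dimensional NNS on the projections, recurse on the rest), same $\ell_2$-optimality-plus-Markov argument for the $O(\log n)$ iteration bound, same final reranking of the $O(\log n)$ candidates in $\R^d$. The genuine difference is in the correctness step, which the paper proves as Claim~\ref{cl:boundedReport}. You apply Pythagoras to $\tilde q-\tilde p$, sandwiching $\|\Pi_{V^\perp}(\tilde q-\tilde p)\|$ between $(\alpha-\beta)_+$ and $\alpha+\beta$ via the triangle inequality with $\alpha=\dist(\tilde q,V)$, and then must verify the resulting distance-ratio inequality over the whole admissible range $\alpha\in[0,1+\eps/4]$ --- exactly the step you flagged as delicate. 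It does go through (the two correction terms $(\alpha\pm\eps/8)^2$ differ by $\alpha\eps/2$, which is dominated by the raw squared gap, and I checked the minimum of the resulting convex function over $\alpha$ is positive), but it costs you an $\alpha$-sweep. The paper avoids this case analysis entirely by decomposing differently: it bounds $\|\tilde q-\tilde p_{\tilde U}\|$ (query in $\R^d$ to the \emph{projected} data point) and observes that $\|\tilde q-\tilde p_{\tilde U}\|^2 - \|\tilde q_{\tilde U}-\tilde p_{\tilde U}\|^2 = \|\tilde q-\tilde q_{\tilde U}\|^2$ is the \emph{same} quantity for every data point; since subtracting an identical nonnegative term from numerator and denominator of a ratio $>1$ can only increase the ratio, the projected ratio inherits the $1+\eps/4$ gap from the unprojected one with no dependence on $\alpha$ at all. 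Both arguments are sound; the paper's cancellation trick is slicker and wastes fewer constants.
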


First we show that the nearest neighbor ``remains'' $p^*$ even after
the perturbations (similarly to Lemma \ref{lem:NNSstays}. Let
$\alpha=\eps/16$.

\begin{claim}\label{cl:boundednearest}
The nearest neighbor of $\tilde{q}$ in $\tilde P$ is $\tilde{p}^*$.
\end{claim}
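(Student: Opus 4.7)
The plan is to derive the conclusion directly from the triangle inequality, since the adversarial noise is deterministically bounded by $\alpha=\eps/16$; no concentration arguments (unlike in Lemma~\ref{lem:NNSstays}) are required here. Concretely, for every $p\in P$, write
\[
\tilde q - \tilde p = (q-p) + (t_q - t_p),
\]
and observe that $\norm{t_q - t_p}\leq \norm{t_q}+\norm{t_p}\leq 2\alpha = \eps/8$ by the triangle inequality and the assumed noise bound.

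Applying the triangle inequality to the above decomposition yields the two-sided estimate
\[
\norm{q-p} - \tfrac{\eps}{8} \;\leq\; \norm{\tilde q - \tilde p} \;\leq\; \norm{q-p} + \tfrac{\eps}{8}.
\]
Specializing to $p^*$ and using $\norm{q-p^*}\leq 1$ from \eqref{eq:pstarAdv} gives $\norm{\tilde q - \tilde p^*} \leq 1 + \eps/8$. For any other $p\in P\setminus\{p^*\}$, using $\norm{q-p}\geq 1+\eps$ gives $\norm{\tilde q - \tilde p} \geq 1+\eps - \eps/8 = 1 + 7\eps/8$. Comparing these two bounds, $1+\eps/8 < 1+7\eps/8$, so $\tilde p^*$ is strictly closer to $\tilde q$ than any other perturbed point, proving the claim.

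There is essentially no obstacle here; the only thing to be careful about is that the noise bound applies simultaneously to $t_q$ and every $t_p$ (which it does, by hypothesis), so the same $\eps/8$ slack is valid in both directions and a union bound is unnecessary in this deterministic setting. The gap $\eps$ in the unperturbed distances is sufficiently large compared to $2\alpha=\eps/8$ to absorb the worst-case noise, which is exactly why the constant $16$ in the definition of $\alpha$ was chosen.
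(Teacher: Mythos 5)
Your proof is correct and is essentially the same as the paper's: both apply the triangle inequality together with the deterministic noise bound $\norm{t_i}\leq\alpha=\eps/16$ to show that the perturbed distances change by at most $2\alpha=\eps/8$, which is less than the gap $\eps$. The only cosmetic difference is that you substitute $\norm{q-p^*}\leq 1$ to get absolute bounds $1+\eps/8$ vs.\ $1+7\eps/8$, while the paper keeps $\norm{q-p^*}$ symbolic; the comparison is identical.
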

\begin{proof}
For all $i$, we have $\|\tilde{p_i} - p_i\| \leq \norm{t_i} \leq \alpha$, 
hence by the triangle inequality, 
$\|\tilde{q}-\tilde{p}^*\| 
 \leq \|\tilde{q} - q\| + \norm{q - p^*} + \|p^* - \tilde{p}^*\| 
 \leq \norm{q - p^*} + 2\alpha$. 
For all $p \neq p^*$, a similar argument gives
$\|\tilde{q} - \tilde{p}\| 
 \geq \norm{q - p^*} + \eps - 2\alpha$.
\end{proof}

We now describe the algorithm used to prove Theorem
\ref{thm:boundedNoise}.  Our algorithm first finds a small collection
$\mathcal{U}$ of $k$-dimensional subspaces, such that every point of
$\tilde{P}$ is ``captured well'' by at least one subspace in
$\mathcal{U}$.  We find this collection $\mathcal{U}$ by iteratively
applying PCA, as follows (see Algorithm \ref{alg:iterPCA1}).  First
compute the top (principal) $k$-dimensional subspace of $\tilde P$.
It ``captures'' all points $\tilde p \in \tilde P$ within distance
$\sqrt{2}\alpha$ from the subspace. Then we repeat on the remaining
non-captured points, if any are left.
In what follows, let $p_{\tilde{U}}$ denote the projection of a point
$p$ onto $\tilde{U}$, and define the distance between a point $x$ and
a set (possibly a subspace) $S$ as $d(x,S) = \inf_{y\in S}
\norm{x-y}$.

\begin{algorithm} 
  \caption{Iteratively locate subspaces}
  \label{alg:iterPCA1}
  \begin{algorithmic}
  \STATE $j \gets 0$; $\tilde P_0 \gets \tilde{P}$
\WHILE{$\tilde P_j \neq \emptyset$}
  \STATE $\tilde{U}_j \gets$ the $k$-dimensional PCA subspace of $\tilde{P}_j$
  \STATE $M_j \gets \aset{ \tilde{p} \in \tilde{P}_j:\ d(\tilde{p}, \tilde{U}_j ) \leq \sqrt{2} \alpha }$
 \STATE $\tilde P_{j+1} \gets \tilde{P}_{j} \setminus M_j$
 \STATE $j \gets j+1$
  \ENDWHILE
  \RETURN $\mathcal{\tilde{U}} = \{\tilde{U}_0, \ldots,\tilde{U}_{j-1}  \}$ 
  and the associated point sets $\aset{M_0, M_1, \ldots, M_{j-1}}$.
  \end{algorithmic}
\label{alg:bounditer}
\end{algorithm}

The remainder of the preprocessing algorithm just constructs 
for each subspace $\tilde U\in {\mathcal U}$
a data structure for $k$-dimensional NNS, whose dataset is 
the points captured by $\tilde U$ projected onto this subspace $\tilde U$
(treating $\tilde U$ as a copy of $\R^k$).
Overall, the preprocessing phase comprises of $O(\log n)$ PCA computations 
and constructing $O(\log n)$ data structures for a $k$-dimensional NNS.

The query procedure works as follows.  Given a query point
$\tilde{q}$, project $\tilde{q}$ onto each $\tilde{U} \in \mathcal{U}$
to obtain $\tilde{q}_{\tilde{U}}$, and find in the data structure
corresponding to this $\tilde U$ a $(1+\eps/4)$-approximate nearest
neighbor point $\tilde{p}_{\tilde{U}}$ to $\tilde{q}_{\tilde{U}}$.
Then compute the distance between $\tilde{q}$ and each $\tilde{p}$
(original points corresponding to $\tilde p_{\tilde{U}}$), and report
the the closest one to $\tilde{q}$.

We now proceed to analyze the algorithm.

\begin{claim}
Algorithm~\ref{alg:bounditer} terminates within $O(\log n)$ iterations.
\end{claim}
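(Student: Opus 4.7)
The plan is to show that the size of $\tilde P_j$ at least halves in every iteration, which immediately yields termination after $O(\log n)$ rounds (since $\tilde P_0 = \tilde P$ has $n$ points, and the algorithm stops once the set is empty). The argument has two ingredients: an upper bound on the total squared distance from $\tilde P_j$ to any fixed $k$-dimensional subspace containing the original points, and the optimality of the PCA subspace with respect to sum of squared distances.

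First I would observe that every point $\tilde p_i \in \tilde P$ satisfies $d(\tilde p_i, U) \leq \|t_i\| \leq \alpha$, because $p_i \in U$ and $\tilde p_i = p_i + t_i$. In particular, for the set of survivors $\tilde P_j \subseteq \tilde P$ at the start of iteration $j$, we have
\[
  \sum_{\tilde p \in \tilde P_j} d(\tilde p, U)^2 \;\leq\; |\tilde P_j|\,\alpha^2.
\]

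Next, I would invoke the defining optimality property of the top-$k$ PCA subspace: among all $k$-dimensional subspaces of $\R^d$ (through the origin), $\tilde U_j$ minimizes the sum of squared distances from the points of $\tilde P_j$. Since $U$ is itself a $k$-dimensional subspace, this yields
\[
  \sum_{\tilde p \in \tilde P_j} d(\tilde p, \tilde U_j)^2
  \;\leq\; \sum_{\tilde p \in \tilde P_j} d(\tilde p, U)^2
  \;\leq\; |\tilde P_j|\,\alpha^2.
\]
A Markov-style counting now shows that the number of points $\tilde p \in \tilde P_j$ with $d(\tilde p, \tilde U_j) > \sqrt{2}\,\alpha$ is at most $|\tilde P_j|\alpha^2 / (2\alpha^2) = |\tilde P_j|/2$. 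Equivalently, at least half of the points of $\tilde P_j$ lie within distance $\sqrt{2}\,\alpha$ of $\tilde U_j$ and are therefore swept into $M_j$. Hence $|\tilde P_{j+1}| \leq |\tilde P_j|/2$, and after at most $\lceil \log_2 n \rceil + 1$ iterations the surviving set is empty.

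There is essentially no hard obstacle here; the only thing to be slightly careful about is the convention that PCA returns a subspace through the origin (so that $U$, which also passes through the origin, is a valid competitor in the optimality comparison). This is consistent with the way PCA is used in the warmup model, and it is the step doing the real work — turning the pointwise $\alpha$-closeness guarantee into a halving bound via the global least-squares optimality of PCA.
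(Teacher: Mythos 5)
Your proof is correct and follows the same route as the paper: bound $\sum_{\tilde p\in\tilde P_j} d(\tilde p, \tilde U_j)^2 \le |\tilde P_j|\alpha^2$ via the pointwise noise bound and the least-squares optimality of the PCA subspace against the competitor $U$, then apply a Markov-style counting to conclude at least half the points lie within $\sqrt{2}\alpha$ and are captured. The extra remark about PCA subspaces passing through the origin is a fine sanity check but not a substantive departure from the paper's argument.
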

\begin{proof}
Let $U$ be the PCA subspace of $P$ and let $\tilde{U}$ be the PCA
subspace of $\tilde{P}$. 
Since $\tilde U$ minimizes (among all $k$-dimensional subspaces)
the sum of squared distances from all $\tilde p\in \tilde{P}$ to $\tilde U$,
\[
  \sum_{\tilde p\in \tilde{P}} d(\tilde p,\tilde U)^2
  \leq \sum_{\tilde p\in \tilde{P}} d(\tilde p,U)^2
  \leq \sum_{\tilde p\in \tilde{P}} \norm{\tilde p-p}^2
  \leq \alpha^2 n.
\]
Hence, at most half of the points in $\tilde{P}$ may have distance to
$\tilde{U}$ which is greater than $\sqrt{2}\alpha$. The current set
$M$ will capture the other (at least a half fraction) points, and the
algorithm then proceeds on the remaining set.  Each
subsequent iteration thus decreases the number of points by a constant
factor.  After $O(\log n)$ iterations all points of $\tilde P$ must be
captured.
\end{proof}


\begin{claim} \label{cl:boundedReport}
The data structure for the subspace $\tilde U$ that captures $\tilde{p}^*$ always reports this point as the $(1+\eps/4)$-approximate nearest neighbor of $\tilde q$ (in $\tilde U$).
\end{claim}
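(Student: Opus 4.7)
My plan is to show that after projection onto $\tilde U$, the projection $\tilde p^*_{\tilde U}$ is \emph{uniquely} the point within a $(1+\eps/4)$ factor of the nearest-neighbor distance among the captured projections; this forces any $(1+\eps/4)$-approximate NN data structure to output $\tilde p^*$. Morally, the projection onto $\tilde U$ cannot distort the $\tilde q$-to-$\tilde p$ distances for captured points by enough to wipe out the $\Omega(\eps)$ gap already guaranteed in full space by (the proof of) Claim~\ref{cl:boundednearest}.

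The central tool is the Pythagorean identity. For a captured $\tilde p$, write $e_{\tilde p}=\tilde p-\tilde p_{\tilde U}$ and $e_{\tilde q}=\tilde q-\tilde q_{\tilde U}$, both lying in $\tilde U^\perp$. Orthogonal decomposition gives $\norm{\tilde q_{\tilde U}-\tilde p_{\tilde U}}^2 = \norm{\tilde q-\tilde p}^2 - \norm{e_{\tilde q}-e_{\tilde p}}^2$. The capture condition bounds $\norm{e_{\tilde p}},\norm{e_{\tilde p^*}}\le\sqrt 2\,\alpha$. For $\norm{e_{\tilde q}}$ I only need a crude $O(1)$ bound, obtained by picking $\tilde p^*_{\tilde U}\in\tilde U$ as a witness: $\norm{e_{\tilde q}}=d(\tilde q,\tilde U)\le\norm{\tilde q-\tilde p^*_{\tilde U}}\le 1+(2+\sqrt 2)\alpha\le 2$. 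From Claim~\ref{cl:boundednearest} I also reuse $\norm{\tilde q-\tilde p^*}\le 1+2\alpha$ and $\norm{\tilde q-\tilde p}\ge 1+\eps-2\alpha$ for every captured $\tilde p\ne\tilde p^*$.

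The core comparison subtracts the two Pythagorean identities: $\norm{\tilde q_{\tilde U}-\tilde p^*_{\tilde U}}^2-\norm{\tilde q_{\tilde U}-\tilde p_{\tilde U}}^2$ splits into a full-space difference, which direct expansion bounds by $-2\eps+8\alpha+O(\eps^2)$, and a residual difference $\norm{e_{\tilde q}-e_{\tilde p}}^2-\norm{e_{\tilde q}-e_{\tilde p^*}}^2 = (\norm{e_{\tilde p}}^2-\norm{e_{\tilde p^*}}^2)+2\langle e_{\tilde q},e_{\tilde p^*}-e_{\tilde p}\rangle$. Cauchy-Schwarz bounds the inner product by $\norm{e_{\tilde q}}\cdot 2\sqrt 2\,\alpha=O(\alpha)$, and the leftover quadratic is $O(\alpha^2)$. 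With $\alpha=\eps/16$ the total is $-\Omega(\eps)$, so $\tilde p^*_{\tilde U}$ is already the exact nearest neighbor; to upgrade to $(1+\eps/4)$-factor uniqueness I add $\bigl((1+\eps/4)^2-1\bigr)\norm{\tilde q_{\tilde U}-\tilde p^*_{\tilde U}}^2 \le (\eps/2)(1+2\alpha)^2$ to the right-hand side and verify the inequality is still strictly negative.

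The main obstacle is the cross term $\langle e_{\tilde q},e_{\tilde p^*}-e_{\tilde p}\rangle$. I do not see how to prove $d(\tilde q,\tilde U)=O(\alpha)$ in this adversarial model; there is no quantitative $\sin\theta$-type bound on the angle between $U$ and $\tilde U$ available in this warm-up section, so $\norm{e_{\tilde q}}$ is only controlled by the ambient scale $O(1)$. Consequently the cross term is merely $O(\alpha)=O(\eps)$, not $O(\alpha^2)$. The proof still goes through because the constant multiplying $\eps$ in the full-space gap strictly exceeds the combined constants multiplying $\eps$ in the cross term and in the $(1+\eps/4)^2$ upgrade, once $\alpha$ is set to a sufficiently small fraction of $\eps$ (specifically $\eps/16$).
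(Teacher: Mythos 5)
Your proof is correct, and it takes a genuinely different route from the paper's. You subtract the two Pythagorean identities and directly estimate the difference of squared \emph{projected} distances, which forces you to confront the cross term $\langle e_{\tilde q},\,e_{\tilde p^*}-e_{\tilde p}\rangle$; lacking any $\sin\theta$-type control on $\tilde U$ in this adversarial warm-up, you can only bound $\norm{e_{\tilde q}}$ by an $O(1)$ ambient scale, so the cross term costs $O(\alpha)$ rather than $O(\alpha^2)$, and you must verify that the constant in the full-space gap strictly dominates. The paper sidesteps this issue entirely. It first compares $\norm{\tilde q - \tilde p_{\tilde U}}$ against $\norm{\tilde q - \tilde p^*_{\tilde U}}$ --- distances from the \emph{unprojected} query to the projected points --- which are controlled by a simple triangle inequality to be within $\pm 4\alpha$ of the noiseless distances, giving a ratio exceeding $1+\eps/4$. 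Then it converts to the projected ratio via Pythagoras, observing that both numerator and denominator shed the \emph{identical} quantity $\norm{\tilde q - \tilde q_{\tilde U}}^2$, and that subtracting a common non-negative amount from a ratio exceeding $1$ only increases it. In other words, the paper never needs to bound $\norm{e_{\tilde q}}$ at all, because it cancels. Both arguments are valid, but the paper's cancellation trick is cleaner, avoids the tight constant-chasing your cross-term bound requires, and would still go through if $\norm{e_{\tilde q}}$ were not $O(1)$; the difference is exactly the obstacle you correctly identified in your last paragraph, which the paper dissolves rather than fights.
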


We prove Claim \ref{cl:boundedReport} in Appendix \ref{app:bounded}.
The proof has to overcome the disparity between 
projection onto $U$, in which $p^*$ is the nearest neighbor, 
and onto the subspace $\tilde U$ used by the algorithm.
We achieve this by careful applications of the triangle inequality
and Pythagoras' Theorem, using the bounds on the noise-magnitude bound $\alpha<\eps/16$ and on the distance to the subspace $\sqrt{2}\alpha$.

We can now complete the proof of Theorem~\ref{thm:boundedNoise}.
By Claim~\ref{cl:boundedReport}, $\tilde{p}^*$ is always reported 
by the $k$-dimensional data structure it is assigned to.
But this is the closest point overall, by Claim~\ref{cl:boundednearest}, 
and thus our algorithm eventually reports this point $\tilde p^*$,
which proves the correctness part of Theorem \ref{thm:boundedNoise}. To
argue the time and space guarantees, we just note that computing one PCA
on $n$ points takes time $O(n+d^3)$, and there are in total $O(\log
n)$ PCAs to compute, and obviously also $k$-dimensional NNS data structures
to query against.

\section{Stability of a Top PCA Subspace}
\label{sec:sin-theta}

Before continuing to the full iterative-PCA algorithm, we need to address the challenge of 
controlling the stability of the
PCA subspace under random noise. In particular, we will need to show
that the PCA subspace $\tilde{U}$ computed from the noisy dataset
$\tilde{P}$ is ``close'' to the original subspace $U$.  We establish
this rigorously using the sine-theta machinery developed by Davis and
Kahan~\cite{DK70} and by Wedin~\cite{wedin}.

\paragraph{Notation}
Throughout, $s_j(M)$ denotes the $j$-th largest singular value
of a real matrix $M$, and $\norm{M}=s_1(M)$ denotes its spectral norm,
while $\norm{M}_F$ denotes the Frobenius norm of $M$.
All vector norms, i.e. $\norm{v}$ for $v\in\R^d$, refer to the $\ell_2$-norm.
We provide a more self-contained quick review of basic matrix analysis 
and spectral properties in Appendix \ref{apx:spectral}. 

\subsection{Wedin's $\sin\theta$ Theorem}

The \emph{$\sin\theta$ distance} between two subspaces $B$ and $A$ of
$\reals^d$ is defined as
$$
\sin \theta(B, A) = \max_{x \in B, \norm{x}=1}\, \min_{y \in A} \norm{x-y}. 
$$
Observe that the minimum here is just the distance to a subspace $\dist(x,A)$,
and it is attained by orthogonal projection.
Thus, for all $x'\in B$ (not necessarily of unit length)
$\dist(x',A) 
 = \norm{x'}\cdot \dist \left(\frac{x'}{\norm{x'}},A \right) 
 \leq \norm{x'}\cdot \sin\theta(B,A)$.

\begin{figure}[H]
  \begin{center}
    \includegraphics[scale = 0.4]{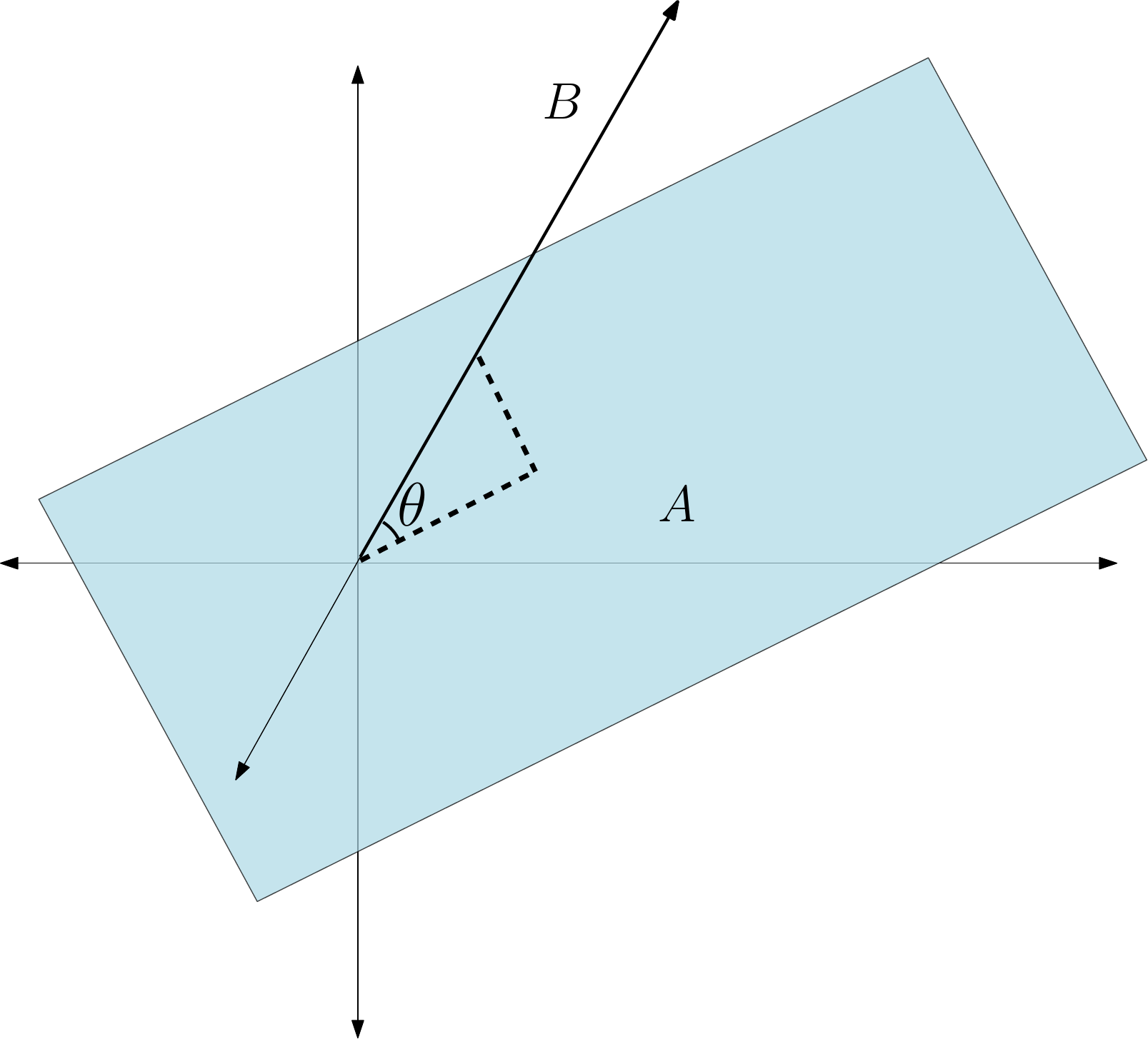}
  \end{center}
  \label{fig:sinthetaexample}
\end{figure}

For a matrix $X\in \R^{n \times d}$ and an integer $m\in\aset{1,\ldots,d}$, 
let $R_m(X)$ (resp. $L_m(X)$) denote the matrix formed by the
top $m$ right (resp. left) singular vectors of $X$ taken in column
(resp. row) order, and define $SR_m(X)$ (resp. $SL_m(X)$ ) as the
subspace whose basis is these right (resp. left) singular vectors. 

Now consider a matrix $X\in \R^{n \times d}$, and add to it a
``perturbation'' matrix $Y \in \R^{n\times d}$, writing $Z = X + Y$.
The theorem below bounds the $\sin\theta$ distance between the top
singular spaces before and after the perturbation, namely the
subspaces $SR_m(Z)$ and $SR_k(X)$ for some dimensions $m$ and $k$, 
in terms of two quantities:
\begin{enumerate} \compactify
\item
The projection of the perturbation $Y$ on $SR_m(Z)$ and on $SL_m(Z)$.
Let $Y_R = \norm{Y  R_m(Z)}$ and $Y_L = \norm{L_m(Z)  Y^\tran}$. 
\item
The gap between the top $m$ singular values of $Z$ and the bottom
$d-k$ singular values of $X$. Formally, define $ \gamma =s_m(Z) -
s_{k+1}(X)$.
\end{enumerate}

\begin{theorem}[Wedin's $\sin\theta$ Theorem \cite{wedin}]
\label{thm:sintheta}
In the above setting, if $m \leq k \leq d$ and $\gamma > 0$, then
\begin{equation*}
\sin \theta(SR_m(Z), SR_k(X)) \leq \frac{\max\aset{Y_R, Y_L}}{\gamma}.
\end{equation*}
\end{theorem}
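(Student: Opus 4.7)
The plan is to invoke the classical Davis--Kahan/Wedin Sylvester-equation argument, using the spectral gap $\gamma$ to invert the operator relating the SVDs of $X$ and $Z$. First, I would reduce the statement to a matrix-norm inequality: let $V_2$ be an orthonormal basis of $SR_k(X)^\perp$ and set $\tilde V_1 = R_m(Z)$. For every unit vector $x = \tilde V_1 a \in SR_m(Z)$, the distance to $SR_k(X)$ equals $\norm{V_2^\tran x} = \norm{V_2^\tran \tilde V_1 a}$, so maximizing over $\norm{a}=1$ gives $\sin\theta(SR_m(Z), SR_k(X)) = \norm{V_2^\tran \tilde V_1}$. The goal becomes bounding the spectral norm of this cross inner-product matrix.

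Second, I would derive two coupled identities from the SVDs. Write the block SVD of $X$ as $X = U_1\Sigma_1 V_1^\tran + U_2\Sigma_2 V_2^\tran$, so that $U_2^\tran X = \Sigma_2 V_2^\tran$, $XV_2 = U_2\Sigma_2$, and $\norm{\Sigma_2} = s_{k+1}(X)$. Analogously, let $\tilde U_1 = L_m(Z)^\tran$ and let $\tilde\Sigma_1$ be the diagonal matrix of the top $m$ singular values of $Z$, so $Z\tilde V_1 = \tilde U_1\tilde\Sigma_1$, $\tilde U_1^\tran Z = \tilde\Sigma_1 \tilde V_1^\tran$, and $s_m(Z) = s_{\min}(\tilde\Sigma_1)$. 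Substituting $Z = X+Y$ into each of these and projecting appropriately (by $U_2^\tran$ on the left in one and by $V_2$ on the right in the other) yields
\begin{align*}
E\tilde\Sigma_1 - \Sigma_2 F &= V_2^\tran Y^\tran \tilde U_1, \\
F\tilde\Sigma_1 - \Sigma_2 E &= U_2^\tran Y \tilde V_1,
\end{align*}
where $E := V_2^\tran \tilde V_1$ (the object we want to bound) and $F := U_2^\tran \tilde U_1$. Since $V_2$ and $U_2$ have orthonormal columns, the two right-hand sides have spectral norms at most $Y_L$ and $Y_R$ respectively.

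Third, I would stack these into a single block Sylvester equation $G\tilde\Sigma_1 - MG = H$, where $G$ stacks $E$ over $F$, $M$ is the symmetric block matrix with zero diagonal blocks and $\Sigma_2$ in each off-diagonal block, and $H$ stacks the two right-hand sides. The matrix $M$ has spectrum $\{\pm s_j(X) : j > k\} \subseteq [-s_{k+1}(X), s_{k+1}(X)]$, while $\tilde\Sigma_1$ has spectrum in $[s_m(Z), s_1(Z)]$, so the two spectra are separated by at least $\gamma = s_m(Z) - s_{k+1}(X) > 0$. Simultaneously diagonalizing $M$ and $\tilde\Sigma_1$ decouples the equation into scalar relations $(s_j(Z) - \lambda_i)\tilde G_{ij} = \tilde H_{ij}$ with denominators of magnitude $\geq \gamma$. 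Invoking the standard Sylvester norm estimate ($\norm{G}\leq\norm{H}/\gamma$ when the two spectra are $\gamma$-separated) and then bounding $\norm{H}$ by the blockwise maximum gives $\norm{E} \leq \norm{G} \leq \max\{Y_R, Y_L\}/\gamma$, as desired.

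The main obstacle is the final invocation of the Sylvester norm estimate: the entrywise inequality $\abs{\tilde G_{ij}} \leq \abs{\tilde H_{ij}}/\gamma$ does not automatically give $\norm{G} \leq \norm{H}/\gamma$ in spectral norm, because Hadamard division by a matrix whose entries are bounded below can nevertheless distort operator norms. What saves the argument is the special structure of $M$, symmetric with spectrum bracketed strictly below that of $\tilde\Sigma_1$ with a uniform gap $\gamma$; this is exactly the regime in which the Davis--Kahan contour-integral representation of the Sylvester inverse preserves every unitarily invariant norm with factor $1/\gamma$. Extracting the tight $\max\{Y_R,Y_L\}$ form (rather than a looser $\sqrt{Y_R^2+Y_L^2}$) likewise requires exploiting the block structure of $H$ and the antidiagonal block form of $M$ to treat the two perturbations in parallel.
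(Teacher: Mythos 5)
This statement is cited in the paper from Wedin~\cite{wedin} without proof, so there is no in-paper argument to compare against; I will evaluate your attempt against the classical proof of the theorem.

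Your reduction to $\sin\theta(SR_m(Z),SR_k(X)) = \norm{E}$ with $E = V_2^\tran \tilde V_1$, and your derivation of the two coupled residual identities
\begin{align*}
E\tilde\Sigma_1 - \Sigma_2 F &= V_2^\tran Y^\tran \tilde U_1, \\
F\tilde\Sigma_1 - \Sigma_2 E &= U_2^\tran Y \tilde V_1,
\end{align*}
are correct and are exactly the starting point of Wedin's argument. However, from here the stacked-Sylvester route you take introduces a genuine gap that you yourself flag but do not close: inverting the block operator gives $\norm{G}\le \norm{H}/\gamma$, and the best generic bound on $\norm{H}$ for a vertically stacked matrix is $\sqrt{Y_R^2+Y_L^2}$, not $\max\{Y_R,Y_L\}$. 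Your closing remark that this ``requires exploiting the block structure of $H$ and the antidiagonal block form of $M$'' is not substantiated, and in fact the natural diagonalization of $M$ via the Hadamard-type unitary $\tfrac{1}{\sqrt{2}}\bigl(\begin{smallmatrix}I&I\\I&-I\end{smallmatrix}\bigr)$ decouples the system but only yields a bound of the form $(Y_R+Y_L)/\gamma$, which is even weaker. The machinery you invoke is also heavier than needed; in particular the contour-integral/Sylvester-inverse estimate is correct but unnecessary.

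The classical proof achieves the $\max$ form by \emph{not} stacking. From the two identities one takes spectral norms directly: since $s_{\min}(\tilde\Sigma_1) = s_m(Z)$ one has $\norm{E\tilde\Sigma_1}\ge s_m(Z)\norm{E}$, and since $\norm{\Sigma_2}=s_{k+1}(X)$ one has $\norm{\Sigma_2 F}\le s_{k+1}(X)\norm{F}$. The reverse triangle inequality then gives
$$ s_m(Z)\norm{E} - s_{k+1}(X)\norm{F} \;\le\; Y_L, \qquad s_m(Z)\norm{F} - s_{k+1}(X)\norm{E} \;\le\; Y_R. $$
Now split by cases: if $\norm{E}\ge\norm{F}$, the first inequality gives $\gamma\norm{E}\le s_m(Z)\norm{E}-s_{k+1}(X)\norm{F}\le Y_L$; if $\norm{F}>\norm{E}$, the second gives $\gamma\norm{F}\le Y_R$ and hence $\norm{E}<\norm{F}\le Y_R/\gamma$. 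In either case $\norm{E}\le\max\{Y_R,Y_L\}/\gamma$, which is exactly the claimed bound. This two-line case analysis is what produces the $\max$ rather than the $\ell_2$ combination, and it is the step your proposal is missing. Everything upstream in your write-up is sound; I would replace the Sylvester-stacking paragraph and the appeal to the contour-integral norm estimate with the elementary argument above.
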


\subsection{Instantiating the $\sin\theta$ Theorem}\label{perturb}

We now apply the $\sin\theta$-Theorem to our semi-random model
from Section~\ref{sec:model}. 
Let $X \in \reals^{n \times d}$ be the matrix corresponding to our original point set $P$ (of size $n \geq d$) lying in a subspace $U$ of dimension $k \leq d$. Let $T\in\reals^{n\times d}$ be a perturbation matrix (noise),
and then $\tX = X +T$ corresponds to our perturbed point set $\tilde P$. 
Our next theorem uses $\norm{T}$ directly
without assuming anything about its entries, 
although in our context where the entries of $T$ 
are drawn from independent Gaussians of magnitude $\sigma$,
Theorem~\ref{randomtheory} implies that \whp $\norm{T} \leq O(\sigma \sqrt{n+d})$.
In fact, if the matrix $T$ is random, 
then $m$ (and possibly also $\gamma$) should be interpreted as 
random variables that depend on $T$.

\begin{theorem}\label{evendist}
Let $\tX=X+T$ be defined as above, and fix a threshold $\gamma_1>0$.
If $m\leq k$ is such that at least $m$ 
singular values of $\tX$ are at least $\gamma_1$,
then 
\begin{equation*}
  \sin \theta(SR_m(\tX), SR_k(X)) \leq 
  \frac{\norm{T}}{\gamma_1},
\end{equation*}
where $SR_k(M)$ denotes, as before, 
the span of the top $k$ right-singular vectors of a matrix $M$. 
\end{theorem}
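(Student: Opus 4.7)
The plan is to reduce Theorem~\ref{evendist} to a direct application of Wedin's $\sin\theta$ Theorem (Theorem~\ref{thm:sintheta}) with $Z = \tX$, $Y = T$, and the given value of $m$. I would instantiate each of the three ingredients appearing in Wedin's bound and show they take the desired form in our setting.

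First I would exploit the structural assumption that the original points lie in a $k$-dimensional subspace $U$, which means the matrix $X$ has rank at most $k$, so $s_{k+1}(X) = 0$. This makes the singular-value gap in Wedin's theorem collapse to $\gamma = s_m(\tX) - s_{k+1}(X) = s_m(\tX)$. By the hypothesis that $\tX$ has at least $m$ singular values exceeding $\gamma_1$, we get $\gamma \geq \gamma_1 > 0$, which also verifies the positivity condition required to invoke Theorem~\ref{thm:sintheta}.

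Next I would bound the numerator $\max\{Y_R, Y_L\}$. Since $R_m(\tX)$ has orthonormal columns, its spectral norm is $1$, and hence $Y_R = \norm{T\,R_m(\tX)} \leq \norm{T}\cdot\norm{R_m(\tX)} = \norm{T}$ by submultiplicativity of the spectral norm. The identical argument applied to $L_m(\tX)^\tran$ (whose columns are orthonormal) yields $Y_L = \norm{L_m(\tX)\, T^\tran} \leq \norm{T}$. Combining these with the gap bound from the previous step gives
\begin{equation*}
  \sin\theta(SR_m(\tX), SR_k(X))
  \;\leq\; \frac{\max\{Y_R, Y_L\}}{\gamma}
  \;\leq\; \frac{\norm{T}}{\gamma_1},
\end{equation*}
which is exactly the claimed inequality.

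I do not foresee a real obstacle here: the main content is simply matching the abstract quantities in Wedin's theorem to the concrete quantities in our model. The only step worth stating explicitly is the use of $\mathrm{rank}(X) \leq k$ to kill $s_{k+1}(X)$, since without this observation the singular-value gap would involve an unknown tail of $X$'s spectrum; the rest is bookkeeping via the orthonormality of singular-vector matrices and the submultiplicativity of $\norm{\cdot}$.
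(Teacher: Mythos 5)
Your proposal is correct and matches the paper's proof essentially step for step: both reduce to Wedin's theorem with $Z=\tX$, $Y=T$, bound $Y_R,Y_L$ by $\norm{T}$ via orthonormality of the singular-vector matrices plus submultiplicativity, and use $\operatorname{rank}(X)\le k$ to get $s_{k+1}(X)=0$ so the gap is $s_m(\tX)\ge\gamma_1$.
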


\begin{proof}
Towards applying Theorem~\ref{thm:sintheta},
define $T_R = \|T  R_m(\tX)\|$ and $T_L = \| L_m(\tX)  T^\tran \|$.
The columns of $R_m(\tX)$ being orthonormal implies $\norm{R_m(\tX)} \leq 1$,
and now by Fact \ref{tria},
$
  T_R = \norm{T R_m(\tX)} 
  \leq \norm{T}
$.
We can bound also $T_L$ similarly. 
Recalling Fact~\ref{fa:sumsingular}, 
$X$ has at most $k$ non-zero singular values
because the point set $P$ lies in a $k$-dimensional subspace,
hence the gap is $\gamma=s_m(\tilde X)-0 \geq \gamma_1$.
Plugging this into the $\sin\theta$ Theorem yields the bound
$\sin \theta(SR_m(\tX), SR_k(X)) \leq \norm{T}/\gamma \leq \norm{T}/\gamma_1$.
\end{proof}

\section{Iterative PCA Algorithm}\label{sec:largeGaussian}

We now present the iterative PCA algorithm, that solves the NNS
problem for the semi-random model from Section~\ref{sec:model}.  In
particular, the underlying pointset lives in a $k$-dimensional space,
but each point is also added a Gaussian noise $N_d(0,\sigma^2 I_d)$,
which has norm potentially much larger than the distance between a
query and it nearest neighbor. The algorithm reduces the setting to a
classical $k$-dimensional NNS problem.

\begin{theorem}\label{thm:mainfour}
Suppose there is a $(1+\eps/8)$-approximate NNS data structure for $n$
points in a $k$-dimensional space with query time $\Fquery$, space
$\Fspace$, and preprocessing time $\Fprep$.  Assume the Gaussian-noise
model \eqref{eq:tildeP}-\eqref{eq:pstar}, with 
$ \sigma (k^{1.5} \sqrt{\log n} + \sqrt[4]{k^3 d \log n}) < c\eps$
for sufficiently small constant $c>0$. 

Then there is a data structure that preprocesses $\tilde P$, and on
query $\tilde q$ returns $\tilde p^*$ with high probability.  This
data structure has query time $O((dk+\Fquery)\sqrt{d \log n} + d^{O(1)})$,
uses space $O(\Fspace \sqrt{d \log n}+d^{O(1)})$, and preprocessing time
$O((nd^2+d^3+\Fprep)\sqrt{d \log n})$.
\end{theorem}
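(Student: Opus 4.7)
I would generalize Algorithm~\ref{alg:iterPCA1} to the large Gaussian-noise setting via three adaptations. At each iteration $j$, (i) draw an independent random subsample $S_j\subseteq \tilde P_j$ of the surviving points and compute the SVD on $S_j$ rather than on $\tilde P_j$; this independence is what allows us to re-apply concentration bounds to uncaptured points in subsequent iterations. (ii) Let $\tilde U_j$ be the span of only those right singular vectors whose singular values on $S_j$ exceed a threshold $\gamma_1$ of order $\sigma\sqrt{|S_j|}\cdot\poly(k,\log n)/\eps$; thus $\dim \tilde U_j\leq k$, and every retained direction is cleanly separated from the noise floor. (iii) Define $M_j=\{\tilde p\in\tilde P_j : d(\tilde p,\tilde U_j)\leq\tau\}$ with $\tau=\Theta(\sigma\sqrt d)$ matching the typical noise norm. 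As in the warmup, build a $(1+\eps/8)$-approximate $k$-dimensional NNS on each $M_j$ after projection onto $\tilde U_j$ (viewed as $\R^k$), and answer a query by projecting $\tilde q$ into each $\tilde U_j$, querying each data structure, lifting the candidate back to $\R^d$, and reporting the one closest to $\tilde q$.

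\textbf{Key ingredients.} (1) \emph{Subspace alignment.} Writing the matrix of $S_j$ as $\tilde X=X+T$ with rows of $X$ in $U$, random matrix bounds (as used in the proof of Theorem~\ref{evendist}) give $\|T\|\leq O(\sigma\sqrt{|S_j|+d})$ w.h.p. Since every retained direction has singular value at least $\gamma_1$, Theorem~\ref{evendist} yields
\[
  \sin\theta(\tilde U_j,U)\ \leq\ \|T\|/\gamma_1,
\]
which, under the parameter assumption $\sigma(k^{1.5}\sqrt{\log n}+\sqrt[4]{k^3 d\log n})<c\eps$, is small enough that projecting onto $\tilde U_j$ distorts the $U$-part of any point of bounded norm by only $O(\eps)$. (2) \emph{Progress per iteration.} PCA-optimality on $S_j$ together with the bound on $\|T\|$ shows that the sum of squared distances from $\tilde P_j$ to $\tilde U_j$ is at most roughly $|\tilde P_j|\sigma^2 d$ plus contributions controlled by $\gamma_1$, so by a Markov-style argument an $\Omega(1/\sqrt{d\log n})$ fraction of $\tilde P_j$ must lie within $\tau$ of $\tilde U_j$; iterating gives at most $O(\sqrt{d\log n})$ rounds before $\tilde P_j$ empties. (3) \emph{Correctness.} Lemma~\ref{lem:NNSstays} ensures $\tilde p^*$ is the true nearest neighbor in $\tilde P$. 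When $\tilde p^*$ enters some $M_{j^*}$, linearity of projection splits $\tilde q_{\tilde U_{j^*}}-\tilde p_{\tilde U_{j^*}}$ into the projected ``clean'' difference (in $U$, distorted only by $\sin\theta(\tilde U_{j^*},U)$) and the projected noise difference, which is common to all candidates up to an $O(\eps)$ additive term; hence the unperturbed gap $\eps$ of \eqref{eq:pstar} survives in $\tilde U_{j^*}$ and overwhelms the $(1+\eps/8)$ approximation slack, forcing the $k$-dim NNS to return $\tilde p^*$. A union bound over the $O(\sqrt{d\log n})$ iterations and over the Gaussian tail events yields correctness w.h.p., and the stated preprocessing, query, and space bounds follow by multiplying the per-iteration PCA cost $O(nd^2+d^3)$ and NNS cost by the iteration count.

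\textbf{Main obstacle.} The delicate step is jointly tuning $\gamma_1$ and $\tau$: $\gamma_1$ must be large enough to guarantee a tight $\sin\theta$ bound (for correctness), yet small enough that enough ``true'' directions of $U$ survive in $\tilde U_j$ to capture an $\Omega(1/\sqrt{d\log n})$ fraction of points per iteration (for termination). This balancing is exactly what the hypothesis $\sigma(k^{1.5}\sqrt{\log n}+\sqrt[4]{k^3 d\log n})<c\eps$ enables. A second, more subtle, obstacle is preserving probabilistic independence across iterations: without the subsampling in step (i), the event that a point $\tilde p$ fails to be captured at iteration $j$ would depend on $\tilde p$'s own noise through its effect on $\tilde U_j$, which would spoil the $\sin\theta$ and Markov-style concentration arguments in later iterations. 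The fresh subsample decouples $\tilde U_j$ from the noise of the points tested for capture, allowing the inductive argument to go through.
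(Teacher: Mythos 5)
Your high-level plan matches the paper's quite closely: iterative PCA on a fresh subsample, singular-value thresholding, a capture radius around the typical noise norm, the $\sin\theta$ machinery for alignment, and a Markov-style argument for $O(\sqrt{d\log n})$ iterations are all present in the paper's Algorithm~\ref{iter}. But there are two concrete calibration errors and two omitted steps, and they are not cosmetic.

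\textbf{The singular-value threshold is mis-calibrated.} You set $\gamma_1$ of order $\sigma\sqrt{|S_j|}\cdot\poly(k,\log n)/\eps$, reasoning that it must exceed the noise level $\|T\|\approx\sigma\sqrt{|S_j|\log n}$ by a polynomial factor. But the $\sin\theta$ bound only needs $\gamma_1$ to exceed $\|T\|$ by a constant factor; the binding constraint comes from \emph{termination}: the total signal mass dropped by discarding singular values below $\gamma_1$ is at most $k\gamma_1^2$, and this must be $O(\eps^2|S_j|)$ (a vanishing fraction of $\|P\|_F^2\geq |S_j|$) for the Markov argument to give an $\Omega(\sqrt{\log n/d})$ capture fraction. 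This forces $\gamma_1=O(\eps\sqrt{|S_j|/k})$, which is what the paper takes ($\delta(n)=c\eps\sqrt{n/k}$). With your $\gamma_1$, $k\gamma_1^2\sim k\sigma^2|S_j|\poly(k,\log n)^2/\eps^2$, and under the stated parameter bound $\sigma k^{1.5}\sqrt{\log n}<c\eps$ this does not reduce to $O(\eps^2|S_j|)$ unless $\eps$ is bounded below by a constant. So your threshold breaks the iteration count.

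\textbf{The capture radius is too coarse, and the correctness argument is incomplete.} The paper takes $\Psi=d\sigma^2+0.001\eps^2$, i.e., $\tau^2$ exceeds the noise norm squared by a margin that is $\Theta(\eps^2)$ --- not $\Theta(\sigma^2 d)$. Writing $\tau=\Theta(\sigma\sqrt d)$ hides exactly what matters: since $\sigma\sqrt d\gg\eps$ in the regime of interest, any constant-factor slack would let in points whose ``true'' component orthogonal to $U_{\mathrm{in}}=\mathrm{proj}_U\tilde U$ (the paper's $p_{\mathrm{out}}$) is far larger than $\eps$, and then the $k$-dimensional NNS can be fooled. Your correctness sketch (``the projected noise difference is common to all candidates up to an $O(\eps)$ additive term'') never addresses $p_{\mathrm{out}}$: a captured point's $U$-component can have a direction that $\tilde U$ misses entirely, and the paper needs the dedicated Lemma~\ref{lem:suff} and the cosine bound $c_p=O(\sqrt{\log n/d})$ (independence of $t$ from $\tilde U_j^s$, exploiting that the sample excludes $\tilde p$) to conclude $\|p_{\mathrm{out}}\|\leq 0.1\eps$ for every captured point. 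Relatedly, the sampled points themselves cannot be tested for capture against $\tilde U_j^s$ (their noise is not independent of it); the paper routes them to a leftover set $R$ and searches it exhaustively, a step your proposal omits. You do correctly flag the calibration tension as the ``main obstacle,'' but the resolution you propose tips the balance the wrong way.
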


\subsection{Algorithm Description}

The iterative-PCA algorithm computes a collection $\mU$ of
$O(\sqrt{d\log n})$ subspaces, such that every point in the perturbed
dataset $\tilde{P}$ is within squared distance $\Psi = d \sigma^2
+0.001\eps^2$ of some subspace in the collection $\mathcal{U}$.  For each
such subspace $\tU_j^s\in \mU$, we project onto $\tU_j^s$ the points
captured by this subspace $\tU_j^s$, and construct on the resulting
pointset a $k$-dimensional NNS data structure. We consider only
singular vectors corresponding to sufficiently large singular values,
which helps ensure robustness to noise. In particular, this threshold is 
$\delta(n) \triangleq c  \eps \sqrt{ \tfrac{n}{ k}}$ for small constant $c \leq 0.001$. Also, the PCA space is
computed on a sample of the current pointset only.

See Algorithm~\ref{iter} for a detailed description of computing
$\mU$.

\begin{algorithm}
  \caption{Iteratively locate subspaces. }
  \begin{algorithmic}
  \STATE Define $\Psi \triangleq d\sigma^2 + 0.001 \eps^2$,\,  
  $r\triangleq O(d^9k^3\tfrac{\log n}{\eps^2\sigma^2})$,\, 
  and $\delta(n) \triangleq c  \eps \sqrt{ \tfrac{n}{ k}}$ for a small constant $c \leq 0.001$.
  \STATE $j \gets 0$, $\tilde P_0 \gets \tilde{P}$
\WHILE{$|\tilde P_j| > r$}
 \STATE Sample $r$ points from $\tilde P_j$ (with repetition) to form the set/matrix $\tilde P_j^s$
 \STATE $m \gets$  number of singular values of $\tilde{P}_j^s$ that are at least $\delta(r)$

  \STATE $\tilde{U}_j^s \gets$ the subspace spanned by the $m$ top singular vectors of $\tilde P_j^s$
  \STATE $M_j \gets$ all $\tilde{p} \in \tilde{P}_j\setminus \tilde P_j^s$ at distance $\dist(\tilde{p}, \tilde{U}_j^s ) \leq \sqrt{\Psi}$
 \STATE $\tilde P_{j+1} \gets \tilde{P}_{j} \setminus (M_j\cup \tilde{P}_j^s) $
 \STATE $j \gets j+1$
\ENDWHILE
  \RETURN the subspaces $\mathcal{\tilde{U}} = \{\tilde{U}_0^s, \ldots,\tilde{U}_{j-1}^s  \}$, their pointsets 
$\{M_0, M_1, \ldots, M_{j-1} \}$, and the remaining set $R=\tP_{j}\bigcup \cup_{l=0}^{j-1} \tilde P_{l}^s$.
  \end{algorithmic}
\label{iter}
\end{algorithm}

We now present the overall NNS algorithm in detail.  The preprocessing
stage runs Algorithm~\ref{iter} on the pointset $\tP$, stores its
output, and constructs a $k$-dimensional NNS data structure for each
of the pointsets $M_0,\ldots,M_{j-1}$ (here $j$ refers to the final
value of this variable). Note that we also have a ``left-over'' set
$R=\tP_{j}\bigcup \cup_{l=0}^{j-1} \tilde P_{l}^s$, which includes the
points remaining at the end plus the sampled points used to
construct the subspaces $\mathcal U$.  

The query stage uses those $j$ data structures to compute a
$(1+\eps/8)$-approximates NNS of $q$ in each of $M_0,\ldots,M_{j-1}$,
and additionally finds the NNS of $q$ inside $R$ by exhaustive search.
It finally reports the closest point found.  In the rest of this
section, we will analyze this algorithm, thus proving Theorem
\ref{thm:mainfour}.

We make henceforth three assumptions that hold without loss of
generality.  First, we assume that $\norm{p_i} \geq 1$, which is
without loss of generality as we can always move the pointset away
from the origin. Overall, this ensures that $\|
P\|_F^2\ge |P|$.

Second, we assume that all points $\tilde P$
have norm at most $L\triangleq d^{3/2}$, 
which follows by applying a standard transformation of partitioning
the dataset by a randomly shifted grid with side-length $d$. This
transformation ensures that the query and the nearest neighbor, at
distance $O(\sigma\sqrt{d})$ are in the same grid cell with
probability at least $1-o(1)$ (see, e.g., \cite{highdreport}). 

Third, we assume that $\sigma\gg \eps/\sqrt{d}$, as otherwise we can
apply the algorithm from Section \ref{sec:bounded} directly. (The
algorithm in the current section works also for small $\sigma$,
but the exposition becomes simpler if we assume a larger $\sigma$.) In the context 
of our model of Section \ref{sec:model} and Lemma \ref{lem:NNSstays}, this is equivalent to asserting $d \gg \log n$.

Finally, we remark that the algorithm can be changed to not use
explicitly the value of $\sigma$, by taking only the closest $O
\left(\sqrt{\tfrac{\log n}{d}} \right)$ fraction of points to a given
space $\tilde U_j^s$. We omit the details.

\subsection{Analysis}

We now present a high-level overview of the proof. First, we
characterize the space $\tilde U_j^s$, and in particular show that it
is close to (a subspace of) the original space $U$, using the
sine-theta machinery and matrix concentration bounds. Second, we use
the closeness of $\tilde U_j^s$ to $U$ to argue that: (a) projection of
the noise onto $\tilde U_j^s$ is small; and (b) the projection of a
point $\tilde p$ is approximately $\|p\|$, {\em on average}. Third, we
use these bounds to show that the space $\tilde U_j^s$ captures a good
fraction of points to be put into $M_j$, thus allowing us to bound the
number of iterations. Fourth, we show that, for each point $\tilde
p=p+t$ that has been ``captured'' into $M_j$, its projection into
$\tilde U_j^s$ is a faithful representations of $p$, in the sense
that, for such a point, the distance to the projection of $\tilde q$
onto $\tilde U_j^s$ is close to the original distance (before
noise). This will suffice to conclude that the $k$-dimensional NNS for
that set $M_j$ shall return the right answer (should it happen to have
the nearest neighbor $p^*$).

Slightly abusing notation, let $P$ represent both the pointset and the
corresponding $n\times d$ matrix, and similarly for $\tilde{P}$ or a
subset thereof like $\tP_j$.  Let $T$ be the noise matrix, i.e., its
rows are the vectors $t_i$ and $\tilde{P} = P + T$.

Using bounds from random matrix theory (see Lemma~\ref{lem:subsetNorm}),
w.h.p. every restriction of $T$ to a subset of at least $d$ rows gives a submatrix $T'$ of spectral norm 
$\|T'\|\le \eta(|T'|)=O(\sigma\sqrt{|T'|\cdot \log n})$.  
In addition, by Corollary \ref{cor:projec} and the parameters of our model in Theorem \ref{thm:mainfour}, \whp
\begin{equation} \label{eq:noisebound}
  \forall p_i\in P, \qquad
  | \norm{t_i}^2 - \sigma^2 d | \leq 0.0001\eps^2.
\end{equation}
We assume in the rest of the proof that these events occur. 
Since both are high probability events, we may use a union bound 
and assume they occur over all iterations without any effect of conditioning on the points.

The thrust of our proof below is to analyze one iteration of
Algorithm~\ref{iter}.  We henceforth use $j$ to denote an arbitrary
iteration (not its final value), and let $n_j=\card{\tP_j} > r$ denote
the number of points at that iteration.

\subsubsection{Analysis: Characterization of the PCA space of the sampled set}
Define $\tilde U_j^s$ and $\tilde U_j$ to be the PCA space of $\tilde
P_j^s$ and $\tilde P_j$ respectively, i.e., full current set and
sampled set. Suppose the dimension of $\tilde U_j^s$ and $\tilde U_j$
is $m \leq k$ and $m \leq \ell \leq k$ respectively where $m$ is set
according to the thresholding step in Algorithm \ref{iter} and $\ell$
will be specified later. We show that the computed PCA space
$\tilde{U}_j^s$ is close to $U$ using the sine-theta machinery
established in Section~\ref{perturb}. We consider the point sets as
matrices, and concatenate the following two observations for deriving
our result:
\begin{itemize}
 \item{The PCA space of sampled noisy set (scaled) is close to that of the full noisy set.}
\item{The PCA space of the noisy set is close to that of the unperturbed set.}
\end{itemize}
We now analyze the effects of sampling. We sample $r$ points from the
current set $\tilde P_j$ of size $n_j$. We use the following standard
matrix concentration.

\begin{theorem}[Rudelson and Vershynin \cite{rudelson2007sampling}, Thm 3.1]
Suppose we sample (with replacement) $r$ row vectors from an $n$-size set $A\subset
\R^d$ (represented as a $n\times d$ matrix), and call them set
$Y=\{y_1,\ldots y_r\}$. Then, for any $t\in(0,1)$, and $L=\max_{a\in A} \|a\|$:
$$
\Pr\left[\left\|\frac{n}{r}{\sum_{y\in Y} y^\tran y -A^\tran A} \right \|>t \|A^\tran A \|\right]\le 2e^{-\Omega \left( \left(t^2 /L^2 \right) \cdot r/\log r \right)}.
$$
\end{theorem}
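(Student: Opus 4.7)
The plan is to follow the classical symmetrization plus noncommutative Khintchine route of Rudelson. Let $y_1,\ldots,y_r$ be iid uniform on the rows of $A$, so that $\mathbb{E}[y_i^\tran y_i]=\tfrac{1}{n}A^\tran A$, and the quantity inside the probability is $\|S_r-\mathbb{E} S_r\|$ where $S_r=\tfrac{n}{r}\sum_i y_i^\tran y_i$. The goal is to bound this centered sum of random rank-one matrices, first in expectation and then as a tail.

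First I would symmetrize: introducing independent Rademacher signs $\epsilon_i$, the standard Banach-space symmetrization inequality gives
\[
  \mathbb{E}\,\|S_r-\mathbb{E} S_r\|
  \;\le\; \tfrac{2n}{r}\, \mathbb{E}_{y,\epsilon}\Bigl\|\sum_{i=1}^r \epsilon_i\, y_i^\tran y_i\Bigr\|.
\]
Conditionally on the $y_i$, Rudelson's noncommutative Khintchine inequality for rank-one summands yields
\[
  \mathbb{E}_\epsilon\Bigl\|\sum_i \epsilon_i\, y_i^\tran y_i\Bigr\|
  \;\le\; C\sqrt{\log r}\cdot \max_i\|y_i\|\cdot \Bigl\|\sum_i y_i^\tran y_i\Bigr\|^{1/2}.
\]
Using $\|y_i\|\le L$ and Jensen's inequality, this bounds $\mathbb{E}\|S_r-\mathbb{E} S_r\|$ by $C' L\sqrt{(n\log r)/r}\cdot (\mathbb{E}\|S_r\|)^{1/2}$.

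The next step is a self-bounding inversion. Writing $E=\mathbb{E}\|S_r-\mathbb{E} S_r\|$ and using $\mathbb{E}\|S_r\|\le \|A^\tran A\|+E$, I obtain $E\le \alpha\sqrt{\|A^\tran A\|+E}$ with $\alpha=C'L\sqrt{(n\log r)/r}$, and the quadratic formula gives $E\lesssim \alpha\sqrt{\|A^\tran A\|}+\alpha^2$. This delivers the expectation bound $E\le t\|A^\tran A\|$ once $r\gtrsim (L^2/t^2)\log r$ in the relevant regime.

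Finally, to upgrade this expectation bound to the exponential tail stated in the theorem, I would rerun the entire chain on the $p$-th Schatten moment of $S_r-\mathbb{E} S_r$: the noncommutative Khintchine inequality admits a version whose constant grows as $O(\sqrt{p+\log r})$, and Markov on the $p$-th moment at the optimal choice $p\asymp t^2 r/(L^2\log r)$ yields the claimed subgaussian-type tail. The main obstacle is the moment version of the self-bounding argument: passing Jensen through the square root is lossy, and one must carefully interleave the Schatten-moment inequality with the self-bounding step so as not to pay an additional factor of $\log r$ inside the exponent. All remaining ingredients---the symmetrization, the trivial bound on $\|S_r\|$, and the Markov inversion---are routine operator-theoretic bookkeeping.
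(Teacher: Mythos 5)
The paper does not prove this statement; it is taken verbatim as a black-box citation of Rudelson and Vershynin (Theorem 3.1 of \cite{rudelson2007sampling}), so there is no internal proof to compare against. Your sketch reconstructs the canonical route for results of this type: symmetrize, apply Rudelson's noncommutative-Khintchine lemma to the conditional Rademacher sum, and close the loop with a self-bounding quadratic to control $\mathbb{E}\|S_r-\mathbb{E} S_r\|$. Those three steps are correct and are indeed how the expectation bound is obtained.

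The gap is exactly where you flag it, and it is not cosmetic. To get the exponential tail you propose to redo the whole chain on the $p$-th Schatten moment, using the $\sqrt{p+\log r}$ version of Khintchine, and then invert via Markov. After symmetrization and the moment Khintchine step you are left with $\bigl(\mathbb{E}_y\|\sum_i y_i^\tran y_i\|^{p/2}\bigr)^{1/p}$ inside the bound, and you must then relate that quantity back to the $p$-th moment of the centered norm, set up a functional inequality in $p$, and solve it without leaking an extra $\log r$. You do not carry this out, and labeling it ``routine operator-theoretic bookkeeping'' conceals the fact that this interleaving is the whole difficulty of the tail bound. For reference, Rudelson and Vershynin sidestep this entirely: after the expectation estimate they invoke a deviation inequality for norms of sums of bounded independent Banach-space-valued random variables (Ledoux/Talagrand-type concentration around the mean), which gives the subgaussian tail directly without any moment self-bounding loop. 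Either route can be made to work, but as written your argument has a hole precisely at the step the theorem is about.

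A second, quieter issue: even the expectation part does not land on the stated form. Your self-bounding inequality yields $E\le t\|A^\tran A\|$ under $r/\log r\gtrsim L^2 n/(t^2\|A^\tran A\|)$, not $r/\log r\gtrsim L^2/t^2$; the two coincide only when $\|A^\tran A\|=\Theta(n)$. The theorem as quoted in the paper has the same shape (the exponent $t^2/L^2$ is not even dimensionless), and in the paper's application they indeed use the crude bound $\|A^\tran A\|\le nL^2$, so nothing downstream breaks. But ``in the relevant regime'' is doing real work in your write-up and should be spelled out rather than implied.
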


\begin{corollary}
\label{col:samplingConcetration}
For $t\in (0,1)$ if we sample $r = O(\log^2 n \cdot L^2/t^2)$ vectors from
$\tilde P_j$ to obtain $\tilde P_j^s$, we have that w.h.p.:
$$
\|\tfrac{n_j}{r}(\tilde P_j^s)^\tran \tilde P_j^s - \tilde P_j^\tran \tilde P_j\|\le L^2n_j\cdot t.
$$
\end{corollary}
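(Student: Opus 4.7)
The statement is a direct consequence of the Rudelson--Vershynin sampling theorem quoted just above, together with a crude spectral bound on $\tilde P_j^\tran \tilde P_j$. The plan is as follows.

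First, apply the theorem with $A = \tilde P_j$ (so that $n = n_j$) and with the sample $Y = \tilde P_j^s$ of size $r$, noting that the bound $L = \max_{\tilde p \in \tilde P_j}\norm{\tilde p}$ is at most the global bound $L = d^{3/2}$ from the WLOG normalization made just before the Analysis section. The theorem then gives, for any $t'\in(0,1)$,
\begin{equation*}
  \Pr\Bigl[\,\bigl\lVert \tfrac{n_j}{r}(\tilde P_j^s)^\tran \tilde P_j^s - \tilde P_j^\tran \tilde P_j \bigr\rVert > t' \,\lVert \tilde P_j^\tran \tilde P_j\rVert\,\Bigr] \le 2 e^{-\Omega((t'^2/L^2)\cdot r/\log r)}.
\end{equation*}

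Second, replace $\lVert \tilde P_j^\tran \tilde P_j\rVert$ on the right-hand side by the coarser upper bound $n_j L^2$: since every row of $\tilde P_j$ has norm at most $L$, we have $\lVert \tilde P_j^\tran \tilde P_j\rVert = \lVert \tilde P_j\rVert^2 \le \lVert \tilde P_j\rVert_F^2 \le n_j L^2$. Hence the event $\{\lVert \tfrac{n_j}{r}(\tilde P_j^s)^\tran \tilde P_j^s - \tilde P_j^\tran \tilde P_j\rVert > t \cdot n_j L^2\}$ is contained in the event above with $t' = t$, so it has probability at most $2 e^{-\Omega((t^2/L^2)\cdot r/\log r)}$.

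Third, choose $r = C \log^2 n \cdot L^2/t^2$ for a sufficiently large absolute constant $C$. Since we have $r = \mathrm{poly}(n)$ (using $L = d^{3/2}$ and $d = \mathrm{poly}(n)$ throughout the paper), we get $\log r = O(\log n)$, so the exponent in the tail bound is $\Omega((t^2/L^2)\cdot r/\log r) = \Omega(\log n)$. Thus the failure probability is at most $n^{-C'}$ for $C'$ as large as we like, which is w.h.p.

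The argument is essentially bookkeeping; the only ``obstacle'' is to verify that the coarse replacement of $\lVert \tilde P_j^\tran \tilde P_j\rVert$ by $n_j L^2$ is the right one to match the stated conclusion, and that with $r$ polynomial in $n$ the $\log r$ in the denominator of the exponent is absorbed into the $\log^2 n$ factor in the choice of $r$. Both are immediate from the WLOG norm bound $\norm{\tilde p_i}\le L = d^{3/2}$ recorded at the start of Section~\ref{sec:largeGaussian}.
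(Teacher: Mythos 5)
Your proposal is correct and follows essentially the same route as the paper: apply the Rudelson--Vershynin theorem with $A=\tilde P_j$, bound $\lVert\tilde P_j^\tran\tilde P_j\rVert=\lVert\tilde P_j\rVert^2\le\lVert\tilde P_j\rVert_F^2\le n_jL^2$, and then check that the chosen $r$ makes the exponent $\Omega(\log n)$. (If anything your write-up is slightly cleaner than the paper's, which writes $\lVert\tilde P_j\rVert^2=\sum\lVert\tilde p\rVert^2$ where an inequality via the Frobenius norm is what is meant, and your explicit observation that $\log r=O(\log n)$ under $r=\mathrm{poly}(n)$ fills in a step the paper leaves implicit.)
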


\begin{proof}
Instantiate the theorem for $A = \tilde P_j$ to obtain
$\|\tfrac{n_j}{r}(\tilde P_j^s)^\tran \tilde P_j^s - \tilde
P_j^\tran \tilde P_j\|\le t \|\tilde P^\tran \tilde P \|$. We simplify
the right hand side of this equation. First, by Fact \ref{fa:co} we
have that $t \| \tilde P_j^\tran \tilde P_j \| = t \| \tilde P_j \|^2
$. Next by Fact \ref{fa:sumsingular}, $t \|\tilde P_j \|^2
= \sum_{\tilde p \in \tilde P_j} \|\tilde p \|^2 t \leq
n_j \max_{\tilde p \in \tilde P_j} \|\tilde p \|^2 t \leq L^2 n_j t$.
To prove this event succeeds with high probability over all points
$n$, we need only substitute the value of $r$ directly into the
theorem statement.
\end{proof}

\begin{corollary}
\label{col:preciseConcentration}
We set $t=O \left( \tfrac{\eps\sigma \sqrt{\log n} }{ L^2k^{1.5}} \right)$, for which 
we need to sample  $r=\Omega(L^6 k^3 \log n /\eps^2 \sigma^2)=\Omega(d^9 k^3 \log n / \eps^2\sigma^2)$. Then we obtain:
$$
\|\tfrac{n_j}{r}(\tilde P_j^s)^\tran \tilde P_j^s - \tilde P_j^\tran \tilde P_j\|\le n_j\cdot O \left(\eps\sigma\tfrac{\sqrt{\log n}}{k^{1.5}} \right) 
\ll \left( \frac{\delta(n_j)}{ k} \right)^2,$$
 for $\sigma$ in the range given by our model of Theorem \ref{thm:mainfour}.
\end{corollary}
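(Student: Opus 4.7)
The plan is a direct parameter substitution into Corollary~\ref{col:samplingConcetration}, followed by a comparison to $(\delta(n_j)/k)^2$ that invokes the hypothesis on $\sigma$ from Theorem~\ref{thm:mainfour}. There is no substantive obstacle; the argument is really just bookkeeping of how the various $\poly(d,k,1/\eps,1/\sigma,\log n)$ factors combine.

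First, I would check that the claimed value of $r$ indeed satisfies the sample-size requirement $r = \Omega(\log^2 n\cdot L^2/t^2)$ of Corollary~\ref{col:samplingConcetration} for the stated choice $t = O\bigl(\eps\sigma\sqrt{\log n}/(L^2k^{1.5})\bigr)$. Squaring the denominator gives $t^2 = \Theta(\eps^2\sigma^2\log n / (L^4 k^3))$, so $L^2/t^2 = \Theta(L^6 k^3/(\eps^2\sigma^2 \log n))$, and therefore $\log^2 n\cdot L^2/t^2 = \Theta(L^6 k^3 \log n /(\eps^2\sigma^2))$. Substituting $L=d^{3/2}$ recovers exactly the claimed $r = \Omega(d^9 k^3 \log n /(\eps^2\sigma^2))$.

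Second, I would apply Corollary~\ref{col:samplingConcetration} itself, which gives $\|\tfrac{n_j}{r}(\tilde P_j^s)^\tran \tilde P_j^s - \tilde P_j^\tran \tilde P_j\| \le L^2 n_j \cdot t$ w.h.p. Plugging in the chosen $t$ causes the two $L^2$ factors to cancel, yielding the middle bound $n_j\cdot O(\eps\sigma\sqrt{\log n}/k^{1.5})$ stated in the corollary.

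Third, I would verify the strict inequality to $(\delta(n_j)/k)^2$. Since $\delta(n_j) = c\eps\sqrt{n_j/k}$, we have $(\delta(n_j)/k)^2 = c^2\eps^2 n_j/k^3$. After dividing both sides by $n_j$ and $\eps$, the desired $\ll$ becomes $\sigma\sqrt{\log n} \ll \eps/k^{1.5}$, equivalently $\sigma k^{1.5}\sqrt{\log n}\ll \eps$. But this is exactly the first summand of the standing hypothesis $\sigma\bigl(k^{1.5}\sqrt{\log n} + \sqrt[4]{k^3 d\log n}\bigr) < c\eps$ from Theorem~\ref{thm:mainfour}, so taking the absolute constant $c$ sufficiently small closes the proof.
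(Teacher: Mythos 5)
Your proposal is correct and is precisely the bookkeeping the paper leaves implicit; the corollary is stated without proof, and your three steps (checking the required sample size against Corollary~\ref{col:samplingConcetration}, substituting $t$ so the $L^2$ factors cancel, and reducing the $\ll$ comparison to $\sigma k^{1.5}\sqrt{\log n}\ll\eps$, which is the first summand in the hypothesis of Theorem~\ref{thm:mainfour}) are exactly the intended derivation. The algebra checks out: $t^2=\Theta(\eps^2\sigma^2\log n/(L^4k^3))$ gives $\log^2 n\cdot L^2/t^2=\Theta(L^6k^3\log n/(\eps^2\sigma^2))=\Theta(d^9k^3\log n/(\eps^2\sigma^2))$ with $L=d^{3/2}$, and $(\delta(n_j)/k)^2=c^2\eps^2n_j/k^3$ makes the final comparison reduce as you say.
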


We now aim to show that $\sin \theta(\tilde U_j^s, \tilde U_j)$ and
$\sin \theta(\tilde U_j, U_j)$ are small, and the triangle inequality
for $\sin \theta$ will then show that $\sin \theta(\tilde U_j^s, U_j)$
is also small.  Recall first that the sine-theta machinery of
Theorem \ref{perturb} requires a ``gap" between the bottom most
singular value considered in one subspace and the top most not
considered in the other. We observe the following lemma:

\begin{lemma}\label{lem:gapinU}
There exists $\ell$, where $m \leq \ell \leq k+1$, such that $s_{\ell}(\tilde P_j) - s_{\ell+1}(\tilde P_j) \geq \Omega \left( 
\frac{\delta(n_j)}{k} \right)$. Hence $s_{\ell}(\tilde P_j) - s_{k+1}(P_j) \geq \Omega \left( 
\frac{\delta(n_j)}{k} \right)$.
\end{lemma}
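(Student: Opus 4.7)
\medskip

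\noindent\textbf{Proof plan.} The strategy is to apply a pigeonhole argument to the $k-m+1$ consecutive differences
\[
s_m(\tilde P_j)-s_{m+1}(\tilde P_j),\ s_{m+1}(\tilde P_j)-s_{m+2}(\tilde P_j),\ \ldots,\ s_k(\tilde P_j)-s_{k+1}(\tilde P_j),
\]
whose sum telescopes to $s_m(\tilde P_j) - s_{k+1}(\tilde P_j)$. So it suffices to lower-bound this telescoped sum by $\Omega(\delta(n_j))$: then at least one of the at most $k$ gaps must exceed $\Omega(\delta(n_j)/k)$, giving the desired index $\ell\in[m,k]$. The ``Hence'' conclusion is then immediate, since the unperturbed matrix $P_j$ lies in the $k$-dimensional subspace $U$, hence $s_{k+1}(P_j)=0\le s_{\ell+1}(\tilde P_j)$.

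For the lower bound on $s_m(\tilde P_j)$, I will transfer the threshold condition from the sample to the full set via Corollary~\ref{col:preciseConcentration}. Apply Weyl's inequality to the two symmetric matrices $\tilde P_j^\tran \tilde P_j$ and $\tfrac{n_j}{r}(\tilde P_j^s)^\tran \tilde P_j^s$ to get
\[
\left| s_m(\tilde P_j)^2 - \tfrac{n_j}{r}\, s_m(\tilde P_j^s)^2\right|
\ \le\ \Bigl\|\tfrac{n_j}{r}(\tilde P_j^s)^\tran \tilde P_j^s - \tilde P_j^\tran \tilde P_j\Bigr\|
\ \le\ \bigl(\delta(n_j)/k\bigr)^2.
\]
By the thresholding rule of Algorithm~\ref{iter}, $s_m(\tilde P_j^s)\ge\delta(r)=c\eps\sqrt{r/k}$, so $\tfrac{n_j}{r}s_m(\tilde P_j^s)^2\ge c^2\eps^2 n_j/k=\delta(n_j)^2$. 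Hence $s_m(\tilde P_j)^2\ge \delta(n_j)^2(1-1/k^2)$, which gives $s_m(\tilde P_j)\ge \delta(n_j)/\sqrt{2}$ for $k\ge 2$.

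For the upper bound on $s_{k+1}(\tilde P_j)$, note that $\tilde P_j = P_j+T_j$ where $P_j$ has rank at most $k$ (the rows lie in the $k$-dimensional subspace $U$), and $T_j$ is the submatrix of $T$ indexed by the rows of $\tilde P_j$. By Weyl's inequality for singular values,
\[
s_{k+1}(\tilde P_j)\ \le\ s_{k+1}(P_j)+\|T_j\|\ =\ 0 + \|T_j\|\ \le\ \eta(n_j)\ =\ O\!\left(\sigma\sqrt{n_j\log n}\right),
\]
using the random-matrix bound stated at the start of the analysis. Under the parameter regime of Theorem~\ref{thm:mainfour}, $\sigma\sqrt{k\log n}\ll \eps$ (this is much weaker than $\sigma k^{1.5}\sqrt{\log n}<c\eps$), so
\[
\|T_j\|\ \le\ O\!\left(\sigma\sqrt{n_j\log n}\right)\ \ll\ \eps\sqrt{n_j/k}\ =\ \Theta(\delta(n_j)),
\]
and thus $s_m(\tilde P_j)-s_{k+1}(\tilde P_j)\ge \delta(n_j)/\sqrt{2}-o(\delta(n_j))=\Omega(\delta(n_j))$. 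Pigeonhole then yields some $\ell\in[m,k]$ with $s_\ell(\tilde P_j)-s_{\ell+1}(\tilde P_j)\ge\Omega(\delta(n_j)/k)$, completing the proof.

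\medskip

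The calculations here are essentially routine once the right pieces are assembled; the only mildly subtle point is making sure the sampling error guaranteed by Corollary~\ref{col:preciseConcentration} is small enough \emph{at the scale of $(\delta(n_j)/k)^2$} (not just $\delta(n_j)^2$), which is exactly why the parameter $t$ in that corollary was tuned with an extra $k^{1.5}$ factor. The other ingredient worth double-checking is that the $\sigma$-regime in Theorem~\ref{thm:mainfour} is indeed enough to dominate the noise contribution $\|T_j\|$ by $\delta(n_j)$ uniformly over all iterations $j$; this follows from the union bound on the spectral-norm event over subsets of $T$ mentioned in the analysis, which holds with high probability independently of the iteration.
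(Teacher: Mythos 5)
Your proof is correct and follows essentially the same route as the paper: lower-bound $s_m(\tilde P_j)$ by transferring the sample threshold via Corollary~\ref{col:preciseConcentration} and Weyl's inequality (Fact~\ref{fa:difference}), upper-bound $s_{k+1}(\tilde P_j)$ by $\|\tilde P_j - P_j\|$ using Lemma~\ref{lem:subsetNorm} and $s_{k+1}(P_j)=0$, then pigeonhole over the at most $k$ consecutive gaps. The constants you get ($\delta(n_j)/\sqrt 2$ vs.\ the paper's $3\delta(n_j)/4$) differ slightly but both yield the stated $\Omega(\delta(n_j)/k)$.
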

\begin{proof}
First recall that by the threshold step of our algorithm,  $s_m \left( \tfrac{n_j}{r}(\tilde P_j^s)^\tran \tilde P_j^s \right)
\geq \delta(n_j)^2$. Now by Fact \ref{fa:difference} and $r$ set as in
Corollary \ref{col:preciseConcentration}, we have that $s_m(\tilde
P_j^\tran \tilde P_j) \geq \delta(n_j)^2 - \delta(n_j)^2 / k^2
\ge \tfrac{3}{4} \delta(n_j)^2$. Hence we have $s_m(\tilde P_j) >
3 \delta(n_j) / 4 $. Also since $P_j$ is drawn from a $k$ dimensional
subspace, $s_{k+1} (P_j) = 0$ and therefore
$s_{k+1}(\tilde{P}_j) \leq \norm{P_j - \tilde P_j} =
O(\sigma \sqrt{n_j \log n}) < \delta(n_j) / 16$ by
Lemma \ref{lem:subsetNorm} and the parameters of our model.  Now since
$s_m(\tilde P_j) \geq 3 \delta(n_j) / 4$ and $s_{k+1}(\tilde
P_j) \leq \delta(n_j) / 16$, then there must exist $\ell$ with
$m \leq \ell \leq k+1$ that satisfies the claim.
 See Figure \ref{fig:staggeredgap} for illustration of the argument.
\end{proof}

Using now that $a^2-b^2 \geq (a-b)^2$ for $a \geq b \geq 0$ we obtain:
\begin{corollary}\label{col:gapinU}
There exists $\ell$,  $m \leq \ell \leq k+1$, such that $s_{\ell} \left(\tilde{P_j}^\tran \tilde P_j \right) - s_{\ell+1}
\left(\tilde{P_j}^\tran \tilde P_j \right) \geq \Omega \left(  \frac{\delta(n_j)^2}{k^2}  \right)$. Hence 
$ s_{m} \left(\tfrac{n_j}{r} \left(\tilde P_j^s \right)^\tran \tilde P_j^s  \right)  -   s_{\ell+1}(\tilde P_j^\tran \tilde P_j) \geq \Omega \left(  \frac{\delta(n_j)^2}{k^2}  \right)$.
\end{corollary}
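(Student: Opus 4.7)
The plan is to derive the corollary directly from Lemma \ref{lem:gapinU} in two steps: first, square the singular-value gap to pass from $\tilde P_j$ to $\tilde P_j^\tran \tilde P_j$, and second, transport the resulting gap from $\tilde P_j^\tran \tilde P_j$ to the rescaled sampled Gram matrix $\tfrac{n_j}{r}(\tilde P_j^s)^\tran \tilde P_j^s$ using the concentration bound of Corollary \ref{col:preciseConcentration}.

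For the first part, I would invoke $s_i(A^\tran A) = s_i(A)^2$, which holds for any real matrix $A$ and any $i$. Applying this with $A=\tilde P_j$ and using the hint $a^2-b^2 \geq (a-b)^2$ for $a \geq b \geq 0$ (valid since singular values are nonnegative and ordered), I get, for the index $\ell$ produced by Lemma \ref{lem:gapinU},
\[
  s_\ell(\tilde P_j^\tran \tilde P_j) - s_{\ell+1}(\tilde P_j^\tran \tilde P_j)
  \;=\; s_\ell(\tilde P_j)^2 - s_{\ell+1}(\tilde P_j)^2
  \;\geq\; \bigl(s_\ell(\tilde P_j) - s_{\ell+1}(\tilde P_j)\bigr)^2
  \;\geq\; \Omega\!\left(\frac{\delta(n_j)^2}{k^2}\right).
\]
This gives the first claim of the corollary verbatim.

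For the second claim, I would combine two observations. First, since $m \leq \ell$ and singular values are non-increasing, $s_m(\tilde P_j^\tran \tilde P_j) \geq s_\ell(\tilde P_j^\tran \tilde P_j)$, so from the first part,
\[
  s_m(\tilde P_j^\tran \tilde P_j) - s_{\ell+1}(\tilde P_j^\tran \tilde P_j) \;\geq\; \Omega\!\left(\frac{\delta(n_j)^2}{k^2}\right).
\]
Second, Corollary \ref{col:preciseConcentration} bounds $\bigl\|\tfrac{n_j}{r}(\tilde P_j^s)^\tran \tilde P_j^s - \tilde P_j^\tran \tilde P_j\bigr\|$ by a quantity that is $\ll (\delta(n_j)/k)^2$ for the chosen sample size $r$. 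Applying Weyl's inequality (Fact \ref{fa:difference}) to the $m$-th singular value yields
\[
  s_m\!\left(\tfrac{n_j}{r}(\tilde P_j^s)^\tran \tilde P_j^s\right) \;\geq\; s_m(\tilde P_j^\tran \tilde P_j) \;-\; o\!\left(\frac{\delta(n_j)^2}{k^2}\right).
\]
Subtracting $s_{\ell+1}(\tilde P_j^\tran \tilde P_j)$ from both sides and using the previous display, the second claim follows with the same $\Omega(\delta(n_j)^2/k^2)$ asymptotic, since the sampling error is of strictly smaller order than the gap.

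The proof is essentially routine once Lemma \ref{lem:gapinU} and the concentration bound are in hand; the one point that requires a moment of care is ensuring that the concentration error $\|\tfrac{n_j}{r}(\tilde P_j^s)^\tran \tilde P_j^s - \tilde P_j^\tran \tilde P_j\|$ is genuinely of smaller order than the squared gap $(\delta(n_j)/k)^2$, rather than merely comparable to it. This is exactly why Corollary \ref{col:preciseConcentration} was tuned to give a bound that is $\ll (\delta(n_j)/k)^2$, so the sample size $r=\Omega(d^9 k^3 \log n/\eps^2\sigma^2)$ is precisely calibrated to make this step go through without degrading the constant hidden in the $\Omega(\cdot)$ on the right-hand side.
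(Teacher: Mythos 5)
Your proof is correct and takes essentially the same approach the paper intends. The paper supplies only the one-line hint $a^2-b^2\geq(a-b)^2$ and leaves the rest implicit; you have filled in precisely the expected steps, namely the identity $s_i(A^\tran A)=s_i(A)^2$ (Fact~\ref{fa:co}) for the first claim, and for the second the monotonicity $s_m\geq s_\ell$ (since $m\leq\ell$) together with Corollary~\ref{col:preciseConcentration} and Weyl's inequality (Fact~\ref{fa:difference}).
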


\begin{figure}
  \begin{center}
    \includegraphics[scale = 0.6]{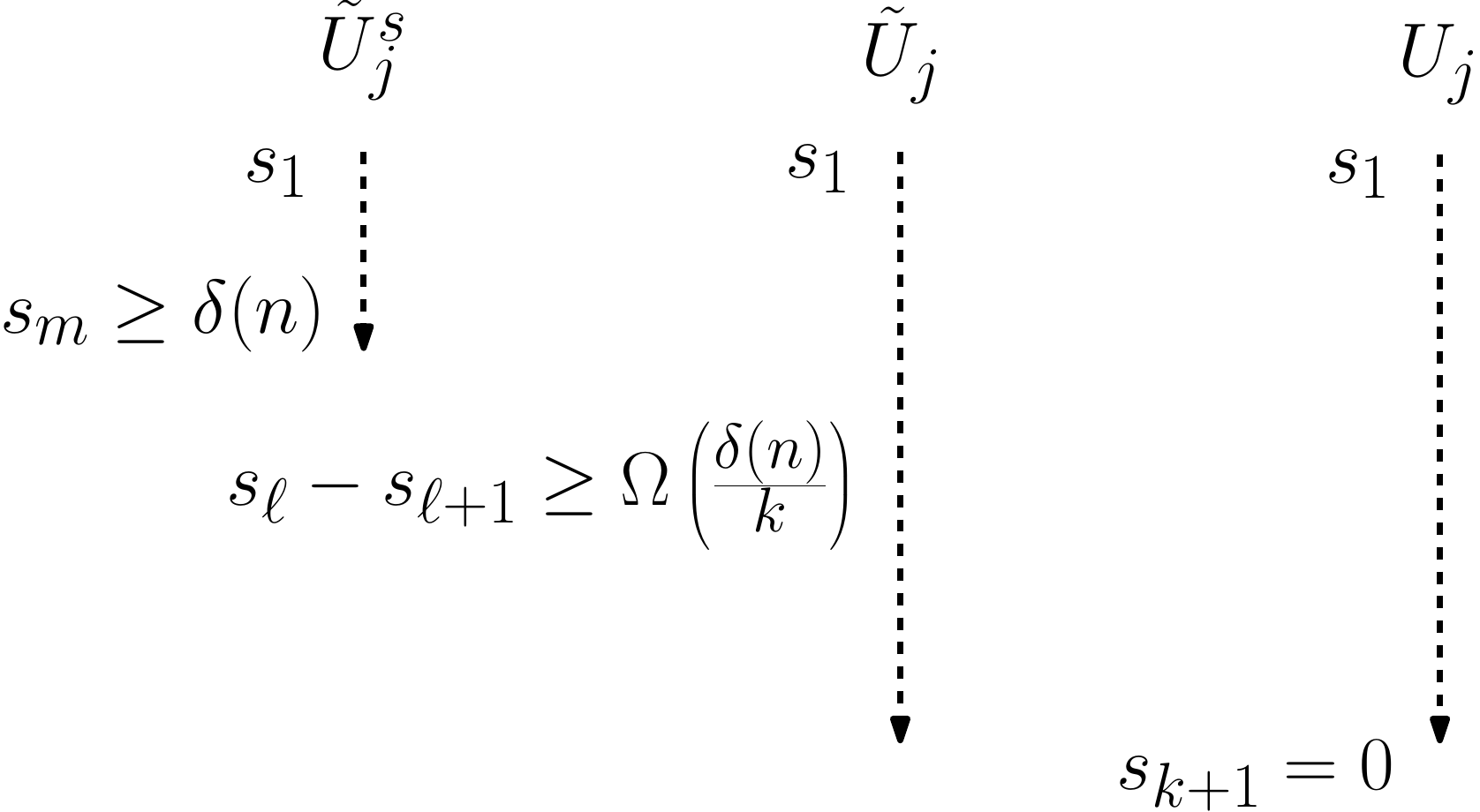}
  \end{center}
  \caption{Take $s_l$ as the last singular value included in $\tilde{U}_j$.}
  \label{fig:staggeredgap}
\end{figure}

Now crucially using this $\ell$ of Lemma \ref{lem:gapinU} to define our cut off point for the singular vectors included in $\tilde U_j$, and instantiating the sine-theta theorem, we obtain the following:

\begin{lemma}\label{lem:sinthetaTwice}
$\sin \theta(\tilde U_j^s , \tilde U_j) \leq O \left( \frac{\sigma k^{1.5} \sqrt{ \log n}}{\eps} \right)$ and $\sin \theta(\tilde U_j , U_j) \leq O \left( \frac{\sigma k^{1.5} \sqrt{ \log n}}{\eps} \right)$.
\end{lemma}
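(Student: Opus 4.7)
The plan is to apply the sine-theta machinery (Theorem~\ref{thm:sintheta} / Theorem~\ref{evendist}) twice, once for each of the two comparisons in the lemma. In both applications the bound will take the generic form $\|\text{perturbation}\| / \gamma$, where $\gamma$ is an appropriate singular-value gap supplied by Lemma~\ref{lem:gapinU} (or its corollary), and the perturbation norm is supplied either by the noise bound from random matrix theory (for $\tilde U_j$ vs.\ $U_j$) or by the sampling concentration bound (for $\tilde U_j^s$ vs.\ $\tilde U_j$). Substituting $\delta(n_j)=c\eps\sqrt{n_j/k}$ should make both resulting bounds collapse to the same quantity $O(\sigma k^{1.5}\sqrt{\log n}/\eps)$.

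For $\sin\theta(\tilde U_j, U_j)$, I would apply Theorem~\ref{evendist} directly, with $X = P_j$, $\tX = \tilde P_j = P_j + T$, and taking $m=\ell$. Since $P_j$ lives in the $k$-dimensional subspace $U$, we have $s_{k+1}(P_j)=0$, so the available gap is just $s_\ell(\tilde P_j)$. By Lemma~\ref{lem:gapinU}, $s_\ell(\tilde P_j) \geq \Omega(\delta(n_j)/k)$, since $s_{\ell+1}(\tilde P_j)\ge 0$ and the gap between consecutive singular values at index $\ell$ is at least $\Omega(\delta(n_j)/k)$. For the noise magnitude, the submatrix version of our random matrix bound (Lemma~\ref{lem:subsetNorm}, assumed at the start of the analysis) gives $\|T\|\le O(\sigma\sqrt{n_j \log n})$ with high probability. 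Dividing yields
\[
\sin\theta(\tilde U_j, U_j) \;\le\; \frac{O(\sigma\sqrt{n_j\log n})}{\Omega(\delta(n_j)/k)} \;=\; O\!\left(\frac{\sigma k^{1.5}\sqrt{\log n}}{\eps}\right),
\]
after canceling $\sqrt{n_j}$ using $\delta(n_j)=c\eps\sqrt{n_j/k}$.

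For $\sin\theta(\tilde U_j^s,\tilde U_j)$, I would pass to the $d\times d$ Gram matrices $A=\tfrac{n_j}{r}(\tilde P_j^s)^\tran \tilde P_j^s$ and $B=\tilde P_j^\tran \tilde P_j$, whose top eigenspaces coincide with $\tilde U_j^s$ and $\tilde U_j$ respectively (right singular vectors of a matrix $M$ are the eigenvectors of $M^\tran M$, and rescaling by a positive factor does not change eigenvectors). Now apply Wedin (Theorem~\ref{thm:sintheta}) with $X=B$, $Z=A$, perturbation $Y=A-B$, and top-indices $m\le \ell$. The perturbation bound is Corollary~\ref{col:preciseConcentration}, giving $\|Y\|\le n_j\cdot O(\eps\sigma\sqrt{\log n}/k^{1.5})$, and the gap is Corollary~\ref{col:gapinU}: $\gamma = s_m(A) - s_{\ell+1}(B) \ge \Omega(\delta(n_j)^2/k^2) = \Omega(\eps^2 n_j/k^3)$. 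Plugging in and simplifying collapses $n_j$ and one factor of $\eps$, leaving $O(\sigma k^{1.5}\sqrt{\log n}/\eps)$. The main technical nuisance (the closest thing to a real obstacle) is the Gram-matrix reframing: one must verify that passing from comparing right-singular subspaces of rectangular matrices of different row counts to comparing eigenspaces of the square p.s.d.\ matrices is legitimate, and that the singular-value thresholding in Algorithm~\ref{iter} (stated in terms of $\tilde P_j^s$ and $\delta(r)$) translates into $s_m(A)\ge \delta(n_j)^2$ after the $n_j/r$ rescaling — both of which follow from straightforward bookkeeping with $\delta(r)=c\eps\sqrt{r/k}$.
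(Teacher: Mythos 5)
Your proposal is correct and follows essentially the same two-step route as the paper: apply the $\sin\theta$ theorem once with perturbation $T$ and gap from Lemma~\ref{lem:gapinU} to bound $\sin\theta(\tilde U_j,U_j)$, and once at the level of the rescaled Gram matrices $\tfrac{n_j}{r}(\tilde P_j^s)^\tran\tilde P_j^s$ and $\tilde P_j^\tran\tilde P_j$ with perturbation from Corollary~\ref{col:preciseConcentration} and gap from Corollary~\ref{col:gapinU}. The paper leaves the Gram-matrix reframing implicit in its second calculation; you make it explicit (right singular vectors of $M$ equal eigenvectors of $M^\tran M$, positive rescaling preserves eigenspaces, singular-value thresholding translates via $\tfrac{n_j}{r}\delta(r)^2=\delta(n_j)^2$), which is a helpful clarification but not a different method.
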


\begin{proof}
First by Lemma \ref{lem:gapinU}, we have  $s_l(\tilde P_j) - s_{k+1}(P_j) \geq O \left( \frac{\delta(n)}{k} \right)$ and recall by Lemma \ref{lem:subsetNorm}, $\norm{\tilde P_j - P_j} = O(\sigma \sqrt{n_j \log n})$. So instantiating the sine-theta theorem, 
Theorem \ref{evendist}:

\begin{align*}
\sin \theta(\tilde U_j, U_j) &=  \frac{\norm{\tilde P_j - P_j}  }{s_l(\tilde P_j) - s_{k+1} (P_j) } \\
&= O  \left(\frac{\sigma \sqrt{n_j  \log n} }{ \delta(n_j) / k} \right) \\
&= O \left(\frac{\sigma \sqrt{n_j \log n}}{ \eps \sqrt{n_j} / k^{1.5} }  \right)  = O \left( \frac{\sigma k^{1.5} \sqrt{ \log n}}{ \eps }  \right).
\end{align*}

Similarly for $\sin \theta(\tilde U_j^s, \tilde U_j)$, we upper bound  $\|\tfrac{n_j}{r}(\tilde P_j^s)^\tran \tilde P_j^s - \tilde P_j^\tran \tilde P_j\|$ by Corollary 
\ref{col:preciseConcentration}. We lower bound the ``gap" in singular values using Corollary 
\ref{col:gapinU}, and hence by instantiating the sine-theta theorem:

\begin{align*}
\sin \theta(\tilde U_j^s, \tilde U_j) &= O \left(\frac{ \eps n_j \sigma \sqrt{\log n} / k^{1.5}}{ \delta(n_j)^2/k^2} \right) \\
&= O \left( \frac{\eps n_j \sigma \sqrt{\log n} / k^{1.5}}{ \eps^2 n_j / k^3} \right) 
= O \left( \frac{ \sigma k^{1.5} \sqrt{\log n}}{\eps} \right). 
\end{align*}
\end{proof}

Since $\sin \theta$ is concave in the right regime, Lemma \ref{lem:sinthetaTwice} now gives us as a simple corollary the main claim of this subsection:

\begin{equation}\label{eq:overallSinTheta}
\sin \theta(\tilde U_j^s, U_j) \leq O \left(\frac{\sigma k^{1.5} \sqrt{\log n}}{\eps} \right).
\end{equation}

\subsubsection{Analysis: Noise inside the PCA space}

We now show that the noise vector $t_i$ of each point $\tilde
p_i=p_i+t_i$ has a small component inside $\tilde{U}_j^s$.  We use the
$\sin\theta$ bound in Eqn. \eqref{eq:overallSinTheta} for this. (Tighter analysis is possible directly via the randomness of the
vector $t_i$ on the first iteration, but conditioning of points selected together at each iteration of our algorithm makes this complicated at latter stages.)

Following Remark~\ref{rem:noisePerpendicular}, we shall assume that
noise $t_i$ is perpendicular to $U$.  Define $V\in \R^{d \times k}$ as
the projection matrix onto the space $U$, so e.g.\ $t_iV$ is the
zero vector, and define analogously $\tilde{V}_j^s \in \R^{d \times
  m}$ to be the projection matrix onto the $m$-dimensional space
$\tilde{U}_j^s$. 
\begin{lemma} \label{lem:noisebound}
 $\norm{t_i \tilde{V}_j^s} \leq \norm{t_i} \cdot \sin\theta(\tilde{U}_j^s,U)$.
\end{lemma}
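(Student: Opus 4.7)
The plan is a short geometric argument combining two ingredients: (i) by Remark~\ref{rem:noisePerpendicular} we may assume $t_i \perp U$, and (ii) the definition of $\sin\theta$ in Section~\ref{sec:sin-theta} already gives a uniform bound on $\dist(\cdot, U)$ for every vector in $\tilde U_j^s$. No new spectral input about $\tilde U_j^s$ is required beyond its closeness to $U$.

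Let $x \in \R^d$ denote the orthogonal projection of $t_i$ onto $\tilde U_j^s$. Since $\tilde V_j^s$ encodes this projection with respect to an orthonormal basis of $\tilde U_j^s$, lengths are preserved and $\norm{t_i \tilde V_j^s} = \norm{x}$. By the defining property of orthogonal projection, $t_i - x \perp \tilde U_j^s \ni x$, so
\[
  \langle t_i, x\rangle = \langle x, x\rangle + \langle t_i - x, x\rangle = \norm{x}^2.
\]

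For the upper bound on the same inner product, let $y \in U$ be the orthogonal projection of $x$ onto $U$. The remark following the definition of $\sin\theta$ in Section~\ref{sec:sin-theta}, applied to $x \in \tilde U_j^s$, gives
\[
  \norm{x - y} \;=\; \dist(x, U) \;\le\; \norm{x}\cdot \sin\theta(\tilde U_j^s, U).
\]
Since $t_i \perp U$ and $y \in U$, we have $\langle t_i, y\rangle = 0$, and Cauchy--Schwarz yields
\[
  \langle t_i, x\rangle \;=\; \langle t_i, x - y\rangle \;\le\; \norm{t_i}\cdot \norm{x - y} \;\le\; \norm{t_i}\cdot \norm{x}\cdot \sin\theta(\tilde U_j^s, U).
\]

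Combining the two bounds for $\langle t_i, x\rangle$ gives $\norm{x}^2 \le \norm{t_i}\cdot \norm{x}\cdot \sin\theta(\tilde U_j^s, U)$; dividing by $\norm{x}$ (the claim is vacuous if $\norm{x} = 0$) completes the proof. There is no real obstacle here; the only point to be slightly careful about is the identification $\norm{t_i \tilde V_j^s} = \norm{x}$, which is why we insist on interpreting $\tilde V_j^s$ via an orthonormal basis of $\tilde U_j^s$.
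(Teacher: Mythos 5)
Your proof is correct and takes essentially the same route as the paper's. Both arguments rely on the assumption $t_i\perp U$, take a vector in $\tilde U_j^s$ pointing along the projection of $t_i$, split it into its component in $U$ and its component orthogonal to $U$, kill the $U$-component against $t_i$, and bound the orthogonal component by $\sin\theta(\tilde U_j^s,U)$ via Cauchy--Schwarz; the only cosmetic difference is that the paper normalizes to a unit direction $x$ at the outset and writes $\norm{t_i\tilde V_j^s}=x^\tran t_i$ directly, whereas you keep $x$ as the (unnormalized) projection, compute $\langle t_i,x\rangle$ in two ways, and cancel a factor of $\norm{x}$ at the end.
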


\begin{proof}
Let $x$ be a unit vector in the direction of $t_i \tilde{V}_j^s$, 
namely, $x = \frac{t_i \tilde{V}_j^s}{\|t_i \tilde{V}_j^s \|}$.  
This implies $\|t_i \tilde{V}_j^s \|=  x^{\tran} t_i$. 
Now decompose $x \in \tilde{U}_j^s$ as $x= \sqrt{1 - \beta^2} u + \beta  v$ 
for unit vectors $u \in U$ and $v \perp U$ and some $\beta\geq0$.
Then $\beta=d(x,U) \leq \sin\theta(\tilde{U}_j^s, U)$,
and we conclude
\begin{equation*} 
  \|t_i \tilde{V}_j^s \| 
  = x^{\tran} t_i
  = \sqrt{1 - \beta^2}  u^{\tran} t_i   + \beta  v^{\tran} t_i
  =0 + \beta  v^{\tran} t_i
  \leq \norm{t_i}\cdot \sin\theta(\tilde{U}_j^s, U). 
\qedhere
\end{equation*}
\end{proof}

\begin{corollary}\label{cor:noisebound}
 
For every point $p_i$ and iteration $j$,
$ \norm{t_i \tV_j^s} 
  \leq O \left(\frac{1}{\eps} \sigma^2 k^{1.5} \sqrt{d \log n } \right)$.
\end{corollary}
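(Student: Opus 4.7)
The corollary is essentially a plug-and-chug combination of three ingredients that have already been established in the excerpt: Lemma~\ref{lem:noisebound}, the sine-theta bound \eqref{eq:overallSinTheta}, and the noise-magnitude concentration \eqref{eq:noisebound}. The plan is simply to feed quantitative bounds for $\|t_i\|$ and $\sin\theta(\tilde U_j^s, U)$ into the inequality $\|t_i\tV_j^s\|\le \|t_i\|\cdot\sin\theta(\tilde U_j^s, U)$ supplied by Lemma~\ref{lem:noisebound}.

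First I would bound $\|t_i\|$. By \eqref{eq:noisebound}, which holds with high probability for every $i$ simultaneously, we have $\|t_i\|^2 \le \sigma^2 d + 0.0001\eps^2$. Under the standing assumption $\sigma \gg \eps/\sqrt d$ made right after the statement of Theorem~\ref{thm:mainfour}, the additive term $0.0001\eps^2$ is dominated by $\sigma^2 d$, yielding $\|t_i\|\le O(\sigma\sqrt d)$.

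Next I would upgrade the sine-theta bound \eqref{eq:overallSinTheta} from $\sin\theta(\tilde U_j^s,U_j)$ to $\sin\theta(\tilde U_j^s,U)$. Since the original points $P_j\subseteq P$ all lie in $U$, the PCA subspace $U_j$ of $P_j$ is contained in $U$, so for any unit $x\in \tilde U_j^s$, $\dist(x,U)\le \dist(x,U_j)$, and therefore $\sin\theta(\tilde U_j^s,U)\le \sin\theta(\tilde U_j^s,U_j) \le O\!\left(\sigma k^{1.5}\sqrt{\log n}/\eps\right)$.

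Finally, plugging both bounds into Lemma~\ref{lem:noisebound} gives
\begin{equation*}
\|t_i \tV_j^s\|
\;\le\; \|t_i\|\cdot \sin\theta(\tilde U_j^s,U)
\;\le\; O(\sigma\sqrt d)\cdot O\!\left(\tfrac{\sigma k^{1.5}\sqrt{\log n}}{\eps}\right)
\;=\; O\!\left(\tfrac{1}{\eps}\,\sigma^2 k^{1.5}\sqrt{d\log n}\right),
\end{equation*}
as desired. The only mild subtlety (not really an obstacle) is the union bound: \eqref{eq:noisebound} must hold simultaneously for all $p_i$, and the event from Lemma~\ref{lem:sinthetaTwice} (hence \eqref{eq:overallSinTheta}) must hold across all iterations $j$; both were already declared high-probability events to be conditioned on for the duration of the analysis, so no additional probabilistic work is needed here.
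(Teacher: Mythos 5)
Your proof is correct and follows the same route as the paper: apply Lemma~\ref{lem:noisebound} with $\norm{t_i}\le O(\sigma\sqrt d)$ from Eqn.~\eqref{eq:noisebound} and the $\sin\theta$ bound from Eqn.~\eqref{eq:overallSinTheta}. The paper silently substitutes $U$ for $U_j$ when invoking \eqref{eq:overallSinTheta}; your explicit remark that $U_j\subseteq U$ (since $P_j\subset U$) gives $\sin\theta(\tilde U_j^s,U)\le\sin\theta(\tilde U_j^s,U_j)$ is a small but welcome extra justification of that step.
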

\begin{proof}
Substitute into Lemma \ref{lem:noisebound} the bounds 
$\norm{t_i} \leq O (\sigma \sqrt{d } )$ from Eqn. \eqref{eq:noisebound} and 
from Eqn. \eqref{eq:overallSinTheta}, \\ $\sin\theta(\tilde{U}_j^s, U)   \leq O \left(\frac{1}{\eps} \sigma k^{1.5} \sqrt{ \log n} \right)$.
\end{proof}

We note that according to our model parameters in Theorem \ref{thm:mainfour}, this implies that $ \norm{t_i \tV_j^s}  \leq c \eps$ for a small constant $c$ that depends only on the choice of constant in our model, and we shall use this assumption henceforth.

\subsubsection{Analysis: Projection of the data into the PCA space}
\label{sec:compare}

We now show that the component of a data point $\tilde p_i$ inside the
PCA space $\tilde U_j^s$ of some iteration $j$, typically recovers
most of the ``signal'', i.e., the unperturbed version $p_i$.  More
precisely, we compare the length seen inside the PCA space
$\norm{\tilde{p}_i \tV_j^s}$ with the original length
$\norm{p_i}$. While the upper bound is immediate, the lower bound
holds only {\em on average}.

\begin{lemma}
\label{lem:rightside}
W.h.p., for all $\tilde p_i \in \tP_j$,\, 
$ \norm{\tilde{p}_i \tilde{V}_j^s }^2 - \norm{p_i}^2 
  \leq  d \sigma^2 + 0.0001 \eps^2$. 
\end{lemma}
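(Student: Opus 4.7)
The plan is to observe that this is the easy (upper bound) half of the comparison between $\norm{\tilde p_i \tilde V_j^s}^2$ and $\norm{p_i}^2$, which essentially follows from Pythagoras plus the concentration of the noise norm. The point is that projection onto a subspace can only shrink a vector, so it suffices to bound $\norm{\tilde p_i}^2$ from above; there is no need to use any information about $\tilde V_j^s$ itself.

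More concretely, I would proceed in three short steps. First, since $\tilde V_j^s$ is an orthogonal projection matrix, $\norm{\tilde p_i \tilde V_j^s}^2 \le \norm{\tilde p_i}^2$. Second, recall from Remark~\ref{rem:noisePerpendicular} that we may assume $t_i \perp U$, and $p_i \in U$, so $p_i^\tran t_i = 0$ and the Pythagorean identity gives
\begin{equation*}
  \norm{\tilde p_i}^2 = \norm{p_i + t_i}^2 = \norm{p_i}^2 + \norm{t_i}^2.
\end{equation*}
Third, invoke the concentration bound~\eqref{eq:noisebound}, which holds with high probability simultaneously for all $i$ and which states that $\norm{t_i}^2 \le \sigma^2 d + 0.0001\eps^2$. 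Chaining these three inequalities yields
\begin{equation*}
  \norm{\tilde p_i \tilde V_j^s}^2 - \norm{p_i}^2 \le \norm{t_i}^2 \le d\sigma^2 + 0.0001\eps^2,
\end{equation*}
as required, with the high-probability event being exactly the one already assumed at the start of the analysis (so no new union bound is needed).

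I do not anticipate any real obstacle here, which is consistent with the remark in the statement's surrounding text that ``the upper bound is immediate''; the content of the lemma is really the framing and not the calculation. The genuinely interesting direction, comparing $\norm{p_i}^2$ to $\norm{\tilde p_i \tilde V_j^s}^2$ from below, is the one that will require the $\sin\theta$ machinery and will only hold on average over $\tilde p_i \in \tP_j$, and that will be handled in a subsequent lemma rather than here.
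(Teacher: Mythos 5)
Your proof is correct and matches the paper's argument exactly: projection contracts norms, $\norm{\tilde p_i}^2 = \norm{p_i}^2 + \norm{t_i}^2$ by Pythagoras (using $t_i \perp U$), and then the noise concentration bound~\eqref{eq:noisebound} finishes the job. No differences worth noting.
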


\begin{proof}
Using Pythagoras' Theorem,
$ \norm{\tilde{p_i} \tilde{V} _j^s}^2 
  \leq \norm{\tilde{p_i} }^2 
  = \norm{p_i}^2 + \norm{ t_i }^2$,
and the lemma follows by the noise bound \eqref{eq:noisebound}.
\end{proof}

\begin{lemma}
\label{lem:leftside}
$ \sum_{\tilde p_i\in \tP_j} ( \norm{\tilde{p}_i \tV_j^s }^2 -  \norm{p_i}^2 ) 
  \geq  - k \delta(n_j)^2$. 
\end{lemma}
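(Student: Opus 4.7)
The plan is to rewrite the claimed inequality in matrix form and then exploit PCA optimality via a carefully chosen intermediate comparison subspace. Write $V \in \R^{d \times k}$ for the projection onto $U$. Since each $t_i$ is perpendicular to $U$ (Remark~\ref{rem:noisePerpendicular}) and $p_i \in U$, we have $\tilde{p}_i V = p_i$, so $\sum_i \norm{p_i}^2 = \norm{\tilde P_j V}_F^2 = \norm{P_j}_F^2$. Similarly $\sum_i \norm{\tilde{p}_i \tV_j^s}^2 = \norm{\tilde P_j \tV_j^s}_F^2$, so the claim is equivalent to $\norm{\tilde P_j \tV_j^s}_F^2 \ge \norm{P_j}_F^2 - k\delta(n_j)^2$.

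Next I would introduce the intermediate subspace $W_j$, the top-$m$ PCA subspace of $P_j$, which lies inside $U$. Because $W_j \subseteq U$ and the noise is perpendicular to $U$, $\tilde P_j W_j = P_j W_j$ and hence $\norm{\tilde P_j W_j}_F^2 = \norm{P_j W_j}_F^2 = \sum_{l=1}^m s_l(P_j)^2$. PCA optimality of $\tV_j^s$ for the sampled matrix gives $\norm{\tilde P_j^s \tV_j^s}_F^2 \ge \norm{\tilde P_j^s W_j}_F^2$. To transfer this from $\tilde P_j^s$ to $\tilde P_j$, note that for any $m$-dimensional projection $W$ we have $\norm{\tilde P_j W}_F^2 = \mathrm{tr}(W^\tran \tilde P_j^\tran \tilde P_j W)$, and by Corollary~\ref{col:preciseConcentration} together with the trace--spectral-norm inequality $|\mathrm{tr}(W^\tran A W)| \le m\norm{A}$, one obtains $|\tfrac{n_j}{r}\norm{\tilde P_j^s W}_F^2 - \norm{\tilde P_j W}_F^2| \ll m(\delta(n_j)/k)^2$. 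Applied to both $\tV_j^s$ and $W_j$, this yields $\norm{\tilde P_j \tV_j^s}_F^2 \ge \norm{\tilde P_j W_j}_F^2 - 2m(\delta(n_j)/k)^2 \ge \norm{P_j W_j}_F^2 - o(\delta(n_j)^2)$.

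The remaining task is to control $\norm{P_j}_F^2 - \norm{P_j W_j}_F^2 = \sum_{l=m+1}^k s_l(P_j)^2$ (using that $P_j$ has rank at most $k$), which reduces to bounding the tail singular values of $P_j$. By the threshold step $s_{m+1}(\tilde P_j^s) < \delta(r)$, so $\tfrac{n_j}{r}s_{m+1}(\tilde P_j^s)^2 < \delta(n_j)^2$. Applying Weyl's inequality together with Corollary~\ref{col:preciseConcentration} gives $s_{m+1}(\tilde P_j)^2 \le (1+o(1))\delta(n_j)^2$, and a further Weyl step using the noise-matrix bound $\norm{T} \le \eta(n_j) \le \delta(n_j)/16$ from the proof of Lemma~\ref{lem:gapinU} yields $s_{m+1}(P_j) \le (1+o(1))\delta(n_j)$. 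Since the tail sum has at most $k-m \le k$ terms, each bounded by $s_{m+1}(P_j)^2$, we get $\sum_{l=m+1}^k s_l(P_j)^2 \le k(1+o(1))\delta(n_j)^2$.

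Combining the two bounds yields $\norm{\tilde P_j \tV_j^s}_F^2 \ge \norm{P_j}_F^2 - k(1+o(1))\delta(n_j)^2 - o(\delta(n_j)^2)$, which gives the claim once the small constant $c$ in the definition of $\delta$ is chosen so that the $(1+o(1))$ factor is absorbed. The main obstacle I anticipate is bookkeeping the various perturbation and concentration errors so as to land exactly at the clean $-k\delta(n_j)^2$ bound: individually each slack is small, but the chain from $\tilde P_j^s$ to $\tilde P_j$ to $P_j$ must be traversed tightly, relying on the extra $1/k$ factors built into Corollary~\ref{col:preciseConcentration} and on $\sigma$ being small enough that $\norm{T} \ll \delta(n_j)$.
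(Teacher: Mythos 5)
Your argument is essentially sound and uses the same core tools as the paper (PCA optimality, the sampling concentration of Corollary~\ref{col:preciseConcentration}, and the singular-value threshold), but it routes through a different intermediate object. The paper's proof goes directly via the chain
$\sum_i\norm{p_i}^2 = \norm{\tilde P_j V}_F^2 \le \sum_{l=1}^k s_l^2(\tilde P_j) \approx \tfrac{n_j}{r}\sum_{l=1}^k s_l^2(\tilde P_j^s)$,
splits the last sum into the top $m$ terms (which transfer to $\sum_i\norm{\tilde p_i\tilde V_j^s}^2$ via the same trace/concentration argument you use) plus a tail of at most $k-1$ singular values each below the threshold $\delta(r)$, and so never needs to look at singular values of the un-noised $P_j$ at all. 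You instead introduce the top-$m$ PCA subspace $W_j$ of $P_j$, invoke PCA optimality of $\tilde V_j^s$ on the \emph{sampled} matrix against $W_j$, and then control the tail $\sum_{l>m}s_l(P_j)^2$ via two Weyl steps (from $\tilde P_j^s$ to $\tilde P_j$ and then to $P_j$). This is a legitimate alternative, but those two Weyl steps cost you: you get $s_{m+1}(P_j) \le \bigl(\sqrt{1+1/k^2}+O(1/k)\bigr)\delta(n_j)$, so the tail is at most roughly $(k-1)(1+\Theta(1/k))\delta(n_j)^2$ rather than $(k-1)\delta(n_j)^2$, and the slack does not vanish by tuning the constant $c$ inside $\delta$ (it multiplies $\delta(n_j)^2$, it doesn't add to it). Strictly speaking your chain lands at $-(1+\Theta(1/k))\,k\,\delta(n_j)^2$ rather than the stated $-k\delta(n_j)^2$. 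This does not matter for the downstream use (the iteration-count bound only needs $|X| \ll 0.0002\eps^2$, and both expressions are $\Theta(c^2\eps^2)$ with $c\le 0.001$), and the paper's own constants are also not razor-tight, but if you want the clean $-k\delta(n_j)^2$ you should follow the paper's route of thresholding the tail of $\tilde P_j^s$ directly rather than transporting the bound to $P_j$.
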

\begin{proof}
Let $V$ be the projection matrix into $U$ and $P_j$ the non-noised
version of $\tilde P_j$. Observe that $\tilde P_j V = P_j$, since the
noise is orthogonal to $U$.  Hence, by definition of PCA space
(Theorem~\ref{opt}):

\[
  \sum_{p_i\in P_j} \norm{p_i }^2 
  = \norm{ P_j }_F^2 
  = \norm{ \tilde{P}_j V }_F^2 
  \leq \sum_{l=1}^k s_l^2(\tilde{P_j}). 
\]
By Corollary \ref{col:preciseConcentration}, we further have that
$$
\tfrac{n_j}{r}\sum_{l=1}^k s_l^2(\tilde{P}_j^s) \ge \sum_{l=1}^k s_l^2(\tilde{P}_j) -  k\cdot \left( \tfrac{\delta(n_j)}{k} \right)^2 \geq \sum_i \norm{p_i }^2  -
k\cdot \left( \tfrac{\delta(n_j)}{k} \right)^2.$$
Or simply,
\begin{equation}\label{eq:firstside}
\tfrac{n_j}{r}\sum_{l=1}^k s_l^2(\tilde{P}_j^s) \ge \sum_i \norm{p_i }^2  -
k\cdot \left( \tfrac{\delta(n_j)}{k} \right)^2.
\end{equation}

 We also have:
\begin{align*}
\tfrac{n}{r}\sum_{l=1}^m s_l^2(\tilde{P}_j^s)
&=
\tfrac{n}{r}\|(\tilde{V}_j^s)^T (\tilde{P}_j^s)^\tran \tilde{P}_j^s \tilde{V}_j^s \|_F
=
\|(\tilde{V}_j^s)^T  \tfrac{n}{r} (\tilde{P}_j^s)^\tran \tilde{P}_j^s \tilde{V}_j^s \|_F \\
&=
\|(\tilde{V}_j^s)^T (\tilde{P}_j^\tran \tilde{P}_j + Z ) \tilde{V}_j^s \|_F
\le
\sum_{\tilde p_i\in \tilde P_j}\norm{\tilde{p}_i \tV_j^s }^2
+\|(\tilde{V}_j^s)^T Z \tilde{V}_j^s\|_F,
\end{align*}
where $Z$ has spectral norm at most $\delta(n_j)^2/k^2$, and hence
$\|(\tilde{V}_j^s)^T Z \tilde{V}_j^s\|_F\le \frac{k\delta(n_j)^2}{k^2} \le \frac{\delta(n_j)^2}{k}$.

Rearranging yields us:
$$
\tfrac{n}{r}\sum_{l=1}^m s_l^2(\tilde{P}_j^s) \le \sum_{\tilde p_i\in \tilde P_j}\norm{\tilde{p}_i \tV_j^s }^2 + \tfrac{\delta(n_j)^2}{k}.
$$

Finally, we have:
\begin{align*}
  \tfrac{n_j}{r}\sum_{l=1}^k s_l^2(\tilde{P}_j^s)  
  &= \tfrac{n_j}{r}\sum_{j=1}^m s_j^2(\tilde{P}_j^s)  + \tfrac{n_j}{r}\sum_{j=m+1}^k s_j^2(\tilde{P}_j^s) \\  
  &\leq \sum_{p_i\in \tilde P_j}\norm{\tilde{p}_i \tV_j^s }^2 + \tfrac{\delta(n_j)^2}{k} + \tfrac{n_j}{r}\sum_{j=m+1}^k s_j^2(\tilde{P}_j^s)   \\
 &\leq \sum_{p_i\in \tilde P_j}\norm{\tilde{p}_i \tV_j^s }^2 + \tfrac{\delta(n_j)^2}{k} + \tfrac{n_j}{r} (k-1) \delta(r)^2 \\
&\leq  \sum_{p_i\in \tilde P_j}\norm{\tilde{p}_i \tV_j^s }^2 + \tfrac{\delta(n_j)^2}{k} +  (k-1) \delta(n_j)^2 
\end{align*}
where we employed the threshold $s_j \leq \delta(r)$ for singular values taken by Algorithm~\ref{iter} with $j > m$, and that $\tfrac{n_j}{r}  \delta(r)^2 = \delta(n_j)^2$ by straightforward 
substitution of the formula $\delta(x) = \frac{c\eps \sqrt{x}}{ \sqrt{k}}$ .
The lemma follows by combining the above with Equation \ref{eq:firstside}.
\end{proof}

\subsubsection{Analysis: Number of iterations}

We now show that each iteration captures in $M_j$ a good fraction of
the remaining points, thereby bounding the number of iterations
overall. In particular, we give a lower bound on the number of indexes
$i$ such that $\tilde{p}_i$ is close to the $m$ dimensional PCA
subspace $\tilde{U}_j^s$, using results from Section
\ref{sec:compare}. Note that the square of this distance for a point
$\tilde{p}_i$ is precisely $\|\tilde{p}_i\|^2 - \|\tilde{p}_i
\tilde{V}\|^2$.  Let $X$ and $Y$ be quantities according to Lemmas
\ref{lem:rightside} and \ref{lem:leftside}, such that
\begin{align}
  & \norm{ \tilde{p}_i \tilde{V}_j^s }^2  - \|p_i \|^2  \leq  Y.   
\\
  Xn_j \leq & \sum_i (\norm{\tilde{p}_i \tilde{V}_j^s}^2  - \| p_i \|^2);
\end{align}

Now let $f$ be the fraction of $i$'s such that 
$\|\tilde{p}_i \tilde{V}_j^s\|^2 -\|p_i\|^2 \leq -0.0002 \eps^2$.
Then
\begin{equation}
Xn_j \leq \sum_i \|\tilde{p}_i \tilde{V}_j^s\|^2  - \sum_i \|p_i\|^2
\leq (1-f) n_j Y - 0.0002 f n_j  \eps^2.
\end{equation}
Rearrangement of terms gives us that $f \leq \frac{Y-X}{Y+ 0.0002 \eps^2}$.
By Lemma \ref{lem:leftside} , we can set $Xn_j = -k \delta^2 = - c^2 \eps^2
n_j = -0.00001 \eps^2 n_j$ and so $X \leq -0.00001 \eps^2$. And by Lemma \ref{lem:rightside}, we
have $Y \leq d \sigma^2 + 0.0001 \eps^2$. Elementary calculations now yield 
$$f \leq 1 - \Omega \left(\frac{\eps^2}{d
  \sigma^2} \right) \leq 1 - \Omega \left( \sqrt{ \frac{\log n}{d}}
\right).$$ 
(This last upper bound on $f$ can be made tighter as dependence on $\sigma$, but we opt for the looser bound which holds in our range of parameters for simplicity.) 
Now for the rest of the $(1-f)\ge \Omega\left(\sqrt{ \frac{\log n}{d}}\right)$
fraction of the points, the distance to the PCA subspace $\tilde{U}$
is, by Pythagoras' Theorem, 
\begin{align*}
  \|\tilde{p}\|^2 - \|\tilde{p} \tilde{V}_j^s\|^2  
  = \|p \|^2 + \|t \|^2 - \|\tilde{p} \tilde{V}_j^s\|^2 
  \leq \|t \|^2 + 0.0002 \eps^2.
\end{align*}
Since $\|t \|^2  \leq d \sigma^2 + 0.0001 \eps^2$, we get the required inequality that a large fraction of the points is within squared distance $d \sigma^2 + 0.001 \eps^2  = \Psi$.
It follows that the fraction of points captured by $\tilde{U}_j^s$,
i.e., in a single iteration,  
is at least $\Omega\left( \sqrt{\tfrac{\log n}{d}}\right)$,
which immediately implies a bound on the number of iterations, as follows.

\begin{lemma}
Algorithm \ref{iter} terminates in at most $ O \left(\sqrt{\frac{d}{\log n}}
\log n \right) = O(\sqrt{d \log n})$ iterations.
\end{lemma}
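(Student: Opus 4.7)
The plan is to use the capture-fraction bound already established in the preceding discussion. Right before the lemma statement, it is shown that in every iteration $j$, at least an $\Omega(\sqrt{\log n / d})$ fraction of the current point set $\tilde{P}_j$ is captured by $\tilde{U}_j^s$ and moved into $M_j$ (via the Pythagorean bound $\|\tilde{p}\|^2 - \|\tilde{p}\tilde{V}_j^s\|^2 \le \Psi$ for a $1-f$ fraction of points). Hence the size $n_j = |\tilde{P}_j|$ shrinks geometrically:
\[
  n_{j+1} \leq \Bigl(1 - \Omega\bigl(\sqrt{\tfrac{\log n}{d}}\bigr)\Bigr) n_j.
\]

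Iterating this recursion from $n_0 = n$ and using $(1-x)^t \leq e^{-xt}$, I get
\[
  n_j \leq n \cdot \exp\!\Bigl(-j \cdot \Omega\bigl(\sqrt{\tfrac{\log n}{d}}\bigr)\Bigr).
\]
The while-loop terminates as soon as $n_j \leq r$, and since $r = \poly(d,k,1/\sigma,1/\eps,\log n)$ is at most polynomial in $n$, it suffices to drive $n_j$ below $n^{1-\Omega(1)}$. Solving $\exp(-j\cdot \Omega(\sqrt{\log n/d})) \leq 1/n^{\Omega(1)}$ gives
\[
  j \;\geq\; \Omega\!\left( \frac{\log n}{\sqrt{\log n / d}} \right) \;=\; \Omega\bigl(\sqrt{d \log n}\bigr),
\]
so the loop exits within $O(\sqrt{d \log n})$ iterations, matching the claimed bound.

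The proof is essentially a geometric-decay argument and has no real obstacle, since the main technical work (showing the $\Omega(\sqrt{\log n/d})$ capture fraction) has already been carried out in the previous subsection via Lemmas \ref{lem:rightside} and \ref{lem:leftside}. The only minor point to be careful about is that the per-iteration capture bound was derived under the high-probability events of equation \eqref{eq:noisebound} and the spectral bound on noise submatrices; since we union-bounded these events over all iterations at the start of the analysis, the recursion above applies deterministically to every iteration, and the stated bound follows.
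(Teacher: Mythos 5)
Your proof is correct and takes exactly the route the paper intends: the per-iteration capture fraction $\Omega(\sqrt{\log n/d})$ established just before the lemma, combined with the standard geometric-decay computation, yields the $O(\sqrt{d\log n})$ bound (the paper itself omits the proof as "immediate"). One small inaccuracy to flag: the sentence claiming it "suffices to drive $n_j$ below $n^{1-\Omega(1)}$" is not actually justified, since $r=\poly(d,k,\log n,\ldots)$ could be far smaller than $n^{1-\Omega(1)}$; however, this does not affect the conclusion, because the same computation $n\exp(-j\cdot\Omega(\sqrt{\log n/d}))<1$ shows that after $O(\sqrt{d\log n})$ iterations the set is empty (hence certainly of size at most $r$), so the loop terminates within the claimed number of iterations regardless.
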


\subsubsection{Analysis: Correctness}
It now remains to show that the data structure that captures the
actual nearest neighbor $\tilde{p}^*$ will still report $\tilde{p}^*$
as the nearest neighbor to $\tilde q$ in the $k$-dimensional data
structure. Suppose $\tilde p^*$ has been captured in $j^{th}$
iteration, i.e., $\tilde p^*\in M_j$. For simplicity of exposition,
let $\tilde U=\tilde U_j^s$ and $\tilde V=\tilde V_j^s$.

Note that all distance computations for a query $\tilde{q}$ are of the
form $\|\tilde{q} \tilde{V} - \tilde{p} \tilde{V}\| =
\|(\tilde{q}-\tilde{p}){\tilde{V}}\|$, where $\tilde{p}$ is a point
that is close to $\tilde{U}$. Let $\tilde{q} = q + t_q$ and $\tilde{p}
= p + t$. Then we have for a point $\tilde{p}$:

\begin{align*}
\norm{(\tilde{q} - \tilde{p}) \tilde{V} } =  \norm{(q - p) \tilde{V} }  \pm  O \left( \norm{ t_q  \tilde{V} } +   \norm { t \tilde{V} }   \right)  
\end{align*}

Considering the noise of the query point $q$, by Corollary
\ref{cor:projec}, we have $\norm{ t_q \tilde{V} } \leq O(\sigma
\sqrt{k} + \sigma \sqrt{\log n})$ w.h.p. Similarly, considering the
noise $t$ of a point, by Corollary
\ref{cor:noisebound}, we have $\norm {t \tilde{V} } \leq O(\tfrac{\sigma^2}{\eps}
k^{1.5} \sqrt{d\log n})$. By the model specified in Theorem \ref{thm:mainfour}, we can set both these terms to be smaller than $0.01 \eps$. 
Hence we have:
\begin{equation}\label{eq:start}
\norm{ (\tilde{q} - \tilde{p}) \tilde{V} }=  \norm{ (q - p) \tilde{V} }  \pm  0.02 \eps.  
\end{equation}
Furthermore, we have $\sin \theta(\tilde U, U) \leq 0.01 \eps$.
We now decompose $U$ into two components. Let $U_{in}$ be the
projection of $\tilde{U}$ onto $U$, and $U_{out}$ be the space
orthogonal to $U_{in}$ but lying in $U$. See Figure \ref{fig:inandout}. 
We note that $U_{out}$ is
also orthogonal to $\tilde{U}$: otherwise some component of $U_{out}$
would lie in the projection of $\tilde{U}$ onto $U$, which is a
contradiction. Let $V_{in}$ and $V_{out}$ be the corresponding
projection matrices.  We likewise decompose each point $p$ as $p_{in}\in U_{in}$
and $p_{out}\in U_{out}$.

\begin{figure}[H]
  \begin{center}
    \includegraphics[scale = 0.4]{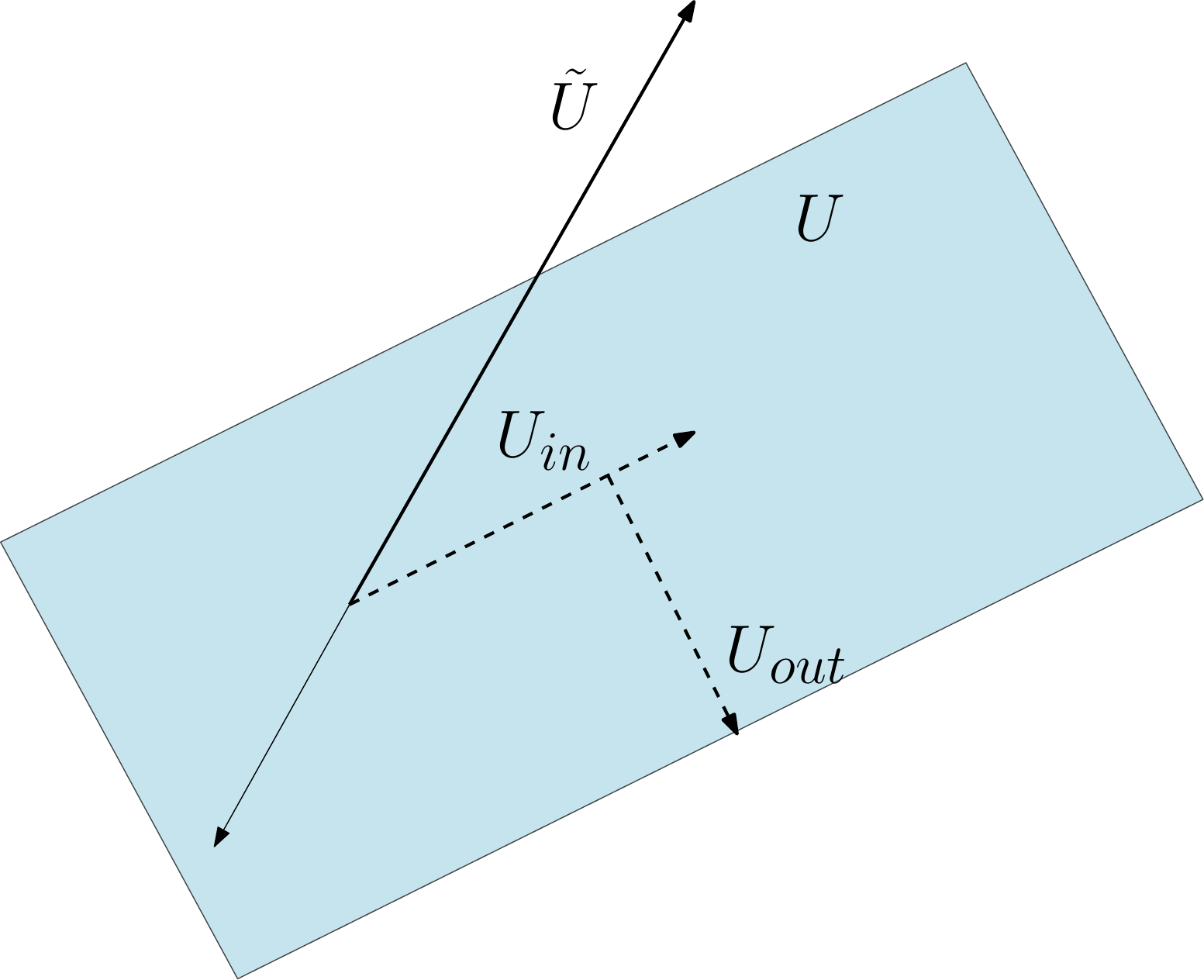}
  \end{center}
  \label{fig:inandout}
\caption{$U_{in}$ is projection of $\tilde U$ into $U$, and $U_{out}$ is the orthogonal complement of $U_{in}$ in $U$.}
\end{figure}

We make the following claim, which shows that bounding $\|p_{out}\|$
for all points in $M_j$ suffices for correctness of our algorithm.

\begin{lemma}\label{lem:smallout}
If $\norm{p_{out}} \leq 0.1\eps$ for all $p$ captured by subspace
$\tilde{U}$ (also capturing $p^*$), 
then $\tilde p^*$ remains a nearest neighbor to $\tilde q$
after projection to $\tilde{U}$. Also for any $p' \neq p^*$, we have that $\|(\tilde q - \tilde p') \tilde V \| \geq \left(1 + \tfrac{\eps}{8} \right) \|(\tilde q - \tilde p^*) \tilde V \|$.

\end{lemma}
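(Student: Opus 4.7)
The plan is to reduce to an estimate of the unperturbed projected distances $\norm{(q-p)\tilde V}$ via~\eqref{eq:start}, and then to analyze those through the orthogonal decomposition $U = U_{in} \oplus U_{out}$ together with $\sin\theta(\tilde U, U)\leq 0.01\eps$. The key observation is that since $U_{out}\perp \tilde U$, every $w\in U$ satisfies $w\tilde V = w_{in}\tilde V$; and a short $\sin\theta$ argument (the projection from $\tilde U$ to $U$ lands in $U_{in}$ and scales norms by at least $\sqrt{1-\sin^2\theta}$, hence every $w_{in}\in U_{in}$ has distance at most $O(\sin\theta)\norm{w_{in}}$ to $\tilde U$) shows $\norm{w\tilde V}^2 = (1\pm O(\eps^2))\norm{w_{in}}^2$.

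I then apply this to $w=q-p$. By Pythagoras within $U$, $\norm{w_{in}}^2 = \norm{q-p}^2 - \norm{(q-p)_{out}}^2$, where $(q-p)_{out} = q_{out} - p_{out}$. I first bound $\norm{q_{out}} \leq \norm{(q-p^*)_{out}} + \norm{p^*_{out}} \leq 1 + 0.1\eps$ using $\norm{q-p^*}\leq 1$ and the hypothesis on $p^*$. Then for any captured $p'\neq p^*$, expanding $\norm{q_{out}-p_{out}}^2$ gives
\[
\bigl|\norm{(q-p^*)_{out}}^2 - \norm{(q-p')_{out}}^2\bigr|
\leq 2\norm{q_{out}}\cdot\norm{p^*_{out}-p'_{out}} + \bigl|\norm{p^*_{out}}^2-\norm{p'_{out}}^2\bigr| \leq O(\eps).
\]
Combining with $\norm{q-p'}^2-\norm{q-p^*}^2\geq (1+\eps)^2-1 \geq 2\eps$ yields $\norm{(q-p')\tilde V}^2 - \norm{(q-p^*)\tilde V}^2 \geq 1.5\eps - O(\eps^2)$, which already proves the ``nearest neighbor preserved'' part.

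Finally, \eqref{eq:start} shifts each projected norm by at most $0.02\eps$, while $\norm{(\tilde q-\tilde p^*)\tilde V}\leq \norm{q-p^*}+0.02\eps\leq 1+0.02\eps$ is bounded. A squared-distance gap of $1.5\eps$ with $\norm{(\tilde q - \tilde p^*)\tilde V}$ of order at most $1$ translates (via $\sqrt{x^2+c\eps}-x = \Theta(\eps)$ for $x\leq 1$) into an additive distance gap of $\Omega(\eps)$, hence into a multiplicative ratio of at least $1+\eps/8$ once the $\pm 0.02\eps$ slack on both sides is absorbed.

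The main obstacle is that $q_{out}$ itself is \emph{not} assumed small: it can be of order $1$ because nothing in the model forces $q$ to lie close to $U_{in}$. The argument must therefore exploit that the dangerous $q_{out}$-dependent terms in $\norm{(q-p)_{out}}^2$ are essentially the same for $p^*$ and $p'$ and nearly cancel when one takes the difference, leaving only a residual proportional to $\norm{p^*_{out}-p'_{out}}\leq 0.2\eps$; this is precisely where the uniform hypothesis $\norm{p_{out}}\leq 0.1\eps$ over all points captured by $\tilde U$ (including $p^*$) enters crucially.
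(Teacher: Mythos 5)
Your proof is correct and reaches the same conclusion, but it handles the one genuinely delicate point --- that $q_{out}$ can be of order $1$, not $O(\eps)$ --- by a different mechanism than the paper's. You use the Pythagoras decomposition $\norm{q_{in}-p_{in}}^2 = \norm{q-p}^2 - \norm{q_{out}-p_{out}}^2$ and then expand $\norm{q_{out}-p_{out}}^2 = \norm{q_{out}}^2 - 2\langle q_{out},p_{out}\rangle + \norm{p_{out}}^2$, observing that the $\norm{q_{out}}^2$ term is identical for $p^*$ and $p'$ and drops out of the \emph{difference}, leaving only the cross term $2\langle q_{out},p'_{out}-p^*_{out}\rangle$ which is $O(\eps)$ by the hypothesis on $\norm{p_{out}}$. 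The paper instead uses the decomposition $\norm{q_{in}-p_{in}}^2 = \norm{q-p_{in}}^2 - \norm{q_{out}}^2$ and disposes of $\norm{q_{out}}^2$ via the monotonicity trick that subtracting a common nonnegative quantity from the numerator and denominator of a fraction $\geq 1$ only increases the fraction; this lets it pass directly to the ratio $\norm{q_{in}-p'_{in}}/\norm{q_{in}-p^*_{in}}\geq 1+\eps/4$ without any cross-term expansion. So the paper works multiplicatively with ratios of squared distances, while you work additively with differences and then convert the additive gap to a multiplicative one at the end via $\sqrt{a^2+c\eps}-a=\Omega(\eps)$ for $a\leq 1$. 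Both are valid; the paper's ratio trick is slightly slicker because it sidesteps the need to bound $\norm{q_{out}}$ at all, whereas you need the (easy) bound $\norm{q_{out}}\leq 1+0.1\eps$. One small caveat on your write-up: the intermediate claim ``$\norm{(q-p')\tilde V}^2-\norm{(q-p^*)\tilde V}^2\geq 1.5\eps-O(\eps^2)$'' uses the multiplicative error $(1\pm O(\eps^2))$ on $\norm{(q-p')_{in}}^2$, which can be arbitrarily large; as stated this is only literally valid after a (trivial) case split on whether $\norm{q-p'}$ is bounded by a constant, but either way the final ratio bound goes through, so this is a presentation nit rather than a gap.
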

\begin{proof}
Consider Equation \ref{eq:start}. 
\begin{align*}
\norm{ (\tilde{q} - \tilde{p}) \tilde{V} } &=  \norm{ (q - p) \tilde{V} }  \pm  0.02\eps
\\
&= \norm{(q_{in} - p_{in} + q_{out} - p_{out}) \tilde{V}} \pm 0.02\eps  \quad \text{(Decomposing $q$, $p$ into components in $U_{in}$ and $U_{out}$)}\\ 
&=  \norm{(q_{in} - p_{in}) \tilde{V} + (q_{out} - p_{out}) \tilde{V}} \pm 0.02\eps \\
&= \norm{(q_{in} - p_{in}) \tilde{V}}  \pm 0.02\eps \quad \text{(Since $U_{out} \perp \tilde{U}$)}
\\
&= (1  \pm 0.01\eps ) \norm{q_{in} - p_{in}} \pm 0.02\eps \quad \text{(Since $\sin \theta(U_{in}, \tilde{U}) \leq 0.01 \eps$)}
\end{align*}
To summarize these last calculations, we have:
\begin{equation}\label{eq:projbound}
\norm{ (\tilde{q} - \tilde{p}) \tilde{V} } = (1  \pm 0.01\eps ) \norm{q_{in} - p_{in}} \pm 0.02 \eps.
\end{equation}
Next note by Pythagoras:
\begin{equation}\label{eq:rearr}
\norm{q_{in} - p_{in}}^2 = \norm{q - p_{in}}^2 - \norm{q_{out}}^2.
\end{equation}
Also observe from the triangle inequality and the assumption of our lemma that $\norm{p_{out}} \leq 0.1\eps$, $\forall p \in P$ captured by the subspace, $\norm{q -
  p_{in}}=\norm{q - p} \pm \norm{p_{ out}} = \norm{q - p} \pm 0.1\eps$.

Hence, if $p^*$ is captured by the data structure, $\norm{q
  - p^*_{in}}^2 = (\norm{q-p^*} \pm 0.1\eps)^2 \leq
1 + 0.25 \eps$.  Similarly for $p' \neq
p^*$, we have $\norm{q - p'_{in}}^2 \geq (1+ \eps -0.1\eps)^2 \geq 1+1.8 \eps$. This gives:
\begin{align*}
 \frac{1 + 1.8 \eps }{ 1 + 0.25 \eps} \leq  \frac{\norm{q- p'_{in}}^2}{\norm{q - p^*_{in}}^2}  \leq \frac{\norm{q- p'_{in}}^2 - \norm{q_{out}}^2 }{\norm{q - p^*_{in}}^2   - \norm{q_{out}}^2} = \frac{\norm{q_{in}- p'_{in}}^2}{\norm{q_{in} - p^*_{in}}^2},
\end{align*}
where we crucially used in the second step the fact that
subtracting the same quantity from both numerator and denominator of a
fraction can only increase it. Some elementary algebraic manipulation shows then:

\begin{equation}\label{eq:inside}
\frac{\norm{q_{in}- p'_{in}}}{\norm{q_{in} - p^*_{in}}} \geq 1 + \frac{\eps}{4}.
\end{equation}

Now we lower bound $\norm{q_{in} -p'_{in}}$. We do so as follows:
\begin{align*}
  \norm{q_{in} -  p'_{in}}^2 - \norm{q_{in} - p^*_{in}}^2 = &\norm{q - p'_{in}}^2 -\norm{q - p^*_{in}}^2    &\text{   (By Equation \ref{eq:rearr}}) \\
 \geq &(1 + 0.9\eps)^2 - (1+0.1\eps)^2 \ge 1.6\eps.
\end{align*}
This implies:
\begin{equation}\label{eq:lblength}
 \norm{q_{in} -  p'_{in}} \geq 1.2 \eps
\end{equation}

 Finally, we come to our main claim of the ratio of  $ \norm{ (\tilde{q} -\tilde{p}') \tilde{V} }$ to $ \norm{ (\tilde{q} - \tilde{p}^*) \tilde{V} }$. By Equation \ref{eq:projbound}, this is at least:
$$
\frac{ \norm{ (\tilde{q} -
    \tilde{p}') \tilde{V} } }{ \norm{ (\tilde{q} - \tilde{p}^*)
    \tilde{V} }}
\ge \frac {(1 - 0.01\eps ) \norm{q_{in} -
    p'_{in}}- 0.02\eps}{(1 + 0.01\eps ) \norm{q_{in} -
    p^*_{in}} +0.02\eps}.$$
 Substituting the lower bound on $\frac{\norm{q_{in}- p'_{in}}}{\norm{q_{in} - p^*_{in}}}$ from Equation \ref{eq:inside}, and the lower bound on $\norm{q_{in}-p'_{in}}$ from Equation \ref{eq:lblength} completes our claim.
\end{proof}

Next we derive a sufficient condition for $\norm{ p_{out}}$ to be
bounded as desired. For a vector $a$, we define $a \tilde{V}$ and
$a^{\perp{\tilde{U}}}$ to be the components lying in and orthogonal to
subspace $\tilde{U}$ respectively.

The quantity of interest is $c_{p}$, defined as the cosine of
the angle between $(p_{in})^{\perp \tilde{U}}$ and $t^{\perp
  \tilde{U}}$, for a point $\tilde p=p_{in}+p_{out}+t$.

\begin{lemma}\label{lem:suff}
Decompose $p$ as $p = p_{in} + p_{out} + t$, where $t \perp U$. 
Suppose $c_p \leq C \frac{\eps}{\sigma \sqrt{d}} $ for
suitable choice of constant $C$.
Then $\norm{p_{out}} \leq 0.1\eps$ w.h.p.
\end{lemma}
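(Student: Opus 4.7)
The plan is to exploit the capture condition $\dist(\tilde p,\tilde U)\leq \sqrt{\Psi}$ with $\Psi=d\sigma^2+0.001\eps^2$, and to expand $\norm{\tilde p^{\perp \tilde U}}^2$ in a way that isolates $\norm{p_{out}}^2$. Writing $a=p_{in}^{\perp \tilde U}$ and $b=t^{\perp \tilde U}$, and using that $p_{out}\in U_{out}\perp \tilde U$ so that $p_{out}^{\perp \tilde U}=p_{out}$, we obtain $\tilde p^{\perp \tilde U}=a+p_{out}+b$. The first step will be to verify that both cross-terms involving $p_{out}$ vanish. Decomposing $a=p_{in}-p_{in}\tilde V$, the inner product $\langle p_{in},p_{out}\rangle$ vanishes by the definition $U_{in}\perp U_{out}$, while $\langle p_{in}\tilde V,p_{out}\rangle$ vanishes because $p_{in}\tilde V\in\tilde U$ and $p_{out}\perp\tilde U$. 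The identity $\langle b,p_{out}\rangle=0$ follows analogously, using $\langle t,p_{out}\rangle=0$ (since $t\perp U\supseteq \{p_{out}\}$) and $\langle t\tilde V,p_{out}\rangle=0$ as before.

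Expanding the squared norm and using the definition of $c_p$ to rewrite $\langle a,b\rangle=c_p\norm a\norm b$, this yields
\[
  \Psi \;\geq\; \norm{\tilde p^{\perp \tilde U}}^2 \;=\; \norm a^2+\norm{p_{out}}^2+\norm b^2+2c_p\norm a\norm b.
\]
The next step is to show that $\norm b^2$ already almost saturates the $d\sigma^2$ budget in $\Psi$, leaving only $O(\eps^2)$ slack. Eq.~\eqref{eq:noisebound} gives $\norm t^2\geq d\sigma^2-0.0001\eps^2$ w.h.p., and Corollary~\ref{cor:noisebound} together with the parameter assumption of Theorem~\ref{thm:mainfour} gives $\norm{t\tilde V}\leq O(c\eps)$. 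Hence $\norm b^2=\norm t^2-\norm{t\tilde V}^2\geq d\sigma^2-0.0002\eps^2$, so rearranging yields
\[
  \norm{p_{out}}^2 \;\leq\; 0.0012\eps^2-\bigl(\norm a^2+2c_p\norm a\norm b\bigr).
\]

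The final step handles the possibly negative last term by completing the square: $\norm a^2+2c_p\norm a\norm b=(\norm a+c_p\norm b)^2-c_p^2\norm b^2\geq -c_p^2\norm b^2$. Using $\norm b\leq \norm t\leq O(\sigma\sqrt d)$ from Eq.~\eqref{eq:noisebound}, together with the hypothesis $|c_p|\leq C\eps/(\sigma\sqrt d)$, gives $c_p^2\norm b^2\leq O(C^2)\eps^2$. Choosing $C$ small enough relative to the absolute constants that have accumulated then forces $\norm{p_{out}}^2\leq 0.01\eps^2$, whence $\norm{p_{out}}\leq 0.1\eps$ as claimed.

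The principal care in executing this plan lies in the orthogonality bookkeeping, namely the facts $U=U_{in}\oplus U_{out}$, $U_{out}\perp\tilde U$, and $t\perp U$, which together force all unwanted cross-terms to cancel. Once those are in place, the remainder is a clean scalar inequality driven by the noise concentration bound \eqref{eq:noisebound} and the $\sin\theta$-based projection bound of Corollary~\ref{cor:noisebound}, both of which have already been established.
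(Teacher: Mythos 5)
Your proposal is correct and follows essentially the same route as the paper: both start from the capture condition $\norm{\tilde p^{\perp\tilde U}}^2\le\Psi$, use the orthogonality of $p_{out}$ to $\tilde U$ and to $U_{in}$ plus $t\perp U$ to make the cross-terms involving $p_{out}$ vanish, and then control the remaining cross-term $\langle p_{in}^{\perp\tilde U},t^{\perp\tilde U}\rangle$ via the hypothesis on $c_p$ together with the concentration $\norm{t^{\perp\tilde U}}^2\approx d\sigma^2$. The only cosmetic difference is at the last step: the paper isolates a sufficient lower bound on $\norm{(p_{in}+t)^{\perp\tilde U}}^2$ and then minimizes $x+\alpha/x$, whereas you complete the square in $\norm{a}$ directly; these are algebraically the same optimization, and your version is arguably a bit more streamlined.
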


\begin{proof}
 We show first that to upper-bound $p_{out}$, it suffices to lower bound $\norm{
  \left(p_{in} + t \right)^{\perp \tilde{U}} }^2$ by $d \sigma^2 -
0.009 \eps^2$. Indeed, for captured points, we have by construction:
\begin{equation}\label{eq:condA}
\norm{\tilde{p}}^2 - \norm{\tilde{p} \tilde{V}}^2 \leq \Psi
\end{equation}
where $\Psi=d \sigma^2 + 0.001 \eps^2$.

We convert the inequality to our desired condition as follows: 
\begin{align*}
& \norm{\tilde{p}}^2 - \norm{\tilde{p} \tilde{V}}^2 \leq 
 d \sigma^2 + 0.001 \eps^2   \\
 & \norm{p_{out}}^2 + \norm{p_{in} + t}^2 - \norm{\tilde{p} \tilde{V}}^2  \leq d \sigma^2 + 0.001 \eps^2  \\ 
& \text{(Since by Pythagoras, $\norm{\tilde{p}}^2 = \norm{p_{out}}^2 + \norm{p_{in} + t}^2$) }  \\
 & \norm{p_{out}}^2 + \norm{(p_{in} + t) \tilde{V}}^2  + \norm{(p_{in} + t)^{\perp \tilde{U}}}^2  - \norm{\tilde{p} \tilde{V}}^2  \leq d \sigma^2 + 0.001 \eps^2   \\
&\text{(Decomposing $p_{in} + t$ orthogonal to and lying in subspace $\tilde{U}$) } \\
  & \norm{p_{out}}^2 + \norm{\tilde{p} \tilde{V}}^2  + \norm{(p_{in} + t)^{\perp \tilde{U}}}^2  - \norm{\tilde{p} \tilde{V}}^2  \leq d \sigma^2 + 0.001 \eps^2   \\ 
&\text{(Since $p_{out} \perp \tilde{V}$ by construction, hence $\tilde{p} \tilde V = (p_{in}+ t )\tilde{V}$) } \\
 & \norm{p_{out}}^2  + \norm{(p_{in} + t)^{\perp \tilde{U}}}^2   \leq d \sigma^2 + 0.001 \eps^2  \\
 & \norm{p_{out}}^2  \leq   0.001 \eps^2 +  d \sigma^2  - \norm{(p_{in} + t)^{\perp \tilde{U}}}^2  
\end{align*}
Clearly now if $\norm{
  \left(p_{in} + t \right)^{\perp \tilde{U}} }^2 \geq d \sigma^2 -
0.009 \eps^2$, then $\norm{p_{out}}^2 \leq 0.01 \eps^2$ and would complete our proof.

We now show how the bound on $c_p$ implies the required lower bound on $\norm{
  \left(p_{in} + t \right)^{\perp \tilde{U}} }^2$.
First note by the law of cosines, that 
\begin{equation}\label{eq:sum}
\norm{ (p_{in} + t)^{\perp \tilde U} }^2 = \norm{p_{in}^{\perp \tilde U}}^2 + \norm{t^{\perp \tilde{U}}}^2 - 2 \norm{p_{in}^{\perp \tilde U}} 
\norm{t^{\perp \tilde U}} c_p.
\end{equation}

Next note that $\norm{t^{\perp \tilde{U}}}^2 = d \sigma^2 \pm 0.001 \eps^2$ w.h.p. and a suitably small constant $c$ in our bound on $\sigma$ in the model parameters.  
This follows from decomposing $\norm{t}^2 = \norm{t^{\perp \tilde{U}}}^2 + \norm{t \tilde{V}}^2$ 
by Pythagoras, the concentration on $\norm{t}^2$ by Equation \ref{eq:noisebound} and the upper bound on $\norm{t \tilde{V}}^2$ by Corollary \ref{cor:noisebound}.   We now solve to find the desired condition on $c_p$.

\begin{align*}
&\norm{p_{in}^{\perp \tilde U }}^2 + \norm{t^{\perp \tilde U}}^2 - 2 \norm{p_{in}^{\perp \tilde U}}
\norm{t^{\perp \tilde U}} c_p   \geq d \sigma^2 - 0.009 \eps^2  \\
&\norm{p_{in}^{\perp \tilde U }}^2 + (d \sigma^2  \pm 0.001 \eps^2) - 2 \norm{p_{in}^{\perp \tilde U}} 
\norm{t^{\perp \tilde U}} c_p   \geq d \sigma^2 - 0.009 \eps^2   \\
&\norm{p_{in}^{\perp \tilde U }}^2 - 2 \norm{p_{in}^{\perp \tilde U}} 
\norm{t^{\perp \tilde U}} c_p   \geq - 0.008 \eps^2 \\
  & 2 \norm{p_{in}^{\perp \tilde U}} 
\norm{t^{\perp \tilde U}} c_p  \leq  \norm{p_{in}^{\perp \tilde U }}^2 + 0.008 \eps^2 \\
& c_p  \leq \frac{1}{2 \norm{t^{\perp \tilde U}} } \left( \norm{p_{in}^{\perp \tilde U }}  + 0.008 \left(\tfrac{\eps^2}{\norm{p_{in}^{\perp \tilde U }}}  \right)  \right).
\end{align*}
 
Noting that the minimum of $x + \frac{\alpha}{x}$ is $\sqrt{\alpha}$
for any fixed $\alpha$ and that $\norm{ t^{\perp \tilde{U}}}$ is $O(
\sqrt{d} \sigma)$ we obtain that $c_p \leq C \frac{\eps}{\sqrt{d}
  \sigma}$ is a sufficient constraint for suitable choice of constant
$C$.
\end{proof}

The final component is to prove that $c_p$ is indeed small. Since the
vector $t$ is random, this is generally not an issue: $t^{\tilde U}$
and $(p_{in})^{\tilde U}$ will be independent for all points $\tilde
p\in \tilde P_j\setminus \tilde P_j^s$ (but not for the sampled points
-- this is the reason they are never included in $M_j$). However, in
subsequent iterations, this may introduce some conditioning on $t_i$,
so we need to be careful and argue about all iteration ``at once''. In
fact, we show a result slightly stronger than that required by Lemma
\ref{lem:suff} (within the parameters of our model):

\begin{lemma}
Fix a ``horizon'' $j$, and a point $\tilde p=p+t$ that has not been
sampled into any set $\tilde P_l^s$ for all iterations $l\le j$
($\tilde p$ may or may not have been captured at some iteration $l\le
j$). Then $c_p\le O\left(\tfrac{\sqrt{\log n}}{\sqrt{d}}\right)$ at
iteration $j$, with high probability.
\end{lemma}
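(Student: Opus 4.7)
The plan is to exploit a coupling with a ``phantom'' variant of the algorithm that simply removes $\tilde p$ from the input dataset. Under the conditioning that $\tilde p$ is never in any sampled set $\tilde P_l^s$ for $l \le j$, I expect the actual and phantom algorithms to produce identical sequences of PCA subspaces up through iteration $j$. Since the phantom algorithm never consults $t$, the resulting subspace $\tilde U_j^s$ is independent of $t$ under this conditioning, and the claim reduces to a standard Gaussian concentration bound in $\tilde U_j^{s\perp}$.

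\textbf{Step 1 (coupling the two algorithms).} I feed the same sampling random bits and same other noise vectors to both algorithms, and induct on the iteration index $l \le j$. The key observation is that with-replacement uniform sampling from $\tilde P_l$ conditioned on no trial selecting $\tilde p$ is exactly uniform with-replacement sampling from $\tilde P_l \setminus \{\tilde p\}$, by the independence of trials. Inductively this forces the phantom pool to equal $\tilde P_l \setminus \{\tilde p\}$, the sampled subset to match the phantom's, and hence the PCA subspace $\tilde U_l^s$ produced by the actual algorithm to agree with that of the phantom. Since the phantom algorithm consults only the noise of the other points and the sampling bits, $\tilde U_j^s$ and the derived vector $a := (p_{in})^{\perp \tilde U_j^s}$ are measurable with respect to a $\sigma$-algebra independent of $t$.

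\textbf{Step 2 (Gaussian concentration).} Now condition on all randomness except $t$, so that $\tilde U_j^s$ and $a$ are fixed. Under this conditioning the marginal law of $t$ is still $N_d(0, \sigma^2 I_d)$, so $t^{\perp \tilde U_j^s}$ is isotropic Gaussian in the $(d-m)$-dimensional subspace $\tilde U_j^{s\perp}$. Since $a \perp \tilde U_j^s$, one has $\langle a, t^{\perp \tilde U_j^s}\rangle = \langle a, t\rangle \sim N(0, \sigma^2\|a\|^2)$, whence $|\langle a, t\rangle| \le O(\sigma\|a\|\sqrt{\log n})$ with high probability. Combined with the lower bound $\|t^{\perp \tilde U_j^s}\| \ge (1-o(1))\sigma\sqrt d$ following from Equation~\eqref{eq:noisebound} and Corollary~\ref{cor:noisebound}, this yields
\[
   c_p \;=\; \frac{|\langle a,\, t^{\perp \tilde U_j^s}\rangle|}{\|a\|\cdot\|t^{\perp \tilde U_j^s}\|} \;\le\; O\!\left(\tfrac{\sqrt{\log n}}{\sqrt d}\right).
\]
Integrating out the conditioning preserves the high-probability bound.

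\textbf{Main obstacle.} The principal delicacy lies in Step 1: the conditioning event ``$\tilde p$ is never sampled in iterations $\le j$'' is itself $t$-dependent, since the capture decisions of $\tilde p$ at earlier iterations (which affect whether $\tilde p\in \tilde P_l$) depend on $t$. I must verify carefully that the natural inductive coupling is exact, so that no residual dependence of $\tilde U_j^s$ on $t$ sneaks through. This is ultimately a consequence of the independence of with-replacement sampling trials; the captured-vs-not-captured status of $\tilde p$ itself does not affect $\tilde U_l^s$ in the phantom algorithm, because there $\tilde p$ is simply absent from the pool.
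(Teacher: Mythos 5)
Your proof is correct and follows essentially the same approach as the paper's: regard the algorithm through iteration $j$ as if run on the dataset with $\tilde p$ removed, so that $\tilde U_0^s,\ldots,\tilde U_j^s$ are determined without consulting $t$, and then apply Gaussian concentration for $\langle (p_{in})^{\perp\tilde U_j^s},t\rangle$ and for $\|t^{\perp\tilde U_j^s}\|$. Your write-up merely makes the coupling with the ``phantom'' run explicit and flags the subtlety that the conditioning event is itself $t$-dependent, both of which the paper states more tersely.
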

\begin{proof}
The assumption on not having been sampled means that we can think of
running the algorithm, disallowing $\tilde p$ to be in the sample. In
this situation, we can run the entire algorithm, up to iteration $j$,
without $\tilde p$ --- it does not affect the rest of the points in
any way. Hence, the algorithm can sample sets $\tilde P_0^s,\ldots
\tilde P_j^s$, and fix spaces $\tilde U_0^s,\ldots \tilde U_j^s$,
without access to $\tilde p$. Furthermore, for each $l=0\ldots j$, the
vector $p_{in}^{\perp\tilde U_l^s}$ is some vector, {\em independent}
of the noise $t$. Hence we can ``reveal'' the vector $t$, after having
fixed all vectors $p_{in}^{\perp\tilde U_l^s}$, for $l=0\ldots j$. The
vector $t$ will have angle $O(\tfrac{\sqrt{\log n}}{\sqrt{d}})$ with
all of them, with high probability. Note that, at this moment, it does
not actually matter whether $\tilde p$ was captured early on or
not. (Again, $t$ is admittedly conditioned via bounds on $\|T\|$ and
$\|t\|$, but since these are ``whp'' events, they do not affect the
conclusion.)
\end{proof}

\subsubsection{Algorithm performance}

We now remark on the resulting parameters of the algorithm. 

Processing an iteration of the preprocessing stage takes
$O(rd^2+d^3+ndk)=O(nd^2)$ time for: computing $\tilde P_j^s$, the PCA
space, and $M_j$ respectively. Hence, over $O(\sqrt{d\log n})$
iterations, together with preprocessing of the $k$-dimensional NNS
data structures, we get preprocessing time $O((nd^2+d^3+\Fprep)\sqrt{d
  \log n})$.

Space requirement is essentially that of $O(\sqrt{d\log n})$ instances
of $k$-dimensional NNS data structure, plus the space to store
$O(\sqrt{d\log n})$ spaces $\tilde U_j^s$, and the left-over set $R$.

The query time is composed of: computing the projections into
$O(\sqrt{d\log n})$ subspaces, querying the $k$-dimensional NNS data
structures, and computing the distances to left-over points in
$R$. Overall, this comes out to $O(dk\cdot \sqrt{d\log n}+\sqrt{d\log
  n}\cdot F_{\text{query}}+d|R|)$.

\section{PCA tree}\label{sec:pcaTree}

We now present our second spectral algorithm, which is closely related
to the PCA tree \cite{Sproull-pcaTree, verma2009spatial}. We first
give the algorithm and then present its analysis. Overall, we prove
the following theorem.

\begin{theorem}\label{thm:pcaTree}
Consider the Gaussian-noise model \eqref{eq:tildeP}-\eqref{eq:pstar},
and assume its parameters satisfy $\sigma<\kappa\cdot \min\left\{
\tfrac{\eps}{\sqrt{k\log n}},\ \tfrac{ \eps}{\sqrt{k}\sqrt[4]{d\log
    n}}\right\}$, for sufficiently small constant $\kappa>0$.  There
exists a data structure that preprocesses $\tilde P$, and then given
the query $\tilde q$, returns the nearest neighbor $\tilde p^*$
\whp\footnote{The probability is over the randomness from the model.}
And \whp the query time is $(k/\eps)^{O(k)}\cdot d^2$, the space
requirement is $O(nd)$, and the preprocessing time is $O(n^2d+nd^3)$.
\end{theorem}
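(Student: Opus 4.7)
The plan is to analyze a data structure that, at each tree node, extracts the top \emph{centered} PCA direction of the subset of points assigned to that node, partitions these points into $O(1/\eps)$ slabs perpendicular to this direction, and recurses on each slab until the tree reaches depth $2k$. A de-clumping preprocessing step merges or discards points that are too close together, which is needed to ensure that at every node the top singular value of the centered pointset meaningfully exceeds the noise level (otherwise a dense cluster of points could pin the top PCA direction to a noise direction). At query time we descend the tree, at each node following every slab whose boundary is within $O(\eps)$ of $\tilde q$'s projection onto that node's direction; this yields branching factor $O(1/\eps)$ per level and an overall query cost of $(k/\eps)^{O(k)}\cdot d^2$ after an exhaustive scan at each reached leaf of size $\poly(d)$.

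To prove correctness at a single node, let $P_\nu$ and $\tilde P_\nu = P_\nu + T_\nu$ be the centered original and perturbed point subsets. Matrix concentration (as used in Section~\ref{sec:largeGaussian}) gives $\|T_\nu\| = O(\sigma\sqrt{n_\nu+d})$ \whp, and Wedin's $\sin\theta$ theorem (Theorem~\ref{evendist}) implies the top singular direction $\tilde v$ of $\tilde P_\nu$ lies within $\sin\theta$ distance $O\!\left(\sigma\sqrt{d\log n}/s_1(P_\nu)\right)$ of some direction $v\in U$. De-clumping is what guarantees $s_1(P_\nu) = \Omega(1)$, so under the model's hypothesis on $\sigma$ this distance is $o(\eps/\sqrt{k})$. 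Consequently the projections of $\tilde q$ and $\tilde p^*$ onto $\tilde v$ differ by at most $|\iprod{q-p^*}{v}| + O(\sigma\sqrt{k\log n})$, with the second term being $o(\eps)$ by assumption. This shows that $\tilde p^*$ lies in one of the $O(1/\eps)$ slabs adjacent to $\tilde q$'s projection, so the query's slab-following rule retains $\tilde p^*$ in some explored branch.

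The crux of the proof, and the step that requires the most care, is bounding the tree depth by $2k$. The naive ``no repeated direction'' argument is too weak because $U$ contains $\exp(k)$ pairwise well-separated directions. The fix is to maintain along each root-to-node path an orthonormal frame $v_1,\ldots,v_j$ of the directions used so far, and at each child to replace $\tilde P_\nu$ by its projection onto $\mathrm{span}(v_1,\ldots,v_j)^\perp$ before running PCA. Since each $v_i$ is within $\sin\theta$ distance $o(1/\sqrt{k})$ of $U$, after $k$ such projections the subspace of $U$ surviving in the orthogonal complement has collapsed to zero, so the residual centered points have norms dominated by the $O(\sigma\sqrt{d}) \ll \eps$ noise component and each slab contains at most one point. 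The factor of two in $2k$ absorbs slack for the cumulative angular error of the frame.

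The main obstacle is establishing the per-node $\sin\theta$ bound \emph{uniformly} over all nodes of the tree: this requires (i) that de-clumping keep $s_1(P_\nu)$ bounded below at every node while not moving the query's nearest neighbor, and (ii) a union bound over the at most $(k/\eps)^{O(k)}$ nodes of the tree against the randomness of $T$, where one must be careful because the partition at a node depends on the noise seen so far. Once this is handled, the remaining bookkeeping is routine: preprocessing is dominated by one centered SVD per tree node and by the $O(n^2)$-time de-clumping step, space is $O(nd)$ for storing the partition and frames, and the query cost follows by multiplying the per-level branching factor across the $2k$ levels.
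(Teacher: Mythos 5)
Your overall architecture matches the paper's (a centered-PCA tree with orthogonalization, de-clumping, depth $2k$, and slab-following at query time), but the step you call the crux --- bounding the depth --- is where the proposal breaks down, and a couple of the supporting bounds are miscalibrated.

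\paragraph{The depth argument is wrong.} You argue that after orthogonalizing against $k$ accumulated directions, each within $\sin\theta$ distance $o(1/\sqrt k)$ of $U$, the residual signal in $U$ collapses and ``the residual centered points have norms dominated by the $O(\sigma\sqrt d)\ll\eps$ noise component, and each slab contains at most one point.'' But in this model $\sigma\sqrt d$ is \emph{not} $\ll\eps$: the theorem permits $\sigma$ up to $\kappa\eps/(\sqrt k\sqrt[4]{d\log n})$, so $\sigma\sqrt d$ can be $\Theta(\eps\,d^{1/4}/(\sqrt k(\log n)^{1/4}))\gg\eps$, and indeed the entire point of the paper is that the noise magnitude dominates the inter-point distances. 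So the tree does not bottom out because points become noise-sized; the slabs of width $\theta\ll\sigma\sqrt d$ would still contain many points if the PCA direction had a significant noise component. The paper's actual argument is orthogonal to yours: it proves inductively (Claim~\ref{clm:projection}) that \emph{every} direction $v_x$ chosen by the algorithm has projection onto $U^\perp$ at most $\gamma=O(\sigma\sqrt{k\log n}/\eps)\le 1/(4k)$, and then applies Alon's bound on nearly-orthogonal vectors to show that one cannot have $2k+1$ mutually orthogonal vectors all that heavy in a $k$-dimensional $U$ --- hence the tree cannot reach depth $2k+1$. Your ``collapse'' heuristic does not give this; in particular, because the $v_i$ are only approximately in $U$, the surviving component of $U$ inside $\mathrm{span}(v_1,\dots,v_k)^\perp$ need not be zero, and you offer no mechanism for controlling it.

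\paragraph{De-clumping bound and the $\sin\theta$ denominator.} You assert that de-clumping guarantees $s_1(P_\nu)=\Omega(1)$ and then plug this into the $\sin\theta$ bound $O(\sigma\sqrt{d\log n}/s_1(P_\nu))$. That denominator is too weak: since $\|T_\nu\|=\Theta(\sigma\sqrt{n_\nu\log n})$ scales with $\sqrt{n_\nu}$, you need $s_1$ of the signal to scale comparably, which is exactly what the paper's de-clumping threshold $\lambda_c(n_x)=\Theta(\eps\sqrt{n_x/k})$ provides. With $s_1=\Omega(1)$ only, your $\sin\theta$ bound blows up as $n_\nu$ grows and cannot be made $o(1/\sqrt k)$ under the theorem's hypotheses. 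Relatedly, the slab width must be $\theta=\Theta(\eps/k^{3/2})$, giving $O(k^{3/2}/\eps)$ slabs per level (not $O(1/\eps)$): the cumulative snapping displacement after $\le 2k$ levels is $\sqrt{2k}\,\theta$, and the movement's projection onto $U^\perp$ is bounded by $2k\theta\gamma$; these error terms must stay an order below $\eps$ for the de-clumping and correctness arguments to close, which forces the extra $k^{3/2}$ factor and is also what gives the stated $(k/\eps)^{O(k)}$ query time rather than $(1/\eps)^{O(k)}$.
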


The algorithm itself is deterministic.

\subsection{Algorithm description}
The algorithm constructs one-space partitioning tree hierarchically,
where each tree node is associated with a subset of the pointset
$\tilde P$.  We start with the root of the tree, associated with all
$n$ points $\tilde P$.  Now at each tree node $x$, we take the
pointset associated with $x$, termed $\tilde P^{in}_x$.  First, we
perform a process called ``de-clumping'', which just discards part of
the dataset, to obtain a set $\tilde P_x\subseteq \tilde P^{in}_x$. We
describe this process at the end.

The main operation at a node is to take the top centered-PCA direction
of $\tilde P_x$, termed $v_x$. By centered-PCA we mean 
subtracting from each vector in $\tilde P_x$ their average $a=\tfrac{1}{|\tilde P_x|}\sum_{\tilde p\in \tilde P_x} \tilde p$, and then taking
the top PCA direction. Now, let
$\theta\triangleq\tfrac{\eps}{1000k^{3/2}}$ and let $\Theta$ be the
partition of the real line into segments of length $\theta$, namely
$\Theta=\{[\theta i, \theta (i+1))\mid i\in \Z\}$. Then we partition
  $\tilde P_x$ into parts depending on which segment from $\Theta$ the
  projection of a point $\tilde p\in \tilde P_x$ onto $v_x$ falls
  into. Then, we orthogonalize with respect to $v_x$, namely,
  transform each point $\tilde p\in \tilde P_x$ into $\tilde p'=\tilde
  p-\langle\tilde p, v_x\rangle v_x$. 
For each non-empty segment of $\Theta$ we produce a child of $x$ associated with the points that fall into that segment, 
and repeat recursively on it. 
We stop once the current tree node has at most $d$ points associated with it.

During a query, we follow the tree into all the buckets (slabs) that
intersect a ball of radius $1+\eps/2$ around $\tilde q$. In each
leaf, compute the exact distance from $q$ to all points associated to 
that leaf. Finally, report the closest point found.

We now describe the de-clumping procedure that is done at each
node. We compute the top centered-singular value of $\tilde
P^{in}_x$. If this value is at least $\lambda_c=\lambda_c(|\tilde
P^{in}_x|)\triangleq\tfrac{\eps}{16}\sqrt{|\tilde P^{in}_x|/k}$, then
set $\tilde P_x\triangleq \tilde P_x^{in}$. Otherwise, find the closest
pair of points in $\tilde P_x^{in}$, and let $\delta$ denote their squared-distance. Remove all the pairs of points in $\tilde P_x^{in}$ that
have squared-distance at most $\delta+\eps^2/2$, to obtain $\tilde
P_x$. (The removal is iterative, proceeding in arbitrary order.)

\subsection{Analysis: Tree Depth}

The key to the analysis is to show that our PCA tree has depth at most $2k$. 
The rest of analysis will follow as we show in later sections.

In the analysis, we use the centered-PCA directions. For this purpose,
we first define the centering operation $c(A)$ for a set/matrix of
points: $c(A)\triangleq A-\tfrac{1}{|A|}\sum_{p\in A}p$. Then the
centered singular value, denoted $\|A\|_c$, is $\|c(A)\|$. Note that
the norm still satisfies the triangle inequality.

\begin{lemma}[Tree Depth]\label{lem:depth}
The constructed PCA tree has depth at most $2k$.
\end{lemma}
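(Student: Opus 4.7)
The plan is to argue that along every root-to-leaf path $x_0, x_1, \ldots$ in the PCA tree, the centered top PCA directions $v_{x_0}, v_{x_1}, \ldots$ chosen on the path behave almost like an orthonormal family inside the $k$-dimensional subspace $U$: each $v_{x_i}$ sits at a very small angle to $U$, and successive $v_{x_i}$'s are made orthogonal by the algorithm's explicit orthogonalization step. Because $U$ has dimension $k$, at most about $k$ such directions can exist; the factor of $2$ in the statement absorbs the slack from the nonzero angles and any de-clumping operations.

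First I would verify that at every non-leaf node $x$ on the path, the top centered singular value of the de-clumped set $\tilde P_x$ is at least $\lambda_c(|\tilde P_x^{in}|)=\tfrac{\eps}{16}\sqrt{|\tilde P_x^{in}|/k}$. When de-clumping was not triggered this is by definition of the threshold, and when it was triggered the pruning rule (removing pairs within squared distance $\delta+\eps^2/2$ of the minimum) either leaves a core with large relative centered variance or shrinks the set to at most $d$ points, in which case $x$ is already a leaf. Hence every internal node on the path truly has centered singular value at least $\lambda_c$.

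Next I would apply the sine--theta theorem (Theorem \ref{evendist}, in centered form) together with the random-matrix bound of Lemma \ref{lem:subsetNorm}, which yields that the centered noise restricted to $\tilde P_x$ has spectral norm $O(\sigma\sqrt{|\tilde P_x|\log n})$ \whp. The resulting angle bound is $\sin\theta(v_x,U)\le O(\sigma\sqrt{k\log n}/\eps)=:\mu$, which by the hypothesis of Theorem \ref{thm:pcaTree} can be driven below $1/(100k)$. Decomposing $v_x=v_x^U+v_x^\perp$ with $v_x^U\in U$ and $v_x^\perp\perp U$ then gives $\norm{v_x^\perp}\le\mu$ and $\norm{v_x^U}\ge\sqrt{1-\mu^2}$.

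Finally, the algorithm orthogonalizes every surviving point against each $v_{x_i}$ already used on the path, so the subsequent centered PCA direction $v_{x_{\ell+1}}$ lies in $\spn\{v_{x_0},\ldots,v_{x_\ell}\}^\perp$. Splitting the identity $\langle v_{x_{\ell+1}},v_{x_i}\rangle=0$ into its $U$ and $U^\perp$ parts yields $|\langle v_{x_{\ell+1}}^U,v_{x_i}^U\rangle|\le\mu^2$. Thus the vectors $\{v_{x_i}^U\}_{i\le L}$ are nearly unit vectors in the $k$-dimensional space $U$ with pairwise inner products at most $\mu^2\ll 1/k$, so their $L\times L$ Gram matrix is strictly diagonally dominant and therefore has rank $L$; since $L\le\dim U=k$, the path has length at most $k\le 2k$. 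The main obstacle I foresee is the de-clumping bookkeeping in the first step: showing that the pruning rule always preserves enough centered mass to maintain the sine--theta conditions, or else forces the leaf condition $|\tilde P_x^{in}|\le d$. Once that is handled, the remaining two steps combine cleanly to give the claimed depth bound.
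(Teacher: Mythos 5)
Your high-level strategy --- bound the angle each $v_x$ makes with $U$, use the algorithm's orthogonalization to get a mutually orthogonal family, and then use a rank argument inside the $k$-dimensional $U$ --- is the same as the paper's. The final rank step is fine (Gram-matrix diagonal dominance in place of the paper's appeal to Alon's rank bound, and both work). But there is a genuine gap in the middle step, where you write that Theorem~\ref{evendist} ``in centered form'' together with Lemma~\ref{lem:subsetNorm} directly yields $\sin\theta(v_x,U)\le O(\sigma\sqrt{k\log n}/\eps)$.

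The difficulty is that at an internal node $x$ the ``clean'' points no longer lie in $U$: each point $\tilde p^x$ has been displaced by the accumulated orthogonalization moves $m^x + m^x_p$ from the ancestor nodes, and because each ancestor direction $v_y$ has a nonzero component in $U^\perp$, the matrix you would feed into the $\sin\theta$ theorem as the low-rank signal (the de-noised version of $\tilde P_x$) is \emph{not} contained in $U$. To apply $\sin\theta$ you must split the displacement $M_x$ into its $U$-part and its $U^\perp$-part, and the whole point of the paper's inductive Claim~\ref{clm:projection} is to control $\|c(M_x^\perp)\|$: the bound $\alpha\le\gamma$ at level $\ell$ \emph{relies on} the bound $\alpha\le\gamma$ at levels $<\ell$ to keep $M_x^\perp$ small. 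Your proposal never introduces these movements, so the $\sin\theta$ application is circular --- you implicitly assume the angle bound at ancestor nodes to establish it at $x$, but the induction is not set up. Relatedly, the statement that the de-clumping step ``either leaves a core with large relative centered variance or shrinks the set to at most $d$ points'' is what Claim~\ref{clm:lambda} proves, and its proof also uses the inductive $\gamma$-bound to argue that post-declumping points are $\Omega(\eps)$-separated \emph{in their $U$-components}; simply asserting it does not close the loop. Until you set up and propagate the induction on the out-of-$U$ component of the orthogonalization moves, the middle step is unproven.
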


\begin{proof}
We first analyze the effect of orthogonalization on the points $\tilde
p$. Fix some node $x$ at a level $1\le l\le 2k$, and some point
$\tilde p=p+t_p$ reaching it.
Call $\tilde p^x\in \tilde P_x^{in}$ as its version at node $x$, after
the anterior orthogonalizations at the ancestor nodes. Also, define
$n_x\triangleq |\tilde P_x|$. 

We view each step of orthogonalization as two displacement
processes. If we have orthogonalized the point with respect to some
vector $v$, this is equivalent to snapping (projecting) the point
$\tilde p^x$ to the hyperplane defined by $\{z\in \R^d\mid
zv=\theta\cdot \lfloor \tfrac{\tilde p^x\cdot v}{\theta}\rfloor\}$, and
then moving the hyperplane towards the origin. Most importantly, all
points from node $x$ going to the same child will be all snapped to
the same hyperplane. The snapping to the hyperplane moves the point
$\tilde p^x$ by a vector of norm at most $\theta$. Note that, while the
algorithm also moves the hyperplane to pass through the origin, this
does not change the relative distances of the points in this
hyperplane.

Thus we can write each point $\tilde p^x$ reaching node $x$ as $\tilde
p^x=\tilde p+m^x+m_p^x$, where $m^x$ is the sum of all the hyperplane
moves (and is dependent on the node $x$ only), and $m^x_p$ which is
the sum of the ``snapping moves'' and depends on the actual point. We
observe that $m^x_p$ has small norm, and, in particular
$\|m_p^x\|^2\le l\cdot \theta^2\le 2k\theta^2$, since each move is in
an orthogonal direction with respect to the previous moves.

Below we assume that Lemma \ref{lem:subsetNorm} holds. Also, for any
two points $p_1,p_2$, the norm of difference of the noises is
$\|t_{p_1}-t_{p_2}\|^2=2\sigma^2d\pm 0.1\eps^2$ according to Corollary
\ref{cor:projec} for $\sigma\ll \eps/\sqrt[4]{d\log n}$.

The main part of the proof is to prove that the top PCA direction
$v_x$ at each node $x$ is close to $U$. We prove this by induction
over levels.
\begin{claim}[Induction hypothesis]
\label{clm:projection}
Consider some node $x$ at level $l$, which contains at least
$d=\Omega(\log n)$ points. Let $v_x$ be the top centered-PCA direction of
$\tilde P_x$.  The projection of $v_x$ onto $U^\perp$ is at most
$\gamma=O(\sigma\sqrt{\log n}\cdot \sqrt{k}/\eps)$.
\end{claim}

Before proving the induction hypothesis, we need to show an additional
claim, which characterizes the result of de-clumping in the current
node $x$: essentially that the top PCA direction is heavy in the
space $U$. For the claim below, we assume that Claim
\ref{clm:projection} is true at all levels {\em above} the current
node $x$.

For a node $x$, we define helper sets/matrices $P_x, M_x, T_x$ as
follows. First, consider points $\tilde P_x$, take their non-noised
versions living in $U$ (as in the model), and move using the vector
$m^x$; this gives the set $P_x$. Also, let $T_x$ be the noise vectors
of $\tilde P_x$. Define matrix $M_x$ as being composed of movements
$m_p^x$ for all points $p$ in $\tilde P_x$. Note that $\tilde
P_x=P_x+M_x+T_x$, and that $\|T_x\|_c\le\|T_x\|\le \eta(n_x)\le
\lambda_c(n_x)$, where $\eta(n_x)$ is the function from Lemma
\ref{lem:subsetNorm}.

\begin{claim}
\label{clm:lambda}
Suppose we performed the de-clumping step on $\tilde P_x^{in}$, to obtain
$\tilde P_x$.  For $v_x$ the top centered-PCA direction of $\tilde P_x$, we
have that $\|c(P_x) v_x\|\ge \lambda_c-\eta(n_x)$.
\end{claim}
\begin{proof}
Suppose the top singular value of $c(\tilde P_x^{in})$ is at least
$\lambda_c$ (in which case no de-clumping is done and $\tilde
P_x=\tilde P^{in}_x$). 
Hence, considering $P_x=\tilde P_x-(\tilde P_x-P_x)$, which also implies $c(P_x)=c(\tilde P_x)-c(\tilde P_x-P_x)$, we have 
$$\|c(P_x)v_x\|\ge \|c(\tilde P_x)v_x\|-\|c(\tilde P_x-P_x)v_x\|\ge
\lambda_c-\eta(|\tilde P_x|),$$ since $\|c(\tilde
P_x-P_x)v_x\|=\|c(T_x)v_x\|\le \eta(|\tilde P_x|)$ by Lemma
\ref{lem:subsetNorm}, and the fact that $M_xv_x=0$.

Otherwise, the algorithm throws out some points from $\tilde
P_x^{in}$.  Define $P^{in}_x$ similarly to $P_x$: take original
(non-moved, no noise) versions of the points from $\tilde P_x^{in}$,
plus the overall movement $m^x$.  In this situation, there must be two
points $p_1,p_2\in P^{in}_x$ such that $\| p_1-p_2\|\le \eps/4$:
otherwise, the top singular value of $\tilde
P_x^{in}=P_x^{in}-(P_x^{in}-\tilde P_x^{in})$ would be at least
$\|P_x^{in}\|_c-\eta(|\tilde P_x^{in}|)\ge \tfrac{\|p_1-p_2\|}{2}\cdot
\sqrt{|\tilde P^{in}_x|/k}-\eta(|\tilde P_x^{in}|)\ge \lambda_c$, a
contradiction.

We now want to characterize the minimal distance $\delta$ in $\tilde
P_x^{in}$. Note that the projection of $m_{p_1}^x$ and $m_{p_2}^x$
into $U^\perp$ is at most $2k\theta\gamma$, since each of the basis
vectors of $m_p^x$ has projection into $U^\perp$ at most
$\gamma$. Hence, the square of the component of $\tilde p_1-\tilde
p_2$ in $U^\perp$ is equal to:
$$
(\|t_{p_1}-t_{p_2}\|\pm 2k\theta\gamma)^2
=
2\sigma^2d\pm 0.1\eps^2\pm (2k\theta\gamma)^2\pm 5\sigma\sqrt{d}\cdot 2k\theta\gamma
=
2\sigma^2d\pm 0.1\eps^2\pm 0.01\eps^2\pm 0.01\eps^2,
$$
for $\sigma\ll \eps/\sqrt{\log n}$ and $\sigma\ll\tfrac{\eps}{\sqrt[4]{d\log n}}$.
Thus, for $\tilde p_1=p_1+t_1$ and $\tilde p_2=p_2+t_2$:
\begin{equation}\label{eqn:deltaLB}
\delta
=
\|\tilde p_1-\tilde p_2\|^2
\ge
(\|t_1-t_2\|- 2k\theta\gamma)^2
\ge 
2\sigma^2d-0.12\eps^2.
\end{equation}

After the de-clumping, the distance between any two distinct points
$p',p''\in P_x$, with noise vectors $t',t''$ respectively, must satisfy:
$$
(\|p'-p''\|+ 2\sqrt{2k}\theta)^2\ge \delta+\eps^2/2 - (\|t'-t''\|+ 2k\theta\gamma)^2
\ge
\eps^2/2-2\cdot 0.12\eps^2 \ge \eps^2/4.
$$

Hence $\|p'-p''\|\ge \eps/2-0.01\eps>\eps/4$, which means that
$\|\tilde P_x\|_c\ge \lambda_c$ (as already argued above). Hence we
can apply the same argument as above, this time for $\tilde P_x$
instead of $\tilde P^{in}_x$.
\end{proof}

We now prove the induction hypothesis, namely Claim
\ref{clm:projection}, for the current node $x$.

\begin{proof}[Proof of Claim \ref{clm:projection}]
Let $\tilde P_x$ be the points contained in $x$.  By Claim \ref{clm:lambda}, we have
$\lambda_{PM}\triangleq \|P_x+M_x\|_c\ge
\|c(P_x+M_x)v_x\|=\|c(P_x)v_x\|\ge\lambda_c-\eta(|\tilde P_x|)\ge \lambda_c/2$. 

Decompose the top centered-PCA direction $v_x$ of $\tilde P_x$ as $v_x=\sqrt{1-\alpha^2}
u+\alpha u'$, where $u\in U$ and $u'\perp U$ are of norm
1. Note that $\alpha$ is exactly the projection of $v_x$ onto $U^\perp$.

We will bound $\alpha$ by lower and upper bounding $\|\tilde P_x\|_c$
as a function of $\alpha$ and $\lambda_{PM}$.  We start with the upper
bound on $\|\tilde P_x\|_c$. For this we decompose the matrix
$c(P_x+M_x)$ into the component in $U$, called $c(P_x+M_x^U)$, and the
perpendicular one, called $c(M_x^\perp)$. Note that $c(P_x)$ lies
inside $U$, despite the movement $m_x$, because of the centering
$c(\cdot)$. We now bound the spectral norm of $\|c(M_x^\perp)\|$,
using the inductive hypothesis that the projection of each of at most
$2k$ basis vectors of $m_p^x$ onto $U^\perp$ is at most $\gamma$:
$$
\|c(M_x^\perp)\|\le \max_p
\sqrt{n_x}\|m_p^x\|\cdot \sqrt{2k}\gamma
\le
\gamma\cdot \sqrt{n_x}2k\cdot \theta
\le
\gamma\cdot \tfrac{\eps}{500}\sqrt{n_x/k}
\le
\tfrac{\gamma}{9}\cdot \tfrac{\eps}{32}\sqrt{n_x/k}
\le
\tfrac{\gamma}{9}\lambda_{PM}.
$$
Thus, we have that
$\lambda_{PM^U}\triangleq\|P_x+M_x^U\|_c=
\lambda_{PM}\pm\tfrac{\gamma}{9} \lambda_{PM}$.

We can now upper bound the norm of $\tilde P_x$:
\begin{equation}\label{eqn:upperPxStart}
\|\tilde P_x\|_c
\le
\|c(P_x+M_x)v_x\| + \|c(T_x)v_x\|
\le
\|c(P_x+M_x)v_x\| + \alpha\eta(n_x).
\end{equation}

We need to bound the first term now. For this, we compute the
following ratio, using that the projection of $v_x$ into $U^\perp$ is of magnitude 
$\alpha$:
\begin{eqnarray}\label{eqn:upperPxFinal}
\nonumber
\frac{\|c(P_x+M_x)v_x\|^2}{\lambda_{PM^U}^2}
&\le
\frac{\|c(P_x+M_x^U)v_x\|^2+\|c(M_x^\perp)v_x\|^2+2\|c(P_x+M_x^U)v_x\|\|c(M_x^\perp)v_x\|}{\lambda_{PM^U}^2}
\\
\nonumber
&\le
\frac{(1-\alpha^2)\lambda_{PM^U}^2+\alpha^2\cdot (\gamma/9)^2\cdot \lambda_{PM}^2+2\alpha\sqrt{1-\alpha^2}\cdot \lambda_{PM^U}\cdot \tfrac{\gamma}{9}\lambda_{PM}}{\lambda_{PM^U}^2}.
\\
\nonumber
&\le
(1-\alpha^2)+\alpha^2(\gamma/9)^2(1+\gamma/9)^2+2\alpha\tfrac{\gamma}{9}\cdot (1+\gamma/9)
\\
&\le
1-\alpha^2/2+\alpha\gamma/3.
\end{eqnarray}


On the other hand, we want to prove a lower bound on $\|\tilde
P_x\|_c$. We define $u_{PM^U}$ to be the centered-PCA direction of $P_x+M_x^U$:
$\lambda_{PM^U}=\|(P_x+M_x^U)u_{PM^U}\|_c$. Remember that $u_{PM^U}$
lies inside $U$.

\begin{equation}\label{eqn:lowerPx}
\|\tilde P_x\|_c
\ge 
\|c(\tilde P_x) u_{PM^U}\|
=
\|c(P_x+M_x^U) u_{PM^U}\|
=
\lambda_{PM^U}.
\end{equation}

Putting together the upper bound \eqref{eqn:upperPxStart}, \eqref{eqn:upperPxFinal} and the lower bound \eqref{eqn:lowerPx}, we obtain 
$$
\lambda_{PM^U}\le \lambda_{PM^U}\sqrt{1-\alpha^2/2+\alpha\gamma/3}+\alpha\eta(n_x)
$$
and, using that $\sqrt{1-x}\le 1-x/2$ for all $0 \leq x \leq 1$, we conclude:
$$
\alpha\le 4\cdot\left(\tfrac{\eta(n_x)}{\lambda_{PM^U}}+\gamma/6\right)\le \gamma,
$$ 
as long as $4\tfrac{\eta(n_x)}{\lambda_{PM^U}}<\gamma/3$. Since
$\lambda_{PM^U}\ge \lambda_{PM}/2\ge \lambda_c/4$, this is
satisfied if
$$
\gamma=\Theta(\sigma \sqrt{\log n}\sqrt{k}/\eps).
$$
We are done proving the inductive hypothesis.
\end{proof}

The final step is to show that, because all vectors $v_x$ along a
root-to-leaf path are perpedicular to each other and are heavy in $U$,
there cannot too many of them, and hence the tree depth is bounded.

\begin{claim}
Fix some dimension $k>1$ subspace $U\subset \R^d$. Suppose there exists
$2k+1$ vectors $v_1,\ldots v_{2k+1}$, such that the projection of each
into $U$ has norm at least $1-1/4k$. Then at least two of the vectors are
not orthogonal to each other.
\end{claim}
\begin{proof}
For the sake of contradiction assume that all vectors $v_1\ldots
v_{2k+1}$ are mutually orthogonal. Then let $u_i$ be the projection of
$v_i$ into $U$, and let $\delta_i=v_i-u_i$. We want to bound the dot product
$u_iu_j$. Consider
$$
0=v_i^\tran\cdot v_j=(u_i^\tran+\delta_i^\tran)(u_j+\delta_j)=u_i^\tran u_j+\delta_i^\tran\delta_j.
$$ Since $|\delta_i^\tran \delta_i|\le 0.25/k$, we have that $|u_i^\tran u_j|\le
0.25/k$. Even after normalization, we have that
$\left|\tfrac{u_i^\tran u_j}{\|u_i\|\cdot \|u_j\|}\right|\le 1/k$. Following Alon's
result \cite{Alon}, we conclude that $u_i$'s have rank at least
$(2k+1)/2>k$, which is impossible if all $u_i$ live in the
$k$-dimensional space $U$. Thus we have reached a contradiction.
\end{proof}

We now combine the two claims together. Note that $\alpha\le 1/4k$. We
conclude that the height of the tree is at most $2k$, thus concluding
the proof of Lemma \ref{lem:depth}.
\end{proof}




\subsection{Analysis: Correctness}

Now that we have established an upper bound on the tree depth, we will
show that the query algorithm will indeed return the nearest neighbor
$\tilde p^*$ for the query $\tilde q$ (modelled as in Section
\ref{sec:model}). We show this in two steps: first we prove the
result, assuming the point $\tilde p^*$ was not thrown out during
de-clumping. Then we show that the de-clumping indeed does not throw
out the point $\tilde p^*$.

\begin{lemma}
The query algorithm returns the point $\tilde p^*$, assuming it was
not thrown out during the de-clumping process.
\end{lemma}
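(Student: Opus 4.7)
The plan is to show that the query procedure visits the unique leaf that contains $\tilde p^*$; once this is established, the final step is immediate because exact distances are computed against every point in every visited leaf, and $\tilde p^*$ is the true nearest neighbor of $\tilde q$ in $\tilde P$ by Lemma~\ref{lem:NNSstays}. Since the tree has depth at most $2k$ by Lemma~\ref{lem:depth}, it suffices to show that at every node $x$ on the root-to-leaf path of $\tilde p^*$, the slab of $\Theta$ containing $\langle \tilde p^{*x}, v_x\rangle$ also intersects the interval $[\langle \tilde q^x, v_x\rangle - (1+\eps/2),\ \langle \tilde q^x, v_x\rangle + (1+\eps/2)]$, equivalently that $|\langle \tilde q^x - \tilde p^{*x}, v_x\rangle| \le 1 + \eps/2$.

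The first simplification is to eliminate the orthogonalization moves from the inner product. As observed in the proof of Lemma~\ref{lem:depth}, the directions $v_{y_1},\ldots,v_{y_{l-1}}$ along any root-to-$x$ path are mutually orthogonal, and orthogonalizing a point against them only subtracts its components in their span; in particular $v_x$ itself is orthogonal to all the $v_{y_i}$'s, so $\langle \tilde q^x - \tilde p^{*x}, v_x\rangle = \langle \tilde q - \tilde p^*, v_x\rangle$. I then split this as $\langle q-p^*, v_x\rangle + \langle t_q, v_x\rangle - \langle t_{p^*}, v_x\rangle$ and bound each term separately. The first term is at most $\|q-p^*\|\le 1$ by Cauchy--Schwarz. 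The second term $\langle t_q, v_x\rangle$ is a one-dimensional Gaussian of variance at most $\sigma^2$, because $v_x$ depends only on $\tilde P$ and not on the independent noise $t_q$; by a union bound over the at most $2k$ nodes on the path, it is $O(\sigma\sqrt{\log n})$ w.h.p., which is $O(\eps)$ under the hypothesis $\sigma < \kappa\eps/\sqrt{k\log n}$.

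The main obstacle is the third term $\langle t_{p^*}, v_x\rangle$, since $v_x$ is computed from data that includes $\tilde p^*$, so $t_{p^*}$ is not independent of $v_x$ and Gaussian concentration is unavailable. Here I use Remark~\ref{rem:noisePerpendicular} to assume $t_{p^*}\perp U$, together with Claim~\ref{clm:projection}, which asserts that the component $v_x^\perp$ of $v_x$ orthogonal to $U$ has norm at most $\gamma=O(\sigma\sqrt{k\log n}/\eps)$. Writing $v_x = v_x^U + v_x^\perp$ with $v_x^U\in U$, the component in $U$ contributes $\langle t_{p^*}, v_x^U\rangle = 0$, leaving $|\langle t_{p^*}, v_x\rangle| = |\langle t_{p^*}, v_x^\perp\rangle| \le \|t_{p^*}\|\cdot \gamma \le O(\sigma\sqrt{d})\cdot O(\sigma\sqrt{k\log n}/\eps) = O(\sigma^2\sqrt{kd\log n}/\eps)$ by Corollary~\ref{cor:projec}; the second parameter condition $\sigma < \kappa\eps/(\sqrt{k}\sqrt[4]{d\log n})$ is exactly what makes this $O(\eps)$. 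Combining the three bounds gives $|\langle \tilde q - \tilde p^*, v_x\rangle| \le 1 + O(\kappa\eps) \le 1 + \eps/2$ for $\kappa$ sufficiently small, at every node on the path. Therefore the query descends into the leaf containing $\tilde p^*$, the exhaustive computation in that leaf finds $\|\tilde q - \tilde p^*\|$ exactly, and Lemma~\ref{lem:NNSstays} ensures that no point in any other visited leaf can beat it, so $\tilde p^*$ is reported.
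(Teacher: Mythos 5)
Your argument is correct in its main steps and reaches the right conclusion, but it takes a genuinely different route from the paper, and the difference matters if the query procedure is read the way the paper's own proof reads it. The paper does not stop at a per-node bound; it bounds the \emph{cumulative} quantity
$\sum_{y\ \text{ancestor of } x}|v_y(\tilde p^x - \tilde q)|^2 < 1 + \eps/2$,
using the fact that the $v_y$'s along the root-to-$x$ path are mutually orthogonal and that there are at most $2k$ of them (Lemma~\ref{lem:depth}), with each noise term contributing at most $(\eps/8\sqrt k)^2$. This is what justifies the conclusion ``the bucket of $\tilde p^x$ intersects a ball of radius $1+\eps/2$ around $\tilde q$'': the bucket is the intersection of slabs along the path, and the distance from $\tilde q$ to that box is governed by the sum of squared one-dimensional offsets, not by any single one of them. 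Your proof instead shows the weaker statement $|\langle \tilde q - \tilde p^*, v_x\rangle| \le 1+\eps/2$ at each node individually. That suffices under a per-slab reading of the query rule (follow every child whose $v_x$-interval lies within $1+\eps/2$ of $\langle \tilde q^x, v_x\rangle$), which is also a reasonable reading of the algorithm text and gives the same $(1/\theta)^{O(k)}$ leaf count; but under the cumulative reading that the paper's own proof adopts, a per-node bound does not by itself establish that the box containing $\tilde p^{*x}$ intersects the query ball. To align with the paper you would need to combine your per-node bounds across the (at most $2k$) orthogonal directions as the paper does. A second, more minor difference: you split the noise into $\langle t_q, v_x\rangle$ and $\langle t_{p^*}, v_x\rangle$, handling the first by independence of $t_q$ from $v_x$ and Gaussian concentration (a nice observation the paper does not use), whereas the paper bounds $|v_x(t_{p^*}-t_q)|$ in one shot via the $\sin\theta$ control on $v_x$'s component in $U^\perp$ from Claim~\ref{clm:projection}. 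Both routes give the required $O(\eps/\sqrt k)$ control; your handling of $t_{p^*}$ via Remark~\ref{rem:noisePerpendicular} and Claim~\ref{clm:projection} matches the paper's in spirit. The final appeal to exhaustive search in the visited leaf plus Lemma~\ref{lem:NNSstays} is correct and is exactly how the paper closes the argument.
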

\begin{proof}
Consider some non-leaf tree node $x$ containing $\tilde p^*$. We need
to argue that, at node $x$, the query algorithm follows the child
where $\tilde p^*$ goes.

As before, let $\tilde p^x$ be the orthogonalized version of $\tilde
p^*$ (above $x$) and $m^x_p$ is the total amount of ``hyperplane
snapping'' that happened to $\tilde p^*$. 
We also have that $v_x$ has projection at most
$\gamma$ onto $U^\perp$ (from Claim \ref{clm:projection}). Hence, we have:
$$ |v_x\tilde p^x-v_x\tilde q|\le
|v_x(t_{p^*}-t_{q})|+|v_x(p^*-q)|,$$
again using that $v_xm_p^x=0$. Note that, with high probability,
$$|v_x(t_{p^*}-t_{q})|\le 3\sigma\sqrt{d}\cdot
\gamma\le \eps/8\sqrt{k}.
$$
Since this is true also for all ancestors $y$ of $x$, and since all $v_y$, together with $v_x$, are mutually orthogonal, we have that:
\begin{align*}
\sum_{y\hbox{ ancestor of }x}&|v_y(\tilde p^x-\tilde q)|^2
\\
&=\sum_{y\hbox{ ancestor of }x}|v_y(t_{p^*}-t_{q})+v_y(p^*-q)|^2
\\
&\le
\left(\sqrt{\sum_{y\hbox{ ancestor of }x} \left|v_y(m^y_p+|t_{p^*}-t_{q}|)\right|^2}+\sqrt{\sum_{y\hbox{ ancestor of }x}|v_y(p^*-q)|^2}\right)^2
\\
&\le
\left(\sqrt{2k\cdot \left(\eps/8\sqrt{k}\right)^2}+\sqrt{\sum_{y\hbox{ ancestor of }x}|v_y(p^*-q)|^2}\right)^2
\\
&\le
(\eps/4\sqrt{2}+1)^2
\\
&<
\eps/2+1.
\end{align*}
This means that the bucket
(child node of $x$) of $\tilde p^x$ intersects a ball of radius
$1+\eps/2$ around $\tilde q$, and hence the query algorithm will go
into that bucket (child).
\end{proof}

To complete the correctness argument, we also need to prove that $p^*$
is never thrown out due to the de-clumping process.

\begin{claim}
$p^*$ is never thrown out due to the de-clumping process.
\end{claim}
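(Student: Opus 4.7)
The plan is to split on whether de-clumping is actually triggered at the node $x$ lying on the root-to-leaf path containing $\tilde p^*$. If the top centered singular value of $\tilde P^{in}_x$ is at least $\lambda_c(|\tilde P^{in}_x|)$, then no de-clumping occurs, $\tilde P_x = \tilde P^{in}_x$, and $\tilde p^*$ is trivially preserved. The substantive case is when de-clumping is triggered; then it suffices to prove that for every $\tilde p \in \tilde P^{in}_x\setminus\{\tilde p^*\}$,
\[
  \|\tilde p^{*,x} - \tilde p^{x}\|^2 \;>\; \delta + \eps^2/2,
\]
so that $\tilde p^*$ is never part of a pair removed during de-clumping.

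The first step is to upper-bound $\delta$. Repeating the contradiction argument inside the proof of Claim \ref{clm:lambda}, a triggered de-clumping forces the existence of two unnoised points $p_1,p_2 \in P^{in}_x$ with $\|p_1 - p_2\| \le \eps/4$. Writing
\[
  \tilde p_1^{x} - \tilde p_2^{x} \;=\; (p_1 - p_2) + (m_{p_1}^{x} - m_{p_2}^{x}) + (t_1 - t_2),
\]
and decomposing into its $U$ and $U^\perp$ components, I will use (i) $p_1-p_2 \in U$, (ii) the noise concentration $\|t_1 - t_2\|^2 = 2\sigma^2 d \pm o(\eps^2)$ from Corollary \ref{cor:projec} under the model constraints, and (iii) the shift bounds $\|m_p^{x}\|\le \sqrt{2k}\,\theta$ with $\|(m_p^{x})^{\perp U}\|\le 2k\theta\gamma$ established inside the proof of Lemma \ref{lem:depth}, to conclude
\[
  \delta \;\le\; 2\sigma^2 d + \tfrac{\eps^2}{16} + o(\eps^2).
\]

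The second step is to lower-bound $\|\tilde p^{*,x} - \tilde p^{x}\|^2$ for an arbitrary $\tilde p \ne \tilde p^*$. The near-neighbor gap \eqref{eq:pstar} combined with the triangle inequality applied to $q, p^*, p$ in $U$ gives $\|p^* - p\| \ge \eps$. Applying the same $U$/$U^\perp$ decomposition as above, but with the $U$-signal now of magnitude $\ge \eps$ rather than $\le \eps/4$, yields
\[
  \|\tilde p^{*,x} - \tilde p^{x}\|^2 \;\ge\; \eps^2 + 2\sigma^2 d - o(\eps^2).
\]
Subtracting the two bounds, the resulting gap is $\tfrac{15}{16}\eps^2 - o(\eps^2) > \tfrac{\eps^2}{2}$ whenever the constant $\kappa$ in the theorem hypothesis is taken sufficiently small, which proves the claim.

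The main obstacle is the bookkeeping of cross-terms involving the orthogonalization shifts. In particular, the expansion of $\|(p_1-p_2) + (m_{p_1}^{x} - m_{p_2}^{x}) + (t_1-t_2)\|^2$ produces a cross-term between the $O(\sigma\sqrt{d})$-scale noise vector $t_1-t_2 \in U^\perp$ and the $U^\perp$-projection of $m_{p_1}^{x} - m_{p_2}^{x}$, of scale $O(k\theta\gamma)$; this product is $O(k\theta\gamma\sigma\sqrt{d}) = O(\sigma^2 \sqrt{d\log n}/k)$. The two parameter constraints $\sigma < \kappa\eps/\sqrt{k\log n}$ and $\sigma < \kappa\eps/(\sqrt{k}\sqrt[4]{d\log n})$ in Theorem \ref{thm:pcaTree} are precisely calibrated to make this cross-term, together with the analogous contributions from $\|(m_p^{x})^{\perp U}\|^2$ and the noise concentration slack, all $o(\eps^2)$ with an arbitrarily small hidden constant. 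Once these error terms are controlled, both Step A and Step B become routine Pythagoras-and-triangle calculations.
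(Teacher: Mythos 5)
Your proposal is correct and takes essentially the same approach as the paper's proof. The paper phrases it as a contradiction (assume $\tilde p^*_x$ is removed, extract a nearby point, and derive incompatible bounds on $\delta+\eps^2/2$), whereas you argue directly that no removable pair contains $\tilde p^*$, but the core estimates are the same: the $U$/$U^\perp$ decomposition of $\tilde p_1^x-\tilde p_2^x$, the upper bound on $\delta$ via the close pair forced in Claim~\ref{clm:lambda}, the lower bound on $\|\tilde p^{*,x}-\tilde p^x\|^2$ from $\|p-p^*\|\geq\eps$, the shift bounds $\|m_p^x\|\leq\sqrt{2k}\,\theta$ and $\|(m_p^x)^{\perp U}\|\leq 2k\theta\gamma$, and the use of Corollary~\ref{cor:projec} together with the $\sigma$ constraints in Theorem~\ref{thm:pcaTree} to control the noise fluctuations and cross-terms.
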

\begin{proof}
For the sake of contradiction, suppose $\tilde p^*_x$ is thrown out at
some node $x$. This means there is some point $\tilde p_x$ such that
$\|\tilde p_x-\tilde p^*_x\|^2\le \delta+\eps^2/2$. Since
$\|q-p\|-\|q-p^*\|\ge \eps$, we have that $\|p-p^*\|\ge \eps$.

We have:
\begin{align*}
\delta+\eps^2/2\ge \|\tilde p_x-\tilde p^*_x\|^2
&= 
\|p-p^*+(m_p^x-m_{p^*}^x)+(t_p-t_{p^*})\|^2
\\
&=
\|(p-p^*+(m_p^x-m_{p^*}^x))U\|^2+\|(m_p^x-m_{p^*}^x+t_p-t_{p^*})U^\perp\|^2
\end{align*}

We want to bound $\|(m_p^x-m_{p^*}^x+t_p-t_{p^*})U_\perp\|^2$. Indeed,
note that the projection of $m_p^x$ onto $U^\perp$ can be at most
$2k\theta\cdot \gamma\le O(\sigma \sqrt{\log n})$, by Claim
\ref{clm:projection}. Hence:
\begin{align*}
\|(m_p^x-m_{p^*}^x+t_p-t_{p^*})U_\perp\|
&\ge 
-O(\sigma \sqrt{\log n})
+\sqrt{2\sigma^2d- 0.1\eps^2}
\\
&\ge
-O(\sigma \sqrt{\log n})
+\sigma\sqrt{2d}- \tfrac{0.06\eps^2}{\sigma\sqrt{2d}}
\\
&\ge
\sigma\sqrt{2d}- \tfrac{0.09\eps^2}{\sigma\sqrt{2d}},
\end{align*}

as long as $O(\sigma \sqrt{\log n})<
\tfrac{0.03\eps^2}{\sigma\sqrt{2d}}$, which is equavalent to saying
that $\sigma\le O\left(\tfrac{\eps}{\sqrt[4]{d}\sqrt[4]{\log n}}\right)$.

Putting it all together, we have:
\begin{align*}
\delta+\eps^2/2\ge \|\tilde p_x-\tilde p^*_x\|^2
&\ge (\eps-2\cdot \sqrt{2k}\theta)^2+(\sigma\sqrt{2d}- \tfrac{0.09\eps^2}{\sigma\sqrt{2d}})^2
\\
&\ge
0.8\eps^2+2\sigma^2d-0.2\eps^2
\\
&> 2\sigma^2d+0.6\eps^2.
\end{align*}

But, as proven in Eqn. \eqref{eqn:deltaLB}, we have that $\delta\le
2\sigma^2d+0.12\eps^2$. We've reached a contradiction.
\end{proof}

\subsection{Analysis: Performance}

The space and preprocessing bounds follow immediately from the
construction. We just need to argue about the query time.

\begin{claim}
The query time is $(k/\eps)^{O(k)} d^2$.
\end{claim}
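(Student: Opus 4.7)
The plan is to bound the query time by multiplying three quantities: the tree depth, the branching factor of the query recursion at each internal node, and the work done per internal node and per leaf. The depth bound is already in hand from Lemma~\ref{lem:depth}, which gives depth at most $2k$, so the main task is to bound the branching factor of the search and to account for the linear-algebraic cost at each node.

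First, I would bound the number of children followed at any internal node $x$ visited during the query. At $x$ the query point (orthogonalized in the same way as the dataset above $x$) is projected onto the top centered-PCA direction $v_x$, and we recurse into every slab of width $\theta=\eps/(1000k^{3/2})$ whose distance (along $v_x$) from this projection is at most $1+\eps/2$. Since the image of a ball of radius $1+\eps/2$ under an orthogonal projection is an interval of length $2(1+\eps/2)<3$, the number of slabs it meets is at most $\lceil 3/\theta\rceil+1=O(k^{3/2}/\eps)$. Combined with the depth bound of $2k$, the total number of leaves visited is at most
\[
  \bigl(O(k^{3/2}/\eps)\bigr)^{2k}=(k/\eps)^{O(k)}.
\]

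Next I would account for the per-node and per-leaf work. Maintaining the ``orthogonalized'' version of the query along a root-to-leaf path costs $O(d)$ per level (one inner product with $v_x$ and one vector subtraction), so the work at each internal node is $O(d)$, contributing $(k/\eps)^{O(k)}\cdot d$ overall. At each leaf we stop because the node has at most $d$ associated points, and for each of them we compute a distance to $\tilde q$ in $\R^d$, costing $O(d^2)$ per leaf. Summing over all visited leaves gives $(k/\eps)^{O(k)}\cdot d^2$, which dominates the internal-node cost.

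The only mildly subtle point is to ensure that the correctness argument in the previous subsection indeed justifies restricting the search to slabs whose projection distance from $\tilde q$ is at most $1+\eps/2$; this is exactly what was used in the lemma showing that the child containing $\tilde p^*$ is always followed, so no further slabs need be explored for the correctness guarantee. Combining the bounds above yields total query time $(k/\eps)^{O(k)}\cdot d^2$, as claimed.
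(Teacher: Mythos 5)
Your proof is correct and follows essentially the same route as the paper: bound the branching factor at each node by $O(1/\theta)=O(k^{3/2}/\eps)$, raise to the depth $2k$ from Lemma~\ref{lem:depth} to get $(k/\eps)^{O(k)}$ leaves, and charge $O(d^2)$ per leaf for the exhaustive distance check on at most $d$ points. You simply spell out the per-internal-node $O(d)$ cost explicitly, which the paper leaves implicit.
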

\begin{proof}
At each node of the tree, there are at most $O(1/\theta)=O(k^{3/2}/\eps)$
child nodes that are followed. Hence, in total, we reach
$O(1/\theta)^{2k}=(k/\eps)^{O(k)}$ leaves. The factor of $d^2$ comes
from the fact that each leaf has at most $d$ points to check the
distance against.
\end{proof}





{\small
\bibliographystyle{alphaurlinit}
\bibliography{nn,andoni}
}

\newpage
\appendix

\section{Additional Material for Section \ref{sec:bounded}}
\label{app:bounded}

\begin{proof} [Proof of Claim \ref{cl:boundedReport}]
Fix $p\neq p^*$ that is captured by the same $\tilde U$,
and use the triangle inequality to write 
$\norm{p - \tilde{p}_{\tilde{U}}} 
  \leq \norm{p-\tilde{p}} + \norm{ \tilde{p} - \tilde{p}_{\tilde{U}} }
  \leq \alpha + \sqrt{2} \alpha
  \leq 3\alpha
$.
Doing similarly for $p^*$ we get
$\norm{p^* - \tilde p^*_{\tilde{U}}} \leq 3\alpha$,
and by our assumption
$\norm{q - \tilde q} \leq \alpha$.
Using all of these and the triangle inequality once again, 
we bound (in some sense ``comparing'' $p$ vs.\ $p^*$)
\begin{align*} \label{eq:bounded1}
  \frac{ \norm{ \tilde q - \tilde p_{\tilde U} } }
       { \norm{ \tilde q - \tilde p^*_{\tilde U} } }
  &= \frac{ \norm{ q-p } \pm 4\alpha  }
         { \norm{ q-p^* } \pm 4\alpha } 
  =    \frac{ \norm{ q-p } \pm \tfrac14 \eps  }
            { \norm{ q-p^* } \pm \tfrac14 \eps } \\
& \geq \frac{\norm{q-p} - \tfrac14 \eps}
	{\norm{q-p^*} + \tfrac14 \eps}
  \geq \frac{ \norm{ q-p^* } + \tfrac34 \eps }
            { \norm{ q-p^* } + \tfrac14 \eps }
  > 1+\tfrac14 \eps.
\end{align*}
Aiming to ``replace'' $\tilde q$ above with $\tilde{q}_{\tilde{U}}$,
its projection of $\tilde q$ onto $\tilde U$, 
we use Pythagoras' Theorem 
(recall both $\tilde p_{\tilde U},\tilde p^*_{\tilde U}\in \tilde U$),
\begin{align*}
  \frac{ \norm{ \tilde{q}_{\tilde{U}} - \tilde{p}_{\tilde{U}} }^2 }
       { \norm{ \tilde{q}_{\tilde{U}} - \tilde p^*_{\tilde{U}} }^2 }
  = \frac{ \norm{ \tilde q - \tilde p_{\tilde U} }^2 - \norm{ \tilde q - \tilde q_{\tilde U} }^2  }
          { \norm{ \tilde q - \tilde p^*_{\tilde U} }^2 - \norm{ \tilde q - \tilde q_{\tilde U} }^2  } 
  > (1+\tfrac14 \eps)^2,
\end{align*}
where the last inequality crucially uses the fact that 
the numerator and the denominator contain the exact same term 
$\norm{ \tilde q - \tilde q_{\tilde U} }^2$.
We conclude that $\tilde{p}^*$ is indeed reported by 
the $k$-dimensional data structure it is assigned to.
\end{proof}

\compress{
\subsection{Remark on the case $k=1$}
\label{app:k=1}

\aanote{do we need this?}

We now give a result useful for our bounded noise model in the case where $k=1$. Let $C \in \reals^{n \times d}$ represent our point set, such that the $i$th row of $C$ is $p_i$ and we have that $\|p_i \| \geq 1$, $\forall i$. Add noise $T$ of magnitude at most $\alpha$ to each point, where $\alpha \leq \frac{1}{2}$ and  define the resulting   as $D = C+T$. Then under this setting we have 
\begin{theorem}\label{case1}
\begin{equation*}
\sin \theta(SR_1(D), SR_1(C)) \leq 2 \alpha .
\end{equation*}
\end{theorem}
\begin{proof}
Let $t_i$ be the $i$-th row of $T$ and let $\aset{s_j}_j$ be the singular values of $T$. 
By fact \ref{fa:sumsingular}, 
$\sum_j s_j^2 = \sum_{i=1}^n \|t_i \|^2 \leq \alpha^2 n$,
hence $\| T \| = s_1 \leq \alpha \sqrt{n}$.
By similar reasoning to Theorem \ref{evendist}, 
we now have that $T_R$ and $T_L$ are also bounded by $\alpha \sqrt{n}$. 

Now we consider the gap $\delta$. Note that $C$ has only one non-zero singular value $\lambda$ (since $U$ is of dimension $1$) which must be $\lambda = \sqrt{\sum_i \| p_i \|^2} \geq \sqrt{n}$, and moreover $\lambda = \|C \|$.  
Now by Fact \ref{tria}, $\|D \| \geq \|C\| - \|T\| = (1-\alpha)\sqrt{n}$. 
Since all the bottom $d-1$ singular values of $C$ are $0$, we must have that the gap $\delta$ is $\|D \|-0 \geq (1-\alpha)\sqrt{n}$,
and plugging it in gives
\begin{equation*}
  \sin \theta(SR_1(D), SR_1(C)) \leq \frac{\alpha \sqrt{n}}{\sqrt{n} - \alpha \sqrt{n}} = \frac{\alpha}{1-\alpha}
  \leq 2\alpha.
\qedhere
\end{equation*} 
\end{proof}

\begin{claim}\label{topsing}
Suppose $S$ is a $k=1$ dimensional subspace of $\reals^d$. Let $v \in
\reals^d$ be the top(right) singular vector of $\tilde{P}$. Then for
any vector $p \in S$, we have that $v \tilde{p} = \|p\|\left(1 \pm
O(\alpha^2) \right) \pm O(\alpha^2)$.
\end{claim}
\begin{proof}
Let $u$ be the top singular vector of $P$ and $v$ be the top singular
vector of $\tilde{P}$. Decompose $v$ as $v = \beta u +
\sqrt{1-\beta^2} v'$ where $S = \mathbb{R} u$ and $v' \perp u$, and
$\|u\| = \|v\| = \|v'\|=1$. By Lemma \ref{case1}, we have
$\sqrt{1-\beta^2} = O(\alpha)$ which implies $\beta \geq 1 - O
(\alpha^2)$.  Now consider any vector $p \in S$ and let $\tilde{p}$ be
its perturbation. We decompose $\tilde{p} = p' + \eta$, where $p'$ is
the same direction as $p$ (and hence also $u$) and $\eta$ is the
perpendicular noise. Note that $\|p'\| = \|p\| \pm
O(\alpha)$. Finally, we have that
\begin{align*}
\tilde{p}v &= (p'+ \eta)(\beta u + \sqrt{1 - \beta^2} v') &= \beta p'
u + \eta \sqrt{1 - \beta^2} v' = (1 \pm O( \alpha))(\|p\| \pm
O(\alpha))+ \sqrt{1 - \beta^2} \eta v'.
\end{align*} 
Noting that $\|\eta\| \leq \alpha$ and $ \sqrt{1 - \beta^2} =
O(\alpha)$, we can apply Cauchy Schwartz to obtain that $\|\eta v' \|
\leq \alpha^2$ concluding our result.
\end{proof}}

\section{Short Review of Spectral Properties of Matrices}
\label{apx:spectral}

We review some basic definitions of spectral properties of matrices.

\subsection{Spectral Norm and Principal Component Analysis}

The \emph{spectral norm} of a matrix $X\in\R^{n\times d}$ is defined
as $\norm{X} = \sup_{y\in \R^d: \norm{y}=1} \norm{X y}$, where all
vector norms $\norm{\cdot}$ refer throughout to the $\ell_2$-norm.
The Frobenius norm of $X$ is defined as $\norm{X}_F=(\sum_{ij}
X_{ij}^2)^{1/2}$, and let $X^\tran$ denote the transpose of $X$.
A \emph{singular vector} of $X$ is a unit vector $v \in \reals^d$ 
associated with a \emph{singular value} $s\in\reals$ and a unit vector 
$u \in \reals^n$ such that $X v = s u$ and $u^\tran X = s v^\tran$. 
(We may also refer to $v$ and $u$ as a pair of right-singular and left-singular vectors associated with $s$.)

\begin{fact}\label{tria}
For every two real matrices $X$ and $Y$ of compatible dimensions
(i) $\|X+Y \| \leq \|X \| + \|Y \|$;\, 
(ii) $\|X  Y \| \leq \norm{X}\cdot \norm{Y}$;\, and 
(iii) $\|X \|= \| X^\tran \|$. 
\end{fact}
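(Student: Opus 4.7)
The plan is to prove each of the three parts directly from the supremum definition $\|X\| = \sup_{y:\|y\|=1}\|Xy\|$, using only the triangle inequality and Cauchy--Schwarz for the underlying $\ell_2$ vector norm together with the variational characterization of singular values.

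For part (i), I would fix an arbitrary unit vector $y \in \R^d$, expand $(X+Y)y = Xy + Yy$, and apply the $\ell_2$ triangle inequality: $\|(X+Y)y\| \le \|Xy\| + \|Yy\| \le \|X\| + \|Y\|$, where the last step uses that $y$ is a unit vector so each summand is bounded by the corresponding spectral norm. Taking the supremum over all such $y$ yields $\|X+Y\| \le \|X\| + \|Y\|$.

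For part (ii), I would again fix a unit vector $y$ and write $\|XYy\| \le \|X\| \cdot \|Yy\|$ by the definition of $\|X\|$ applied to the vector $Yy$ (first normalizing it if nonzero, or noting the bound is trivial if $Yy=0$). Then $\|Yy\| \le \|Y\|$ since $y$ is unit, giving $\|XYy\| \le \|X\|\|Y\|$, and taking the supremum over $y$ finishes this part.

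For part (iii), the cleanest route is to invoke the fact that $\|X\|$ equals the largest singular value $s_1(X)$ and that $X$ and $X^\tran$ share the same singular values (since $s$ is a singular value of $X$ with right/left singular vectors $v,u$ iff it is a singular value of $X^\tran$ with the roles of $v$ and $u$ swapped, as recorded in the definition in the section preceding the fact). Alternatively, one can argue by duality: $\|X\| = \sup_{\|y\|=\|z\|=1} z^\tran X y = \sup_{\|y\|=\|z\|=1} y^\tran X^\tran z = \|X^\tran\|$, where the middle equality is just the identity $z^\tran X y = (X^\tran z)^\tran y = y^\tran (X^\tran z)$ for real scalars.

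None of the three parts presents any real obstacle; these are textbook facts stated in the appendix purely for self-containedness, so the proof will be short and elementary. The only mild subtlety is part (iii), where one must either quote the singular value characterization of the spectral norm or pass through the bilinear form representation; I would choose whichever matches the level of background the appendix assumes.
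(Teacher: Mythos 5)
Your proof is correct in all three parts. Note, however, that the paper itself provides no proof of this fact --- it is stated in the appendix as a standard background result (with the surrounding discussion pointing to textbook sources such as \cite{stewart}), so there is no paper argument to compare against. Your supremum-definition arguments for (i) and (ii) are the standard ones, and for (iii) both routes you sketch (equality of singular values of $X$ and $X^\tran$, or the bilinear-form identity $\|X\| = \sup_{\|y\|=\|z\|=1} z^\tran X y$) are valid and elementary; either would serve. One tiny remark on (ii): when you ``normalize $Yy$ if nonzero,'' you are implicitly using homogeneity of the norm, which is fine but worth stating if you want the argument fully airtight.
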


We can think of $i$-th row in $X$ as a point $x_i\in\reals^d$, and
define the corresponding point set $P(X) = \aset{x_1,\ldots,x_n}$.  Then a unit vector $y\in\R^d$ maximizing $\|X y\|$
corresponds to a best-fit line for the point set $P(X)$. The
PCA (and \emph{Singular Value Decomposition (SVD)} more generally) is a classical
generalization of this notion to a best-fit $k$-dimensional subspace
for $P(X)$, as described next (see e.g.\ \cite[Chapter
  1]{kannanbook}).

\begin{theorem}[\cite{kannanbook}]
\label{opt}
Let $X\in\R^{n\times d}$, and define the vectors $v_1,\ldots,v_d\in \R^d$ 
inductively by 
$$v_j = \argmax_{\norm{v} = 1;\ \forall i <j, v^\tran v_i=0} \norm{ Xv }$$
(where ties for any $\argmax$ are broken arbitrarily).
Then each $V_j=\spn\aset{v_1,\ldots,v_k}$ attains the minimum of
$\sum_{i=1}^n d(x_i,W)^2$ over all $j$-dimensional subspaces $W$.
Furthermore, $v_1,\ldots,v_d$ are all singular vectors with 
corresponding singular values $s_1=\norm{Xv_1},\ldots,s_d=\norm{Xv_d}$,
and clearly $s_1  \geq \ldots \geq s_d$.
\end{theorem}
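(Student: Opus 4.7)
The plan is to reduce the geometric best-fit-subspace question to an algebraic maximization of $\sum_l \norm{Xw_l}^2$ over orthonormal $j$-tuples, then prove the greedy optimality of $v_1,\dots,v_j$ by induction using a dimension-counting exchange argument, and finally identify the $v_j$'s as right singular vectors via first-order optimality of the Rayleigh quotient.

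By the Pythagorean identity, $\norm{x_i}^2 = \norm{P_W x_i}^2 + \dist(x_i,W)^2$ for each row $x_i$ of $X$ and each subspace $W$ with orthogonal projection $P_W$. Summing over $i$, minimizing $\sum_i \dist(x_i,W)^2$ over $j$-dimensional $W$ is equivalent to maximizing $\sum_i \norm{P_W x_i}^2 = \sum_{l=1}^j \norm{X w_l}^2$ for any orthonormal basis $w_1,\dots,w_j$ of $W$. So I would reduce Theorem~\ref{opt}'s first claim to showing that $v_1,\dots,v_j$ maximize this sum.

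I would prove this by induction on $j$. The base case $j=1$ is the very definition of $v_1$. For the inductive step, fix any $j$-dimensional subspace $W$. Since $\dim W + \dim V_{j-1}^\perp = j + (d-j+1) > d$, the intersection $W \cap V_{j-1}^\perp$ is nontrivial, so I can pick a unit vector $w \in W \cap V_{j-1}^\perp$ and extend it to an orthonormal basis $\{u_1,\dots,u_{j-1},w\}$ of $W$. Applying the inductive hypothesis to the $(j-1)$-dimensional subspace $\spn\{u_1,\dots,u_{j-1}\}$ yields $\sum_{l<j}\norm{Xu_l}^2 \le \sum_{l<j}\norm{Xv_l}^2$, while the greedy maximality of $v_j$ (over unit vectors orthogonal to $V_{j-1}$) gives $\norm{Xw}^2 \le \norm{Xv_j}^2$. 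Adding the two inequalities delivers the desired $\sum_i \norm{P_W x_i}^2 \le \sum_{l\le j}\norm{Xv_l}^2$.

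For the singular-vector claim, I would apply Lagrange multipliers to $f(v) = v^\tran X^\tran X v$ subject to $\norm{v}=1$ and $v \perp v_1,\dots,v_{j-1}$. The first-order conditions yield $X^\tran X v_j = s_j^2 v_j + \sum_{i<j} \mu_i v_i$; taking inner products with each $v_i$ for $i<j$ and using the inductively established identity $X^\tran X v_i = s_i^2 v_i$ together with orthonormality of the $v_i$'s forces every $\mu_i = 0$. So $X^\tran X v_j = s_j^2 v_j$ and, defining $u_j = Xv_j / s_j$ when $s_j>0$, we obtain $Xv_j = s_j u_j$ and $u_j^\tran X = s_j v_j^\tran$, confirming that $(u_j, v_j, s_j)$ is a singular triple. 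Monotonicity $s_1 \ge s_2 \ge \cdots \ge s_d$ is immediate: each $v_j$ is maximized over a strictly nested shrinking feasible set. The main obstacle is the inductive step: the dimension-counting observation $W \cap V_{j-1}^\perp \ne \{0\}$ is what lets us split an arbitrary basis of $W$ into a ``recycled'' part (handled by induction) and a ``greedy'' part (handled by the definition of $v_j$), so the rest of the proof reduces to bookkeeping.
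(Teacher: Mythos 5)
Your proof is correct. The paper does not actually prove Theorem~\ref{opt}; it cites \cite{kannanbook} and takes the result as known. Your argument is precisely the standard one from that reference: reduce via Pythagoras to maximizing $\sum_l \norm{Xw_l}^2$ over orthonormal $j$-frames, induct using the dimension-count $\dim W + \dim V_{j-1}^\perp > d$ to find a common unit vector to exchange, and identify the greedy maximizers as eigenvectors of $X^\tran X$ via the Rayleigh quotient.
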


We will later need the following basic facts (see, e.g.,
\cite{stewart}).  We denote the singular values of a matrix $X \in
\R^{n \times d}$ by $s_1(X) \geq s_2(X)\geq \ldots \geq s_d(X)$.

\begin{fact}\label{fa:sumsingular}
Let $P(X)\subset \reals^d$ be the point set corresponding to the rows of 
a matrix $X \in \R^{n \times d}$. Then 
\begin{enumerate} \compactify
\renewcommand{\theenumi}{(\alph{enumi})}
\item 
$\norm{X }_F^2 = \sum_{p \in P(X)}  \|p\|^2 = \sum_{i=1}^d s_i(X)^2$ 
\, and $\norm{ X } = s_1(X)$. 
\item 
$P(X)$ lies in a subspace of dimension $k$ 
if and only if $s_{k+1}(X)=0$.
\end{enumerate}
\end{fact}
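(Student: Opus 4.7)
The plan is to treat each part of Fact~\ref{fa:sumsingular} separately, relying on the existence of the Singular Value Decomposition (SVD) and on Theorem~\ref{opt}.

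For part (a), I would begin with the chain of equalities for the Frobenius norm. The identity $\|X\|_F^2=\sum_{p\in P(X)}\|p\|^2$ is immediate from the definition $\|X\|_F^2=\sum_{i,j}X_{ij}^2$, by grouping the sum row-by-row and using that the $i$-th row is exactly the point $x_i\in P(X)$. For the equality with $\sum_i s_i(X)^2$, I would write the SVD $X=U\Sigma V^\tran$ with $U\in\R^{n\times n}$ and $V\in\R^{d\times d}$ orthogonal and $\Sigma$ the diagonal matrix of singular values, and then compute
\[
\|X\|_F^2 = \mathrm{tr}(X^\tran X) = \mathrm{tr}(V\Sigma^\tran U^\tran U\Sigma V^\tran) = \mathrm{tr}(\Sigma^\tran\Sigma) = \sum_{i=1}^d s_i(X)^2,
\]
using cyclicity of trace together with $U^\tran U=I$ and $V^\tran V=I$. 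The identity $\|X\|=s_1(X)$ then follows directly from Theorem~\ref{opt}: the theorem defines $s_1(X)=\|Xv_1\|$ where $v_1$ is the unit vector maximizing $\|Xv\|$, which is exactly the definition of the spectral norm $\|X\|=\sup_{\|v\|=1}\|Xv\|$.

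For part (b), the key observation is that $P(X)$ spans the row space of $X$, so $P(X)$ lies in some $k$-dimensional subspace of $\R^d$ if and only if the row rank of $X$ is at most $k$. By a standard result of linear algebra, the row rank and column rank of $X$ coincide and both equal the number of nonzero singular values of $X$ (this is visible from the SVD: the rank of $\Sigma$ is the number of nonzero diagonal entries, and rank is preserved under multiplication by the invertible matrices $U$ and $V^\tran$). Since the singular values are ordered $s_1(X)\geq s_2(X)\geq\cdots\geq s_d(X)\geq 0$, having at most $k$ nonzero singular values is equivalent to $s_{k+1}(X)=0$, which gives the claimed biconditional.

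The argument is entirely routine linear algebra, and I do not anticipate any real obstacle; the only point worth stating carefully is the appeal to the existence of the SVD (which is used implicitly in both parts) and the identification of row space of $X$ with $\mathrm{span}(P(X))$ in part (b).
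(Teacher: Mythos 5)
Your argument is correct and entirely standard. The paper does not actually prove Fact~\ref{fa:sumsingular}; it states it as a well-known fact and refers the reader to a linear algebra text (``see, e.g., \cite{stewart}''). Your proof via the SVD and the trace identity $\|X\|_F^2=\mathrm{tr}(X^\tran X)=\mathrm{tr}(\Sigma^\tran\Sigma)$, together with the rank characterization for part (b), is exactly the routine verification one would expect, and it correctly identifies the one small point worth stating, namely that ``$P(X)$ lies in a $k$-dimensional subspace'' means the row space has dimension at most $k$, i.e., $\mathrm{rank}(X)\le k$.
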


\begin{fact}\label{fa:co}
For any matrix $X$, let $X^\tran X$ be the \emph{covariance} matrix of $X$. Then the right singular vectors of $X$ are also right singular vectors of $X^{\tran} X$. Also, $s_i(X^\tran X) = s_i^2(X)$.
\end{fact}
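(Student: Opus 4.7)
The plan is to reduce both assertions to the singular value decomposition of $X$. I would start by writing $X = U\Sigma V^\tran$ for an SVD of $X$, with $V\in\R^{d\times d}$ orthogonal whose columns are right singular vectors of $X$, and with $\Sigma$ diagonal carrying the singular values $s_1(X)\ge\cdots\ge s_d(X)\ge 0$. Then a direct computation gives
\[
  X^\tran X \;=\; V\,\Sigma^\tran U^\tran U\,\Sigma\, V^\tran \;=\; V\,\Sigma^2\, V^\tran,
\]
using orthonormality of the columns of $U$. Since $X^\tran X$ is symmetric and positive semidefinite and $\Sigma^2$ is diagonal with nonnegative entries already ordered in decreasing order, this last identity is itself a valid SVD of $X^\tran X$. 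Reading off the factors yields both claims simultaneously: the columns of $V$ (i.e.\ the right singular vectors of $X$) are right singular vectors of $X^\tran X$, and its $i$-th singular value is $s_i(X)^2$.

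If a more self-contained argument that does not invoke the existence of the SVD were preferred, I would work directly from the definition of singular vectors given earlier in the excerpt: a unit vector $v$ is a right singular vector of $X$ with singular value $s$ and left partner $u$ iff $Xv = su$ and $u^\tran X = s v^\tran$, equivalently $X^\tran u = sv$. Composing these two identities gives $(X^\tran X)\,v = X^\tran(su) = s^2 v$ and, symmetrically, $v^\tran (X^\tran X) = s^2 v^\tran$, so $v$ is a (two-sided) singular vector of the symmetric matrix $X^\tran X$ with singular value $s^2$. This establishes the first claim and exhibits each $s_i(X)^2$ as a singular value of $X^\tran X$.

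The only remaining bookkeeping is the ordering statement $s_i(X^\tran X) = s_i(X)^2$, which is immediate: the map $x\mapsto x^2$ is monotone on $[0,\infty)$ and the $s_i(X)$ are already sorted in decreasing order, so squaring preserves the order and no relabeling is needed. I do not anticipate any real obstacle in this proof; the only care is to align the excerpt's two-sided singular-vector convention ($Xv=su$ together with $u^\tran X = sv^\tran$) with the matrix-level SVD computation, which is why I would present both approaches in parallel for robustness.
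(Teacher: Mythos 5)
Your proof is correct. The paper itself does not prove this fact---it is stated as one of several ``basic facts'' in Appendix~\ref{apx:spectral} with a pointer to standard references---and your SVD-based argument (writing $X=U\Sigma V^\tran$, computing $X^\tran X = V\Sigma^2 V^\tran$, and reading off singular vectors and values) is precisely the standard justification one would give, with your second, definition-level derivation $(X^\tran X)v = s^2 v$ being an equivalent reformulation.
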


\begin{fact} \label{fa:difference}
For two matrices $X$ and $E$ of compatible dimensions, 
$\abs{s_j(X+E) - s_j(X)} \leq \norm{E}$.
\end{fact}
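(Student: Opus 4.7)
The plan is to derive this from the Courant-Fischer-Weyl min-max characterization of singular values. Specifically, I will use
\[
s_j(M) \;=\; \max_{V \subseteq \R^d,\ \dim V = j}\ \min_{x \in V,\ \norm{x}=1}\ \norm{Mx},
\]
which follows from the singular value decomposition or, equivalently, from the Courant-Fischer theorem applied to the positive semidefinite matrix $M^\tran M$, whose eigenvalues are $s_i^2(M)$ by Fact \ref{fa:co}.

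The key input is then a pointwise comparison. By the triangle inequality (Fact \ref{tria}) and the definition of spectral norm,
\[
\norm{Xx} \;\le\; \norm{(X+E)x} + \norm{Ex} \;\le\; \norm{(X+E)x} + \norm{E}
\]
for every unit vector $x \in \R^d$. Fix any $j$-dimensional subspace $V \subseteq \R^d$ and let $x^\star \in V$ be a unit vector attaining $\min_{x \in V,\ \norm{x}=1} \norm{(X+E)x}$. Applying the pointwise bound at $x^\star$ yields
\[
\min_{x \in V,\ \norm{x}=1} \norm{Xx} \;\le\; \norm{Xx^\star} \;\le\; \min_{x \in V,\ \norm{x}=1} \norm{(X+E)x} + \norm{E},
\]
and then taking the maximum over all $V$ of dimension $j$ on both sides gives $s_j(X) \le s_j(X+E) + \norm{E}$.

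To finish, I would simply swap the roles of $X$ and $X+E$: writing $X = (X+E) + (-E)$ with $\norm{-E} = \norm{E}$, the exact same argument produces $s_j(X+E) \le s_j(X) + \norm{E}$, and combining the two inequalities yields $\abs{s_j(X+E) - s_j(X)} \le \norm{E}$. There is no real obstacle in this proof; the only care required is invoking the min-max identity for singular values (as opposed to eigenvalues), which in turn reduces to Courant-Fischer on the symmetric matrix $M^\tran M$ via Fact \ref{fa:co}.
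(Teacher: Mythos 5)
The paper does not prove this statement; it is listed in Appendix \ref{apx:spectral} among ``basic facts'' with a pointer to a standard reference (Stewart), so there is no paper proof to compare against. Your argument is correct and is indeed the standard derivation of Weyl's inequality for singular values: you invoke the Courant--Fischer min-max characterization $s_j(M) = \max_{\dim V = j} \min_{x \in V, \norm{x}=1} \norm{Mx}$, apply the pointwise triangle-inequality bound $\norm{Xx} \le \norm{(X+E)x} + \norm{E}$ at a minimizer inside an arbitrary $j$-dimensional subspace $V$, take the maximum over $V$ to obtain $s_j(X) \le s_j(X+E) + \norm{E}$, and then symmetrize by swapping $X$ and $X+E$. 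Each step is sound, and the reduction of the singular-value min-max to Courant--Fischer on $M^\tran M$ via Fact \ref{fa:co} is the right way to justify the variational formula within the paper's toolkit.
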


\subsection{Spectral Norms of Random Matrices}

In our analysis, we will also need bounds on norms of random matrices,
which we derive using standard random matrix theory.  We state below a
bound on the spectral norm of $T\in\reals^{n\times d}$, a matrix of
iid Gaussian vectors.  We also consider its restriction $T_A$ to any
subset of rows of $T$, and bound the spectral norm in terms of the
size of the subset, $s=\card{A}$, expressed as a function
$\eta(s)=O(\sigma\sqrt{s\cdot \log n})$.

\begin{theorem}[\cite{nonuniform,RV10}]
\label{randomtheory}
Let matrix $T \in \R^{n \times d}$ have entries drawn independently from $N(0, \sigma)$. Then with probability approaching $1$ asymptotically as $n$ and $d$ increase, 
$\norm{T} \leq 3 \sigma \max\aset{\sqrt{ n}, \sqrt{d}}$.
\end{theorem}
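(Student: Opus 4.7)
The plan is the standard $\epsilon$-net argument combined with Gaussian concentration, which would go as follows. First I would rewrite the spectral norm variationally as
\[
  \|T\| \;=\; \sup_{x \in S^{d-1},\; y \in S^{n-1}} y^\tran T x,
\]
and note that for any \emph{fixed} pair of unit vectors $x,y$ the scalar $y^\tran T x = \sum_{i,j} y_i T_{ij} x_j$ is Gaussian with mean $0$ and variance $\sigma^2 \sum_{i,j} y_i^2 x_j^2 = \sigma^2$. The standard Gaussian tail bound then gives $\Pr[y^\tran T x > t] \le e^{-t^2/(2\sigma^2)}$.

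Next I would discretize the two spheres via $\tfrac14$-nets $\mathcal{N}_n \subset S^{n-1}$ and $\mathcal{N}_d \subset S^{d-1}$; a volume comparison yields $|\mathcal{N}_n|\le 9^n$ and $|\mathcal{N}_d|\le 9^d$. A standard ``net approximation'' lemma (replace $x$ by the closest $x'\in\mathcal{N}_d$, replace $y$ by the closest $y'\in\mathcal{N}_n$, and apply the triangle inequality twice) yields
\[
  \|T\| \;\le\; 2\, \max_{x'\in\mathcal{N}_d,\;y'\in\mathcal{N}_n} y'^\tran T x'.
\]
Union-bounding the Gaussian tail over the at most $9^{n+d}$ pairs in the product net and choosing $t = C\sigma\sqrt{n+d}$ for any constant $C$ with $C^2/2 > \ln 9$, I would obtain $\Pr[\|T\|>2t] \le 9^{n+d}\, e^{-C^2(n+d)/2} \to 0$ as $n+d \to \infty$. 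Since $\sqrt{n+d} \le \sqrt{2}\max\{\sqrt{n},\sqrt{d}\}$, this already delivers a bound of the form $\|T\| \le O(\sigma)\max\{\sqrt{n},\sqrt{d}\}$ with probability $\to 1$.

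To tighten the leading constant from the crude net-argument value down to $3$, I would replace the above by the sharper Gaussian-comparison route: Gordon's inequality gives $\EX\|T\| \le \sigma(\sqrt{n}+\sqrt{d})$, and the Borell--Tsirelson concentration inequality applied to the $\sigma$-Lipschitz function $T \mapsto \|T\|$ on Gaussian space yields $\Pr[\|T\| > \EX\|T\| + u] \le e^{-u^2/(2\sigma^2)}$. Setting $u = o(\sigma\sqrt{n+d})$ (say, $u=\sigma\log(n+d)$) combines these into $\|T\| \le (1+o(1))\sigma(\sqrt{n}+\sqrt{d})$ with probability approaching $1$, which in the asymptotic regime of the theorem is bounded by $3\sigma\max\{\sqrt{n},\sqrt{d}\}$ since $\sqrt{n}+\sqrt{d}\le 2\max\{\sqrt{n},\sqrt{d}\}$. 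The only place where care is needed is the sharp constant: the pure net argument loses a factor of $2$ from the net-approximation step and a factor $\sqrt{2\ln 9}$ from the union bound, so one genuinely needs either a very fine net followed by a chaining argument or the Gordon/Slepian comparison to reach the constant $3$ advertised in the statement. Given that this is a textbook estimate from random matrix theory, I would simply cite Gordon + Borell for the final constant rather than reprove it.
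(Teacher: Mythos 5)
The paper does not prove this theorem; it simply cites \cite{nonuniform,RV10} and invokes the bound as a black box. Your proposal correctly reconstructs the standard proof that appears in those references (and in Davidson--Szarek / Vershynin). All the key steps in your writeup check out: the variational form $\|T\|=\sup_{x,y}y^\tran Tx$, the computation that $y^\tran Tx$ has variance $\sigma^2$ for fixed unit $x,y$, the $9^{n+d}$ bound from $\tfrac14$-nets on both spheres, the factor-$2$ loss in the net-approximation step (from $\|T\|\le \max y'^\tran Tx' + \|T\|/2$), and the resulting $O(\sigma)\max\{\sqrt n,\sqrt d\}$ bound with the constant degraded to roughly $2\sqrt{2}\cdot\sqrt{2\ln 9}\approx 5.9$. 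You also correctly diagnose that the bare net argument cannot reach the constant $3$, and that one should instead combine the Gordon/Slepian comparison $\EX\|T\|\le\sigma(\sqrt n+\sqrt d)$ with Borell--TIS concentration of the $\sigma$-Lipschitz map $T\mapsto\|T\|$ (in the Frobenius metric), together with $\sqrt n+\sqrt d\le 2\max\{\sqrt n,\sqrt d\}$ and the fact that the deviation term is $o(\sigma\max\{\sqrt n,\sqrt d\})$ in the asymptotic regime. This is exactly the argument the cited sources give, so your proposal is correct and, since the paper provides no proof of its own, there is nothing further to compare against.
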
 

\begin{lemma}
\label{lem:subsetNorm}
With high probability, for every subset $A$ of the rows of $T$, with $|A|\ge d$,
the corresponding submatrix $T_A$ of $T$, 
has spectral norm
$\|T_A\|\le \eta(|A|)=O(\sigma\sqrt{|A|\cdot \log n})$.
\end{lemma}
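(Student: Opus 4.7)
The plan is to combine a sharp Gaussian tail bound on the spectral norm of a fixed rectangular Gaussian block with a union bound over all subsets $A$ of rows. First, fix a subset $A\subseteq [n]$ of size $s\ge d$. Then $T_A$ is an $s\times d$ matrix whose entries are iid $N(0,\sigma^2)$, so Theorem~\ref{randomtheory} already gives the ``typical'' bound $\|T_A\|\lesssim \sigma\sqrt{s}$ (using $s\ge d$). To make this survive a union bound, I would invoke a Gaussian concentration inequality (Davidson--Szarek, or equivalently the standard consequence of Gaussian Lipschitz concentration since $G\mapsto s_1(G)$ is $1$-Lipschitz in the Frobenius norm): for every $t\ge 0$,
\[
\Pr\Bigl[\|T_A\|>\sigma\bigl(\sqrt{s}+\sqrt{d}+t\bigr)\Bigr]\le 2e^{-t^2/2}.
\]

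Next I would pick $t=C\sqrt{s\log n}$ for a sufficiently large absolute constant $C$. The failure probability for a fixed $A$ becomes at most $2e^{-C^2 s\log n/2}=2n^{-C^2 s/2}$. For fixed $s$, the number of subsets of that size is at most $\binom{n}{s}\le n^s$, so a union bound over all $s$-subsets yields failure probability at most $2n^{s-C^2 s/2}\le 2n^{-C^2 s/4}$ once $C$ is large enough. A further union bound over the at most $n$ values of $s\in\{d,d+1,\dots,n\}$ still leaves total failure probability at most $n^{-\Omega(C^2 d)}$, which is $\le n^{-c}$ for any desired constant $c$.

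On the event where all these bounds hold simultaneously, every subset $A$ of size $s\ge d$ satisfies
\[
\|T_A\|\le \sigma\bigl(\sqrt{s}+\sqrt{d}+C\sqrt{s\log n}\bigr)
\le O\bigl(\sigma\sqrt{s\log n}\bigr)=\eta(|A|),
\]
using $s\ge d$ in the first inequality. Thus the desired uniform bound holds \whp

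The only delicate point is making sure the Gaussian tail decays fast enough to beat the $\binom{n}{s}\le n^s$ factor coming from the union bound; this is exactly why the ``$+t$'' term in the Davidson--Szarek bound is essential and why the simple $3\sigma\max(\sqrt{s},\sqrt{d})$ statement of Theorem~\ref{randomtheory} does not suffice by itself. Once we have the sub-Gaussian tail $e^{-t^2/2}$ for the deviation above $\sigma(\sqrt{s}+\sqrt{d})$, choosing $t=\Theta(\sqrt{s\log n})$ makes the entropy-versus-probability trade-off go through cleanly, giving the claimed $O(\sigma\sqrt{|A|\log n})$ bound for every subset at once.
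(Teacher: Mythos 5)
Your proof is correct and follows essentially the same route as the paper's: fix a subset $A$ of size $s$, apply a sub-Gaussian tail bound on the spectral norm of a fixed $s\times d$ Gaussian matrix, then union-bound over the $\binom{n}{s}$ subsets and over all sizes $s\ge d$. The only cosmetic difference is that you invoke Davidson--Szarek / Gaussian Lipschitz concentration where the paper cites \cite[Prop.~2.4]{RV10}; both deliver the same $e^{-\Omega(s\log n)}$ tail needed to beat the $n^{s}$ entropy term, and your observation that the bare statement of Theorem~\ref{randomtheory} (no explicit tail) cannot support the union bound on its own is a correct and useful clarification.
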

\begin{proof}
Fix some $A$ of size $s$. It is known from random matrix theory that 
$\|T_A\|\le \sigma\sqrt{Cs\log n}$ with probability at least
$1-e^{-\Omega(Cs\log n)}=1-n^{-\Omega(Cs)}$ \cite[Proposition 2.4]{RV10}. 
For a large enough constant $C>1$, since there are at most ${n\choose s}\le
(n/s)^{O(s)}$ such subsets $A$, by a union bound, 
with high probability none of the sets will fail. 
Another union bound over all sizes $s$ completes the claim.
\end{proof}

\end{document}